\theoremstyle{thmstyleone}%
\newtheorem{theorem}{Theorem}[section]
\newtheorem{proposition}[theorem]{Proposition}
\newtheorem{lemma}[theorem]{Lemma}
\theoremstyle{definition}
\newtheorem{definition}[theorem]{Definition}
\newtheorem{assumption}[theorem]{Assumption}
\theoremstyle{remark}
\newtheorem{remark}[theorem]{Remark}
\begin{document}

\title[Dedifferentiation stabilizes stem cell lineages: \\ From CTMC to diffusion theory and thresholds]{Dedifferentiation stabilizes stem cell lineages: \\ From CTMC to diffusion theory and thresholds}

\author[1]{\fnm{Jiguang} \sur{Yu}}\email{jyu678@bu.edu}
\equalcont{These authors contributed equally to this work as co-first authors.}

\author*[2]{\fnm{Louis Shuo} \sur{Wang}}\email{swang116@vols.utk.edu}
\equalcont{These authors contributed equally to this work as co-first authors.}

\author[3]{\fnm{Ye} \sur{Liang}}\email{zcahiad@ucl.ac.uk}
\equalcont{These authors contributed equally to this work as co-first authors.}

\affil[1]{\orgdiv{College of Engineering}, \orgname{Boston University}, \orgaddress{\city{Boston}, \postcode{02215}, \state{MA}, \country{United States}}}

\affil*[2]{\orgdiv{Department of Mathematics}, \orgname{University of Tennessee}, \orgaddress{\city{Knoxville}, \postcode{37996}, \state{TN}, \country{United States}}}

\affil[3]{\orgdiv{Department of Mathematics}, \orgname{University College London}, \orgaddress{\city{London}, \postcode{WC1E 6BT}, \country{UK}}}

\abstract{
We study stem--terminally differentiated (TD) lineages in small niches where demographic noise from discrete division and death events is non-negligible. Starting from a mechanistic five-channel, density-dependent CTMC (symmetric self-renewal, symmetric differentiation, asymmetric division, dedifferentiation, TD death), we derive its mean-field limit and a functional CLT, obtaining a chemical Langevin diffusion whose explicit state-dependent covariance exactly matches the CTMC’s aggregated channel-wise infinitesimal covariances.

Within this diffusion approximation we remove the dedifferentiation flux and obtain a sharp dichotomy: in subcritical regimes the stem coordinate becomes extinct asymptotically almost surely, whereas in supercritical regimes polynomial moments diverge exponentially. This identifies, at the diffusion level, a structural failure mode of strictly hierarchical lineages under demographic fluctuations and clarifies how a cyclic return flux can rescue homeostasis.

For interpretation we also derive an exact totals ODE backbone from a damage-structured transport model and obtain two steady-state constraints (ratio and equalization laws) linking compartment ratios to turnover and balancing dedifferentiation against fate bias. Numerical experiments corroborate the $\Omega^{-1/2}$ fluctuation scaling, illustrate the pathology, and contrast theorem-regime global convergence with threshold (Allee-type) behaviour outside the theorem hypotheses.
}

\keywords{stem cell lineage, dedifferentiation, density-dependent Markov chain, demographic noise}

%%\pacs[JEL Classification]{D8, H51}

%%\pacs[MSC Classification]{35A01, 65L10, 65L12, 65L20, 65L70}

\maketitle

\section{Introduction}

\begin{figure}[htbp]
\centering
\begin{tikzpicture}[
  font=\small,
  >=Latex,
  node distance=7mm and 10mm,
  block/.style={draw, rounded corners=2mm, thick, align=left, inner sep=5pt, fill=gray!6},
  title/.style={font=\bfseries},
  pill/.style={draw, rounded corners=8pt, thick, inner sep=3pt, fill=white},
  arrow/.style={->, thick},
  dashbox/.style={draw, rounded corners=2mm, thick, dashed, inner sep=5pt},
  note/.style={font=\scriptsize, align=left},
  tiny/.style={font=\scriptsize, align=left}
]

\node[block, minimum width=0.47\linewidth] (ctmc) {
  {\bfseries Microscopic CTMC}\\[-0.75pt]
  $N^\Omega=(P^\Omega,W^\Omega)$,\ $X^\Omega=\Omega^{-1}N^\Omega$\\
  Feedback via $p_i(w),\lambda_P(w),\lambda_R(p)$,\ TD death $\delta_0$\\
  Five event channels with stoichiometry $\nu_e$
};

\node[pill, anchor=north east] at ([xshift=-3pt,yshift=-3pt]ctmc.north east)
{$P \rightleftarrows W$};

\node[block, right=of ctmc, minimum width=0.48\linewidth] (scaling) {
  {\bfseries Scaling limits}\\[-2pt]
  LLN: $\dot x=b(x)$\\
  FCLT: Gaussian fluctuations\\
  $A(x)=\sum_e \alpha_e\nu_e\nu_e^\top$
};

\draw[arrow] (ctmc) -- (scaling);

\node[block, below=of scaling, minimum width=0.48\linewidth] (cle) {
  {\bfseries Diffusion approximation}\\[-0.75pt]
  $dX=b(X)\,dt+\Omega^{-1/2}G(X)\,dB$\\
  Channel-wise inherited noise
};

\draw[arrow] (scaling) -- (cle);

\node[block, below=of cle, minimum width=0.48\linewidth] (analysis) {
  {\bfseries Stochastic analysis}\\[-2pt]
  Positivity; non-explosion; moment bounds
};

\draw[arrow] (cle) -- (analysis);

\node[block, below=of analysis, minimum width=0.48\linewidth] (path) {
  {\bfseries No-dedifferentiation pathology}\\[-2pt]
  $\lambda_R\equiv0$ leads to extinction or moment blow-up
};

\draw[arrow] (analysis) -- (path);

\node[block, below=of ctmc, minimum width=0.47\linewidth] (backbone) {
  {\bfseries Deterministic backbone}\\[-2pt]
  Structured PDE $\Rightarrow$ totals ODE
};

\draw[arrow] (ctmc) -- (backbone);

\node[block, below=of backbone, minimum width=0.47\linewidth] (laws) {
  {\bfseries Steady-state laws}\\[-2pt]
  Ratio and equalization constraints
};

\draw[arrow] (backbone) -- (laws);

\node[dashbox, fit=(ctmc)(scaling)(path)(laws), inner sep=6pt] {};

\end{tikzpicture}

\caption{\textbf{Multiscale roadmap.}
Event-level lineage dynamics induce deterministic and stochastic limits whose structure explains both homeostatic constraints and diffusion-level pathologies.}
\label{fig:mechanism_overview_full}
\end{figure}
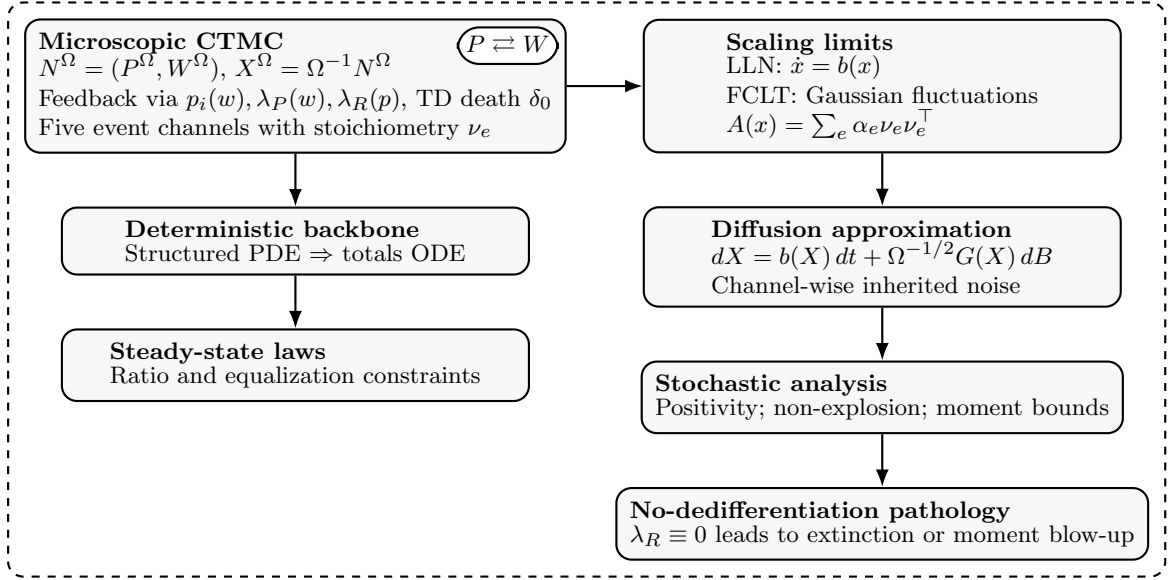

The maintenance of adult tissue homeostasis requires a balance between stem cell self-renewal, differentiation, and cell loss. 
Stem cells sustain this balance by continuously producing committed progenitors and terminally differentiated (TD) cells, forming a hierarchical differentiation structure within lineages \citep{siebert_stem_2019,bai_balance_2021,ashcroft_constrained_2022,halim_recent_2020,cheng_new_2020}. Lineage architectures also incorporate feedback mechanisms that tune cell production and help tissues respond robustly to environmental perturbations \citep{lander_cell_2009,dray_dynamic_2021,navarro_feedback_2024,hannezo_mechanochemical_2019}.
Classical formulations of the stem cell hypothesis posit a rigid, unidirectional hierarchy in which stem cells generate committed progenitors that ultimately produce TD cells \citep{till_direct_1961,potten_stem_1990,dua_stem_2003,visvader_tissue-specific_2016}. However, a growing body of experimental evidence indicates that lineage commitment can be reversible: under injury or stress, committed progenitors may regain stem-like properties through dedifferentiation \citep{tata_dedifferentiation_2013,tetteh_replacement_2016,blanpain_plasticity_2014}.
This plasticity not only supports regeneration but can also underlie oncogenic conversion and resistance to therapy \citep{haddadin_stem_2025,malta_machine_2018,shibue_emt_2017,masciale_molecular_2024,li_how_2020}.
In cancer, related plasticity mechanisms are frequently associated with epithelial-to-mesenchymal transition (EMT), in which non-stem cancer cells acquire stem-like traits \citep{jolly_implications_2015,chaffer_perspective_2011}. These observations motivate mathematical descriptions in which lineage topology is not acyclic but instead contains a dedifferentiation flux from differentiated to stem-like states.

A range of models has been proposed to interrogate the consequences of dedifferentiation in tissue maintenance and tumor progression \citep{jilkine_mathematical_2019}. Existing approaches include multi-compartment ordinary differential equation (ODE) models for cancer stem/non-stem dynamics and treatment response \citep{fischer_tumoural_2023}, stochastic hierarchical models assessing the impact of noise on cell number fluctuations \citep{wang_effect_2022}, hybrid stochastic--deterministic models for mutation acquisition and initiation \citep{jilkine_effect_2014}, and feedback-regulated evolutionary models for fixation and clonal dynamics \citep{wodarz_effect_2018,mahdipour-shirayeh_phenotypic_2017}. Additional work has addressed invasion and transient overshoot phenomena in multi-phenotypic populations \citep{zhou_invasion_2019,chen_overshoot_2016,zhou_multi-phenotypic_2014,zhou_population_2013} and the role of dedifferentiation-related factors in therapy resistance \citep{rhodes_mathematical_2016}. Despite these advances, a central mathematical issue remains: introducing a dedifferentiation flux converts the lineage graph from a directed acyclic structure into a cyclic system, thereby changing the stability architecture and the role of feedback control.

Deterministic models have been instrumental in classifying feedback mechanisms that stabilize cyclic lineages and prevent unbounded growth or extinction \citep{lander_cell_2009,marciniak-czochra_modeling_2009}. In hematopoiesis and related settings, delay differential equations and other structured formulations have been used to represent regulatory control of cell-cycle entry and fate outcomes \citep{mackey_unified_1978,adimy_global_2003,lei_multistability_2011}, and subsequent analyses have compared feedback topologies that act on division rates or self-renewal probabilities \citep{marciniak-czochra_modeling_2009,nakata_stability_2012}. To model continuous maturation variables beyond compartment counts, McKendrick--von~Foerster type partial differential equations (PDEs) provide a rigorous transport framework for structured densities (e.g., age, maturity, or accumulated damage), including renewal-type boundary fluxes generated by division \citep{foutel-rodier_individual-based_2022,lorenzi_phenotype_2025,argasinski_towards_2021,meleard_trait_2009}.

While deterministic descriptions capture average behavior, many stem cell niches operate at small effective population sizes, where demographic noise induced by discrete division and death events is non-negligible \citep{zhdanov_simulation_2008}. Such stochasticity underlies neutral drift phenomena observed in epithelial tissues \citep{simons_strategies_2011}. A natural microscopic description is therefore a master equation or continuous-time Markov chain (CTMC) for compartment counts \citep{gillespie_exact_1977,gardiner_handbook_2002}. However, direct analysis at the master-equation level is often intractable, motivating diffusion approximations that retain mechanistic interpretability while enabling rigorous analysis.

In this paper we develop a multiscale framework that links microscopic lineage events to macroscopic tissue dynamics, with the stochastic, event-based description as the primary starting point. Our main contributions are as follows.

\noindent\textbf{(i) Mechanism-consistent diffusion approximation from a density-dependent CTMC.}
We formulate an individual-based CTMC with five elementary event channels: symmetric self-renewal, symmetric differentiation, asymmetric division, dedifferentiation, and TD death. The event intensities are density-dependent and incorporate regulatory feedback through locally Lipschitz rate and fate maps. Under the standard system-size scaling, we establish the mean-field/law-of-large-numbers (LLN) limit and derive a diffusion approximation via a functional central limit theorem (FCLT) in the sense of Ethier and Kurtz \citep{ethier_markov_1986} (cf.\ the system-size expansion of van Kampen \citep{van_kampen_stochastic_2007}). The resulting chemical Langevin equation (CLE) is a system of It\^o stochastic differential equations (SDEs) with explicitly state-dependent diffusion coefficients, and its covariance matrix reproduces the exact aggregated channel-wise infinitesimal covariances, thereby avoiding ad hoc noise prescriptions.

\noindent\textbf{(ii) Well-posedness and moment bounds for the CLE.}
For the diffusion approximation we prove well-posedness up to boundary exit on the positive orthant, establish positivity preservation under the model structure, and obtain polynomial moment bounds using Lyapunov-type estimates under natural boundedness and regularity assumptions on the feedback maps. These results provide a mathematically controlled stochastic framework for analyzing finite-size lineage dynamics with mechanistically derived noise.

\noindent\textbf{(iii) No-dedifferentiation pathology in the diffusion approximation.}
We isolate the structural role of lineage plasticity by removing the dedifferentiation flux (setting the dedifferentiation rate identically zero) in the diffusion model. In this no-dedifferentiation regime we prove a dichotomy: subcritical parameter regimes yield almost sure asymptotic extinction of the stem compartment, whereas supercritical regimes lead to exponential divergence of polynomial moments. This pathology clarifies, at the level of the diffusion approximation, how the dedifferentiation flux can act as a rescue mechanism against demographic fluctuations in finite niches.

\paragraph{Scope of the stochastic pathology}
Our no-dedifferentiation pathology is established at the level of the diffusion approximation (CLE) derived from the density-dependent CTMC. Accordingly, the extinction and moment-divergence statements should be interpreted as diffusion-level structural phenomena capturing demographic fluctuations for intermediate-to-large effective system sizes; boundary-layer and discreteness effects of the underlying CTMC near very small counts are discussed in Section~\ref{sec:discussion}.

\paragraph{Modelling layers and notation (totals versus concentrations)}
We deliberately separate two modelling layers. In Sections~\ref{sec:ctmc}--\ref{sec:noR}, we work with a density-dependent CTMC and its diffusion approximation, where regulatory feedback is expressed in terms of concentrations $x=(p,w)=\Omega^{-1}(P,W)$; this concentration dependence is required to obtain a non-degenerate system-size scaling as $\Omega\to\infty$. In Sections~\ref{sec:backbone}--\ref{sec:det_dynamics}, we introduce a deterministic backbone formulated in terms of compartment totals $(P,W)$ to derive interpretable steady-state constraints (ratio and equalization laws) and to organize deterministic phase portraits. To avoid ambiguity, we denote total-based feedback maps by $p_i^{\mathrm{tot}}(W)$ and $\lambda_{P,R}^{\mathrm{tot}}(\cdot)$, reserving $p_i(w)$ and $\lambda_{P,R}(\cdot)$ for the concentration-feedback maps used in the stochastic layer.

To connect the stochastic description to a deterministic backbone and to obtain interpretable steady-state constraints, we also formulate a damage-structured McKendrick--von~Foerster transport model for stem and TD densities and derive an exact reduction to a two-compartment ODE for total masses. At positive equilibria, this reduction yields two algebraic identities (an equalization law and a ratio law) linking fate bias and turnover parameters to compartment ratios, providing a mechanistic interpretation of the stochastic phase structure. \Cref{fig:mechanism_overview_full} depicts the multiscale roadmap: from the CTMC level, to CLE formulation via LLN limit and the FCLT, to a deterministic damage-structured PDE model, which admits exact reduction to a two-compartment ODE.

The remainder of the paper is organized as follows. \Cref{sec:ctmc} formulates the microscopic lineage model as a density-dependent CTMC, defining the five elementary event channels and the system-size scaling. \Cref{sec:diffusion} establishes the mean-field limit and derives the mechanism-consistent chemical Langevin diffusion approximation via a FCLT. \Cref{sec:stochastic_analysis} provides the stochastic analysis of this diffusion, proving well-posedness, non-explosion, and polynomial moment bounds. \Cref{sec:noR} isolates the structural role of plasticity, proving the no-dedifferentiation pathology: a dichotomy between asymptotic extinction and moment divergence when the dedifferentiation flux is removed. To interpret these results, \Cref{sec:backbone} introduces a damage-structured transport model, deriving an exact reduction to a deterministic ODE backbone and establishing algebraic steady-state laws (ratio and equalization identities). \Cref{sec:det_dynamics} analyzes the phase structure of this backbone, contrasting regimes of global convergence with those exhibiting bistability and Allee-type thresholds. \Cref{sec:numerics} presents numerical experiments validating the fluctuation scaling, the pathology, and the deterministic phase portraits. We conclude in \Cref{sec:discussion} with biological implications and directions for further work.

\section{Microscopic lineage model: a density-dependent CTMC}
\label{sec:ctmc}

Deterministic compartment models provide useful mean-field descriptions of lineage turnover, but they do not encode the discreteness of division and death events that generates demographic noise in small niches. To obtain a mechanism-consistent stochastic description, we formulate an individual-based CTMC for the stem and TD compartment counts. The model is specified by a finite set of elementary event channels together with density-dependent intensities. This structure places the process in the classical framework of density-dependent population processes and provides a direct route to both the mean-field limit and the diffusion approximation developed in the next section.

\subsection{State variables and system-size scaling}
\label{subsec:ctmc_scaling}

Fix a system-size parameter $\Omega\in\mathbb{N}$, interpreted as an effective tissue volume or scaling parameter. Let
\[
N^\Omega(t)\coloneqq\big(P^\Omega(t),W^\Omega(t)\big)\in\mathbb{Z}_+^2
\]
denote the integer-valued count process, where $P^\Omega(t)$ is the number of stem cells and $W^\Omega(t)$ is the number of TD cells at time $t$. We work with the associated concentration process
\[
X^\Omega(t)\coloneqq \frac{1}{\Omega}N^\Omega(t)=\big(p^\Omega(t),w^\Omega(t)\big)\in \Omega^{-1}\mathbb{Z}_+^2.
\]
Throughout \Cref{sec:ctmc,sec:diffusion,sec:stochastic_analysis}, regulatory feedback acts on concentrations. Accordingly, we fix locally Lipschitz maps
\[
p_1,p_2:[0,\infty)\to[0,1],\qquad 
\lambda_P:[0,\infty)\to(0,\infty),\qquad 
\lambda_R:[0,\infty)\to[0,\infty),
\]
and set $p_3(w)\coloneqq 1-p_1(w)-p_2(w)\ge 0$ for all $w\ge 0$. Here $p_1(w)$ and $p_2(w)$ are probabilities of symmetric self-renewal and symmetric differentiation regulated by TD concentration $w$, $p_3(w)$ is the probability of asymmetric division, $\lambda_P(w)$ is the stem division rate regulated by TD concentration, and $\lambda_R(p)$ is the dedifferentiation rate regulated by stem concentration. The constant $\delta_0>0$ denotes the TD death rate.

\subsection{Elementary event channels and stoichiometry}
\label{subsec:ctmc_events}

We model lineage turnover through five elementary event channels. Writing reactions in terms of the count variables $(P,W)$ and listing the corresponding stoichiometric increments $\nu_e\in\mathbb{Z}^2$, we have
\begin{align*}
\mathrm{(SR)}\quad &P \longrightarrow P+1,
&&\nu_{\mathrm{SR}}=(+1,0),\\[4pt]
\mathrm{(SD)}\quad &P \longrightarrow P-1,\;\; W \longrightarrow W+2,
&&\nu_{\mathrm{SD}}=(-1,+2),\\[4pt]
\mathrm{(ASD)}\quad &W \longrightarrow W+1,
&&\nu_{\mathrm{ASD}}=(0,+1),\\[4pt]
\mathrm{(R)}\quad &P \longrightarrow P+1,\;\; W \longrightarrow W-1,
&&\nu_{\mathrm{R}}=(+1,-1),\\[4pt]
\mathrm{(D)}\quad &W \longrightarrow W-1,
&&\nu_{\mathrm{D}}=(0,-1).
\end{align*}
These channels represent, respectively, symmetric self-renewal (net gain of one stem cell), symmetric differentiation (a stem cell is replaced by two TD cells), asymmetric division (net gain of one TD cell), dedifferentiation (conversion of one TD cell into one stem cell), and TD death (loss of one TD cell).

\subsection{Propensity densities and density-dependent intensities}
\label{subsec:ctmc_intensities}

For $x=(p,w)\in\mathbb{R}_+^2$, define the propensity densities
\begin{equation}\label{eq:propensity_densities_ctmc}
\begin{aligned}
\alpha_{\mathrm{SR}}(p,w)  &\coloneqq p_1(w)\,\lambda_P(w)\,p,\\
\alpha_{\mathrm{SD}}(p,w)  &\coloneqq p_2(w)\,\lambda_P(w)\,p,\\
\alpha_{\mathrm{ASD}}(p,w) &\coloneqq p_3(w)\,\lambda_P(w)\,p,\\
\alpha_{\mathrm{R}}(p,w)   &\coloneqq \lambda_R(p)\,w,\\
\alpha_{\mathrm{D}}(p,w)   &\coloneqq \delta_0\,w.
\end{aligned}
\end{equation}
The CTMC is specified by the density-dependent intensities
\begin{equation}\label{eq:density_dependent_intensities_ctmc}
a_e^\Omega(N)\coloneqq \Omega\,\alpha_e\!\left(\frac{N}{\Omega}\right),
\qquad N\in\mathbb{Z}_+^2,\quad e\in\{\mathrm{SR},\mathrm{SD},\mathrm{ASD},\mathrm{R},\mathrm{D}\}.
\end{equation}
Equivalently, conditional on $N^\Omega(t)=N$, event $e$ occurs at rate $a_e^\Omega(N)$ and the state jumps to $N+\nu_e$. By construction, the intensities depend on the counts only through the concentrations $N/\Omega$, placing the model in the standard density-dependent class.

We record a basic regularity property that will be used repeatedly. Since the feedback maps are locally Lipschitz, each $\alpha_e$ is locally Lipschitz on $\mathbb{R}_+^2$ and locally bounded. In particular, for every compact set $K\subset\mathbb{R}_+^2$,
\begin{equation}\label{eq:local_growth_ctmc}
\sup_{x\in K}\sum_{e}\alpha_e(x)\,|\nu_e|^2<\infty.
\end{equation}
This local growth bound is the standard hypothesis ensuring that the mean-field and diffusion limits apply to the process.

\subsection{Poisson time-change representation}
\label{subsec:ctmc_time_change}

Let $\{Y_e\}_e$ be independent unit-rate Poisson processes. The count process admits the classical time-change representation
\begin{equation}\label{eq:time_change_ctmc}
N^\Omega(t)=N^\Omega(0)+\sum_{e}\nu_e\,
Y_e\!\left(\int_0^t a_e^\Omega\!\big(N^\Omega(s)\big)\,ds\right).
\end{equation}
Dividing by $\Omega$ yields the corresponding representation for the concentration process
\begin{equation}\label{eq:scaled_time_change_ctmc}
X^\Omega(t)=X^\Omega(0)+\sum_{e}\nu_e\,\Omega^{-1}\,
Y_e\!\left(\Omega\int_0^t \alpha_e\!\big(X^\Omega(s)\big)\,ds\right).
\end{equation}
This form makes transparent both the LLN limit (via $\Omega^{-1}Y_e(\Omega u)\approx u$) and the central-limit scaling (via fluctuations of order $\Omega^{-1/2}$). In \Cref{sec:diffusion} we exploit \Cref{eq:scaled_time_change_ctmc} to derive the mean-field drift and the diffusion approximation.

\subsection{Drift and infinitesimal covariance}
\label{subsec:ctmc_drift_cov}

Define the drift field $b:\mathbb{R}_+^2\to\mathbb{R}^2$ by
\begin{equation}\label{eq:drift_ctmc}
b(x)\coloneqq \sum_{e}\alpha_e(x)\,\nu_e,\qquad x=(p,w).
\end{equation}
A direct substitution of \eqref{eq:propensity_densities_ctmc} and the stoichiometries yields
\begin{equation}\label{eq:drift_explicit_ctmc}
b(p,w)=
\begin{pmatrix}
  (p_1(w)-p_2(w))\lambda_P(w)\,p + \lambda_R(p)\,w \\[4pt]
  (1-p_1(w)+p_2(w))\lambda_P(w)\,p - (\delta_0+\lambda_R(p))\,w
\end{pmatrix}.
\end{equation}
The associated (instantaneous) covariance matrix is
\begin{equation}\label{eq:covariance_ctmc}
A(x)\coloneqq \sum_{e}\alpha_e(x)\,\nu_e\nu_e^\top\in\mathbb{R}^{2\times 2}.
\end{equation}
In our model, the entries of $A$ can be written explicitly as
\begin{equation}\label{eq:covariance_entries_ctmc}
\begin{aligned}
A_{11}(p,w) &= (p_1(w)+p_2(w))\lambda_P(w)\,p + \lambda_R(p)\,w,\\
A_{22}(p,w) &= (1-p_1(w)+3p_2(w))\lambda_P(w)\,p + (\lambda_R(p)+\delta_0)\,w,\\
A_{12}(p,w) &= A_{21}(p,w)= -2p_2(w)\lambda_P(w)\,p - \lambda_R(p)\,w.
\end{aligned}
\end{equation}
These expressions identify how each event channel contributes to coordinate-wise variances and cross-covariances.

For later use, we also fix a convenient channel-wise factorization. Let $B(t)=(B^1(t),\dots,B^5(t))^\top$ be a standard Brownian motion in $\mathbb{R}^5$. Defining
\begin{equation}\label{eq:G_matrix_ctmc}
G(p,w)=
\begin{pmatrix}
\sqrt{\alpha_{\mathrm{SR}}(p,w)} & -\sqrt{\alpha_{\mathrm{SD}}(p,w)} & 0 & \sqrt{\alpha_{\mathrm{R}}(p,w)} & 0\\[4pt]
0 & 2\sqrt{\alpha_{\mathrm{SD}}(p,w)} & \sqrt{\alpha_{\mathrm{ASD}}(p,w)} & -\sqrt{\alpha_{\mathrm{R}}(p,w)} & -\sqrt{\alpha_{\mathrm{D}}(p,w)}
\end{pmatrix},
\end{equation}
we have $A(p,w)=G(p,w)G(p,w)^\top$, so $A$ is positive semidefinite by construction. This explicit factorization will be used in the diffusion approximation and in the chemical Langevin formulation.

The five-channel specification above is the minimal closure that simultaneously encodes (i) fate heterogeneity at division (symmetric self-renewal, symmetric differentiation, and asymmetric division), (ii) plasticity through a dedifferentiation flux from TD to stem cells (dedifferentiation), and (iii) turnover through TD death. The density-dependent intensities implement feedback in a manner compatible with system-size scaling. In particular, the drift \eqref{eq:drift_explicit_ctmc} is the mean-field backbone of the CTMC, whereas the covariance \eqref{eq:covariance_entries_ctmc} quantifies demographic noise arising from the same elementary channels. This mechanism-consistent pairing of drift and covariance is the key feature that enables the rigorous diffusion approximation developed in the next section.

\section{Mean-field limit and diffusion approximation}
\label{sec:diffusion}

This section develops the first rigorous link between the microscopic event model (Section~\ref{sec:ctmc}) and a tractable macroscopic stochastic description. Under the standard system-size scaling for density-dependent population processes, the concentration CTMC admits (i) a LLN limit identifying the deterministic mean-field ODE and (ii) a FCLT describing Gaussian fluctuations of order $\Omega^{-1/2}$. The latter yields a mechanism-consistent diffusion approximation (CLE) whose covariance structure matches exactly the aggregated channel-wise infinitesimal covariances derived from the CTMC. Bridging microscale stochastic event dynamics to macroscopic continuum descriptions has also been pursued in hybrid PDE--agent-based oncology models, where Gillespie-type updates and mean-field closures yield tractable continuum limits \citep{wang_analysis_2025}; here we carry out this bridge for a density-dependent CTMC via the LLN limit/FCLT, obtaining a chemical Langevin diffusion with an explicit covariance inherited from the elementary event channels.

\subsection{Mean-field limit (law of large numbers)}
\label{subsec:lln}

Fix a sequence of initial conditions $X^\Omega(0)\in\Omega^{-1}\mathbb{Z}_+^2$ such that
\[
X^\Omega(0)\to x_0\in\mathbb{R}_+^2
\qquad \text{as }\Omega\to\infty.
\]
Recall the Poisson time-change representation \eqref{eq:scaled_time_change_ctmc}:
\[
X^\Omega(t)=X^\Omega(0)+\sum_{e}\nu_e\,\Omega^{-1}\,
Y_e\!\left(\Omega\int_0^t \alpha_e\!\big(X^\Omega(s)\big)\,ds\right),
\]
where $\{Y_e\}_e$ are independent unit-rate Poisson processes and the propensity densities $\alpha_e$ are given in \eqref{eq:propensity_densities_ctmc}. Writing $\tilde Y_e(u)\coloneqq Y_e(u)-u$ and using the drift $b(x)=\sum_e \alpha_e(x)\nu_e$ from \eqref{eq:drift_ctmc}, we may rewrite
\begin{equation}\label{eq:lln_decomposition}
X^\Omega(t)=X^\Omega(0)+M^\Omega(t)+\int_0^t b\big(X^\Omega(s)\big)\,ds,
\end{equation}
with the martingale term
\begin{equation}\label{eq:martingale_term_lln}
M^\Omega(t)\coloneqq \sum_e \nu_e\,\Omega^{-1}\,
\tilde Y_e\!\left(\Omega\int_0^t \alpha_e\!\big(X^\Omega(s)\big)\,ds\right).
\end{equation}
The drift field $b$ is locally Lipschitz on $\mathbb{R}_+^2$ under the standing local Lipschitz assumptions on the feedback maps (\Cref{sec:ctmc}).

We record a standard LLN estimate for Poisson processes; a proof can be given using exponential moment bounds and a maximal inequality, and is omitted here for brevity (see \Cref{app:poisson_lln}).

\begin{lemma}[Uniform LLN for Poisson processes]\label{lem:poisson_lln}
Let $Y$ be a unit-rate Poisson process. Then for each $u_0>0$,
\[
\lim_{\Omega\to \infty}\ \sup_{0\le u\le u_0}\left|\Omega^{-1}Y(\Omega u)-u\right|=0,
\qquad \text{a.s.}
\]
Equivalently, with $\tilde Y(u)=Y(u)-u$,
\[
\lim_{\Omega\to \infty}\ \sup_{0\le u\le u_0}\left|\Omega^{-1}\tilde Y(\Omega u)\right|=0,
\qquad \text{a.s.}
\]
\end{lemma}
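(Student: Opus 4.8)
The plan is to reduce the functional statement to a quantitative exponential tail bound for the centered Poisson martingale $\tilde Y(t)=Y(t)-t$ via Doob's maximal inequality, and then apply Borel--Cantelli along $\Omega\in\mathbb N$. Note first that the two displayed assertions are literally equivalent, since $\Omega^{-1}\tilde Y(\Omega u)=\Omega^{-1}Y(\Omega u)-u$; so it suffices to prove the second one. Substituting $s=\Omega u$ also lets us rewrite the quantity of interest as $\sup_{0\le u\le u_0}|\Omega^{-1}\tilde Y(\Omega u)|=\Omega^{-1}\sup_{0\le s\le \Omega u_0}|\tilde Y(s)|$, so the whole problem becomes a one-dimensional tail estimate for the running maximum of $|\tilde Y|$ on $[0,\Omega u_0]$.

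For the exponential/maximal bound, fix $\delta>0$ and $u_0>0$ and set $T=\Omega u_0$. For $\theta>0$, the process $s\mapsto\exp(\theta\tilde Y(s))$ is a nonnegative submartingale with $\mathbb E[\exp(\theta\tilde Y(s))]=\exp\big(s(e^{\theta}-1-\theta)\big)$, using the Laplace transform of the Poisson law. Doob's maximal inequality then gives
\[
\mathbb P\Big(\sup_{0\le s\le T}\tilde Y(s)\ge \Omega\delta\Big)\le e^{-\theta\Omega\delta}\,\mathbb E\big[\exp(\theta\tilde Y(T))\big]=\exp\!\Big(\Omega\big(-\theta\delta+u_0(e^{\theta}-1-\theta)\big)\Big).
\]
Since $e^{\theta}-1-\theta=O(\theta^{2})$ as $\theta\to0$, a fixed small $\theta=\theta(\delta,u_0)$ makes the bracket a strictly negative constant $-c(\delta,u_0)<0$, so the right-hand side is at most $e^{-c\Omega}$. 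The symmetric estimate for the lower deviation, $\mathbb P(\inf_{0\le s\le T}\tilde Y(s)\le-\Omega\delta)$, follows identically from the submartingale $\exp(-\theta\tilde Y(s))$ and $\mathbb E[\exp(-\theta\tilde Y(s))]=\exp\big(s(e^{-\theta}-1+\theta)\big)$, again yielding a bound $e^{-c'\Omega}$. Hence $\mathbb P\big(\sup_{0\le s\le\Omega u_0}|\tilde Y(s)|\ge\Omega\delta\big)\le 2e^{-c''\Omega}$.

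To conclude, the last bound is summable over $\Omega\in\mathbb N$, so Borel--Cantelli shows that a.s.\ only finitely many $\Omega$ satisfy $\sup_{0\le s\le\Omega u_0}|\tilde Y(s)|\ge\Omega\delta$; equivalently, a.s.\ $\limsup_{\Omega\to\infty}\Omega^{-1}\sup_{0\le s\le\Omega u_0}|\tilde Y(s)|\le\delta$. Applying this with $\delta=1/j$ for $j\in\mathbb N$ and intersecting the countably many full-probability events gives $\Omega^{-1}\sup_{0\le s\le\Omega u_0}|\tilde Y(s)|\to0$ a.s., and rewriting via $s=\Omega u$ yields $\sup_{0\le u\le u_0}|\Omega^{-1}\tilde Y(\Omega u)|\to0$ a.s., which is the assertion.

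I do not expect a genuine obstacle here; the only points needing care are (a) optimizing (or just fixing a small) $\theta$ so that the exponent is negative and linear in $\Omega$, which uses nothing beyond the second-order expansion of $e^{\pm\theta}-1\mp\theta$, and (b) the quantifier order in the final step, namely fixing $\delta$ first, extracting a full-probability event, and only then intersecting over $\delta=1/j$. If one preferred to avoid exponential moments entirely, an equally elementary fallback is the Pólya--Dini monotone-limit argument: establish pointwise a.s.\ convergence $\Omega^{-1}Y(\Omega u)\to u$ on a countable dense subset of $[0,u_0]$ from the ordinary SLLN for the Poisson process, then use monotonicity of $u\mapsto Y(\Omega u)$ together with continuity of the limit $u\mapsto u$ to upgrade to uniform convergence on the whole interval via a finite-grid sandwich.
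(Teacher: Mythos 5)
Your proof is correct, and it follows the same overall strategy as the paper's appendix proof (exponential moment/Chernoff bounds, a maximal inequality, Borel--Cantelli over $\Omega\in\mathbb N$), but the maximal-inequality step is handled differently. The paper first proves a pointwise exponential moment bound $\mathbb E[e^{\lambda\tilde Y(t)}]\le e^{Ct\lambda^2}$, applies a Chernoff bound with the specific choice $\lambda=\Omega^{-1/2}$, and then upgrades pointwise tails to a supremum bound via an Etemadi-style inequality on a fine partition, ending with a rate of order $e^{-\varepsilon\sqrt{\Omega}}$. You instead apply Doob's maximal inequality directly to the nonnegative submartingales $e^{\pm\theta\tilde Y(s)}$, which controls the continuous-time running supremum in one stroke and, after fixing a small $\theta=\theta(\delta,u_0)$, yields the sharper rate $e^{-c\Omega}$; this also sidesteps the discretization/limit-over-partitions step, which is the more delicate (and least explicit) part of the paper's argument. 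The remaining bookkeeping in your write-up --- the change of variables $s=\Omega u$, the two-sided bound, and the quantifier order (fix $\delta$, Borel--Cantelli, then intersect over $\delta=1/j$) --- is handled correctly, and your alternative fallback (pointwise SLLN plus monotonicity of $u\mapsto Y(\Omega u)$ and a finite-grid sandwich) is a legitimate, even more elementary route that the paper does not use.
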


Define the mean-field trajectory $x(t)$ as the unique solution of the deterministic ODE
\begin{equation}\label{eq:mean_field_ode}
x(t)=x_0+\int_0^t b\big(x(s)\big)\,ds,
\qquad t\ge 0,
\end{equation}
where $b$ is given explicitly by \Cref{eq:drift_explicit_ctmc}. The following theorem states the mean-field convergence of the CTMC.

\begin{theorem}[Mean-field limit]\label{thm:mean_field_limit}
Assume that the propensity densities $\alpha_e$ are locally Lipschitz and satisfy the local growth bound \eqref{eq:local_growth_ctmc}. Let $X^\Omega$ be the concentration process associated with the density-dependent CTMC in \Cref{sec:ctmc}, and let $x$ solve \eqref{eq:mean_field_ode}. If $X^\Omega(0)\to x_0$, then for every $T>0$,
\[
\lim_{\Omega\to\infty}\ \sup_{0\le t\le T}\big\|X^\Omega(t)-x(t)\big\|=0,
\qquad \text{a.s.}
\]
\end{theorem}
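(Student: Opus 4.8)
The plan is to follow the classical argument of Kurtz for density-dependent population processes (as in Ethier--Kurtz \citep{ethier_markov_1986}): combine the semimartingale decomposition \eqref{eq:lln_decomposition}--\eqref{eq:martingale_term_lln} with a localization that tames the only-locally-Lipschitz drift and the state-dependent Poisson time changes, control the martingale term via the uniform Poisson LLN \Cref{lem:poisson_lln}, and then close the estimate with Grönwall's inequality. Everything is carried out pathwise so that the conclusion is almost sure rather than merely in probability.

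First I would set up the localization. Since $b$ is locally Lipschitz and the mean-field ODE \eqref{eq:mean_field_ode} has a unique solution $x$ on $[0,T]$, the trajectory $\{x(t):0\le t\le T\}$ lies in a compact subset of $\mathbb{R}_+^2$; choose $R$ so that this set sits strictly inside the ball $B_R\coloneqq\{\|y\|<R\}\cap\mathbb{R}_+^2$, and set $K\coloneqq\overline{B_{R+1}}\cap\mathbb{R}_+^2$, a compact set. Define the exit time $\tau^\Omega_R\coloneqq\inf\{t\ge 0:X^\Omega(t)\notin B_{R+1}\}$. On $[0,t\wedge\tau^\Omega_R]$ the process remains in $K$, so by the local growth bound \eqref{eq:local_growth_ctmc} there is a constant $C_K$ with $\sum_e\alpha_e(X^\Omega(s))\le C_K$; hence each time argument appearing in \eqref{eq:scaled_time_change_ctmc} satisfies $\Omega\int_0^{t\wedge\tau^\Omega_R}\alpha_e(X^\Omega(s))\,ds\le \Omega C_K T\eqqcolon\Omega u_0$ deterministically on that interval, with $u_0$ independent of $\Omega$ and $\omega$.

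Next I would control the martingale term. Writing $M^\Omega$ as in \eqref{eq:martingale_term_lln} with $\tilde Y_e(u)=Y_e(u)-u$, the bound on the time arguments together with \Cref{lem:poisson_lln} gives, for each channel $e$, $\sup_{0\le u\le u_0}|\Omega^{-1}\tilde Y_e(\Omega u)|\to 0$ a.s., so
\[
\sup_{0\le t\le T\wedge\tau^\Omega_R}\|M^\Omega(t)\|\;\le\;\sum_e|\nu_e|\,\sup_{0\le u\le u_0}\big|\Omega^{-1}\tilde Y_e(\Omega u)\big|\;\xrightarrow[\Omega\to\infty]{}\;0\quad\text{a.s.}
\]
Subtracting \eqref{eq:mean_field_ode} from \eqref{eq:lln_decomposition}, using the Lipschitz constant $L_K$ of $b$ on $K$, and writing $\varepsilon^\Omega\coloneqq\|X^\Omega(0)-x_0\|+\sup_{0\le t\le T\wedge\tau^\Omega_R}\|M^\Omega(t)\|$ (which tends to $0$ a.s.\ by the display above and the hypothesis $X^\Omega(0)\to x_0$), we get for $t\le T\wedge\tau^\Omega_R$
\[
\|X^\Omega(t)-x(t)\|\le\varepsilon^\Omega+L_K\int_0^t\|X^\Omega(s)-x(s)\|\,ds,
\]
and Grönwall's inequality yields $\sup_{0\le t\le T\wedge\tau^\Omega_R}\|X^\Omega(t)-x(t)\|\le\varepsilon^\Omega e^{L_K T}\to 0$ a.s.

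Finally I would remove the localization: since $\{x(t):0\le t\le T\}$ stays strictly inside $B_R$ and the localized difference tends to $0$ a.s., for almost every $\omega$ there is $\Omega_0(\omega)$ such that $X^\Omega(\cdot,\omega)$ never reaches $\partial B_{R+1}$ on $[0,T]$ for $\Omega\ge\Omega_0(\omega)$, i.e.\ $\tau^\Omega_R>T$; then the localized estimate is the full estimate, giving $\sup_{0\le t\le T}\|X^\Omega(t)-x(t)\|\to 0$ a.s. The main obstacle is exactly the interplay between this localization and the \emph{path-dependent} random time changes in \eqref{eq:scaled_time_change_ctmc}: one cannot invoke a law of large numbers at a fixed time because the Poisson arguments are themselves random functionals of the trajectory, so it is essential that \Cref{lem:poisson_lln} is uniform over the compact interval $[0,u_0]$ and that the localized dynamics confine those arguments to such an interval — this is what converts the fluctuation bound into the required pathwise supremum estimate.
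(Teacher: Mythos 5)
Your proposal is correct and takes essentially the same route as the paper's proof sketch: the semimartingale decomposition \eqref{eq:lln_decomposition}--\eqref{eq:martingale_term_lln}, control of the martingale term via the uniform Poisson LLN (\Cref{lem:poisson_lln}), and Gr\"onwall with the local Lipschitz constant of $b$. The only difference is that you make explicit the stopping-time localization that confines the random Poisson time arguments to a deterministic compact interval (and its removal at the end), a step the paper's sketch leaves implicit but which is indeed needed for the argument to close pathwise.
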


\begin{proof}[Proof sketch]
Subtract \eqref{eq:mean_field_ode} from \eqref{eq:lln_decomposition} to obtain
\[
X^\Omega(t)-x(t)=X^\Omega(0)-x_0+M^\Omega(t)+\int_0^t\big(b(X^\Omega(s))-b(x(s))\big)\,ds.
\]
By local Lipschitz continuity of $b$ on compact subsets and Gr\"onwall's inequality, it remains to control $\sup_{t\le T}\|M^\Omega(t)\|$. Lemma~\ref{lem:poisson_lln} implies that the scaled centered Poisson terms in \eqref{eq:martingale_term_lln} vanish uniformly on $[0,T]$ as $\Omega\to\infty$, yielding the claim.
\end{proof}

\subsection{Functional central limit theorem and diffusion approximation}
\label{subsec:fclt}

The mean-field limit captures only the deterministic drift. To quantify intrinsic demographic fluctuations we consider the central-limit scaling around the deterministic solution. Define the fluctuation process
\begin{equation}\label{eq:fluctuation_process}
V^\Omega(t)\coloneqq \sqrt{\Omega}\,\big(X^\Omega(t)-x(t)\big),
\qquad t\ge 0.
\end{equation}
Let $A(x)$ be the instantaneous covariance matrix associated with the elementary channels,
\begin{equation}\label{eq:covariance_matrix_fclt}
A(x)\coloneqq \sum_e \alpha_e(x)\,\nu_e\nu_e^\top,
\end{equation}
which in our model has the explicit entries \eqref{eq:covariance_entries_ctmc}. The standard FCLT for density-dependent Markov chains (Ethier--Kurtz) yields a Gaussian limit for $V^\Omega$, stated below in a form tailored to our setting.

\begin{theorem}[FCLT / diffusion approximation]\label{thm:fclt}
Assume that the propensity densities $\alpha_e$ are locally Lipschitz and satisfy the local growth bound \eqref{eq:local_growth_ctmc}, and that the drift field $b$ is continuously differentiable on $\mathbb{R}_+^2$. Let $x$ solve the mean-field ODE \eqref{eq:mean_field_ode} and define $V^\Omega$ by \eqref{eq:fluctuation_process}. If $V^\Omega(0)\Rightarrow V(0)$ as $\Omega\to\infty$, then $V^\Omega \Rightarrow V$ in distribution, where $V$ is the (time-inhomogeneous) Gaussian process solving
\begin{equation}\label{eq:linear_fluctuation_sde}
V(t)=V(0)+\int_0^t \nabla b(x(s))\,V(s)\,ds + \sum_e \nu_e\,W_e\!\left(\int_0^t \alpha_e(x(s))\,ds\right),
\end{equation}
with $\{W_e\}_e$ independent standard Brownian motions.
Equivalently, $V$ has instantaneous covariance
\[
\mathbb{E}\!\left[dV(t)\,dV(t)^\top\right]=A(x(t))\,dt.
\]
\end{theorem}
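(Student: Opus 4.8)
The plan is to follow the Ethier--Kurtz FCLT for density-dependent population processes, using the mean-field limit (\Cref{thm:mean_field_limit}) as an input and closing the resulting linear fluctuation equation by a Gr\"onwall / continuous-mapping argument. Subtracting the mean-field ODE \eqref{eq:mean_field_ode} from the scaled time-change representation \eqref{eq:scaled_time_change_ctmc} and multiplying by $\sqrt{\Omega}$, I would write
\[
V^\Omega(t)=V^\Omega(0)+\underbrace{\sum_e \nu_e\,\Omega^{-1/2}\,\tilde Y_e\!\Big(\Omega\!\int_0^t \alpha_e(X^\Omega(s))\,ds\Big)}_{=:\,U^\Omega(t)}+\sqrt{\Omega}\int_0^t\big(b(X^\Omega(s))-b(x(s))\big)\,ds .
\]
A first-order Taylor expansion of the $C^1$ drift gives $b(X^\Omega(s))-b(x(s))=\nabla b(x(s))\,(X^\Omega(s)-x(s))+r^\Omega(s)$, so the last term equals $\int_0^t \nabla b(x(s))\,V^\Omega(s)\,ds+\sqrt{\Omega}\int_0^t r^\Omega(s)\,ds$, matching the structure of \eqref{eq:linear_fluctuation_sde} up to the remainder and the noise term.

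Second, I would show the Taylor remainder is asymptotically negligible. Writing $r^\Omega(s)=\big(\int_0^1[\nabla b(x(s)+\theta(X^\Omega(s)-x(s)))-\nabla b(x(s))]\,d\theta\big)(X^\Omega(s)-x(s))$ and using uniform continuity of $\nabla b$ on a compact neighborhood of the trajectory $x([0,T])$, one gets $\|\sqrt{\Omega}\,r^\Omega(s)\|\le \omega(\|X^\Omega(s)-x(s)\|)\,\|V^\Omega(s)\|$ with $\omega$ the relevant modulus of continuity. Since \Cref{thm:mean_field_limit} gives $\sup_{s\le T}\|X^\Omega(s)-x(s)\|\to 0$ a.s., it suffices to know that $\{\sup_{s\le T}\|V^\Omega(s)\|\}_\Omega$ is tight; this I would obtain from a preliminary estimate, localizing $X^\Omega$ to a compact neighborhood of $x([0,T])$ (where the feedback maps are Lipschitz, so $\nabla b$ is bounded), applying Doob's maximal inequality to the martingale $U^\Omega$, and running Gr\"onwall. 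Then $\sup_{s\le T}\|\sqrt{\Omega}\int_0^s r^\Omega(u)\,du\|\to 0$ in probability.

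Third, I would identify the limit of the noise term. The functional CLT for a unit-rate Poisson process gives $\Omega^{-1/2}\tilde Y_e(\Omega\,\cdot)\Rightarrow W_e$ in the Skorokhod topology, and since $t\mapsto\int_0^t \alpha_e(X^\Omega(s))\,ds$ converges uniformly on $[0,T]$ to the deterministic, continuous, nondecreasing $\tau_e(t):=\int_0^t \alpha_e(x(s))\,ds$ (again by \Cref{thm:mean_field_limit} and continuity of $\alpha_e$), the random-time-change lemma for the Skorokhod topology yields $U^\Omega\Rightarrow U:=\sum_e \nu_e\,W_e(\tau_e(\cdot))$, with the $W_e$ independent because the underlying Poisson processes are. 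Collecting the pieces, $V^\Omega=V^\Omega(0)+\int_0^\cdot \nabla b(x(s))\,V^\Omega(s)\,ds+U^\Omega+\varepsilon^\Omega$ with $\varepsilon^\Omega\to 0$ in probability uniformly on $[0,T]$ and $U^\Omega\Rightarrow U$. The solution map of this linear Volterra equation is continuous: with $\Phi(t,s)$ the fundamental matrix of $\dot\Phi=\nabla b(x(t))\Phi$, the solution is $V(t)=\Phi(t,0)V(0)+\int_0^t \Phi(t,s)\,dU(s)$, and integration by parts makes this a continuous functional of $U\in D([0,T])$. The continuous-mapping theorem then gives $V^\Omega\Rightarrow V$ with $V$ solving \eqref{eq:linear_fluctuation_sde}.

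Finally, for the covariance identification: $U=\sum_e\nu_e W_e(\tau_e(\cdot))$ is a continuous Gaussian martingale with quadratic variation $\sum_e \nu_e\nu_e^\top\,\tau_e(t)=\int_0^t \sum_e\alpha_e(x(s))\,\nu_e\nu_e^\top\,ds=\int_0^t A(x(s))\,ds$, so $d\langle V\rangle_t=A(x(t))\,dt$; equivalently, using the factorization $A=GG^\top$ from \eqref{eq:G_matrix_ctmc}, $V$ admits the representation $dV=\nabla b(x)V\,dt+G(x)\,dB$ with $B$ a five-dimensional standard Brownian motion, which is exactly the linearized CLE fluctuation process. The main obstacle I anticipate is the interaction between steps two and three: one must handle the random time changes $\int_0^\cdot\alpha_e(X^\Omega)\,ds$ and the Poisson FCLTs on a common probability space (or suitably coupled), and the a priori control of $\sup_{s\le T}\|V^\Omega(s)\|$ must not circularly presuppose the conclusion --- this is resolved by the localization-plus-Gr\"onwall bootstrap, but it is the delicate point.
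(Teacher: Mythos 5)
Your proposal is correct, but it takes a different route from the paper's presentation: the paper gives no self-contained proof of \Cref{thm:fclt} at all, instead invoking the standard Ethier--Kurtz FCLT for density-dependent Markov chains and relegating the work to verifying its hypotheses (local Lipschitz continuity and the growth bound \eqref{eq:local_growth_ctmc} for the propensities, continuity of $A$, and $C^1$ regularity of $b$) in \Cref{app:ek_verification}. What you have written is essentially a reconstruction of the proof of that cited theorem: the decomposition of $V^\Omega$ from \eqref{eq:scaled_time_change_ctmc}, Taylor expansion of the $C^1$ drift around the mean-field solution of \eqref{eq:mean_field_ode}, a localization-plus-Doob-plus-Gr\"onwall bootstrap for tightness of $\sup_{s\le T}\|V^\Omega(s)\|$ so the remainder is negligible, the Poisson FCLT combined with the random-time-change lemma for the noise term, and the continuous-mapping step through the fundamental-matrix solution of the linear equation \eqref{eq:linear_fluctuation_sde}, with the covariance identified from the quadratic variation $\int_0^t A(x(s))\,ds$. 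Your route buys self-containedness and makes explicit exactly where each hypothesis enters (the $C^1$ assumption in the Taylor step, the growth bound in the localization, \Cref{thm:mean_field_limit} in both the remainder and time-change steps); the paper's route buys brevity and offloads the delicate tightness and time-change arguments to the literature. Two small points to tighten if you were to write this out in full: the continuous-mapping step needs \emph{joint} convergence of $(V^\Omega(0),U^\Omega)$, which should be stated (it is standard since $V^\Omega(0)$ and the driving Poisson noise can be taken asymptotically independent or handled by Slutsky-type arguments), and the correct identification of your localization so that the Gr\"onwall bound does not presuppose control of $V^\Omega$ outside the localized event --- you flag this yourself, and the standard stopping-time argument resolves it.
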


The limiting fluctuation equation \eqref{eq:linear_fluctuation_sde} induces the familiar chemical Langevin approximation for $X^\Omega$ itself. Let $G:\mathbb{R}_+^2\to\mathbb{R}^{2\times 5}$ be any measurable matrix square root of $A$, i.e.\ $G(x)G(x)^\top=A(x)$; see \citep{allen_introduction_2011} for the formulation of such $G$ and alternative choices. Using the explicit factorization \eqref{eq:G_matrix_ctmc} we obtain the It\^o SDE
\begin{equation}\label{eq:cle_sde}
dX(t)=b(X(t))\,dt+\Omega^{-1/2}G(X(t))\,dB(t),
\qquad X(0)=X_0\in(0,\infty)^2,
\end{equation}
where $B$ is a standard Brownian motion in $\mathbb{R}^5$. In particular, the diffusion coefficient in \eqref{eq:cle_sde} is mechanism-consistent: the infinitesimal covariance of the SDE satisfies
\[
G(x)G(x)^\top=A(x)=\sum_e \alpha_e(x)\,\nu_e\nu_e^\top,
\]
which exactly matches the aggregated channel-wise infinitesimal covariance of the underlying CTMC.

For completeness we record the explicit drift and covariance for the present model. The drift $b$ is given by \eqref{eq:drift_explicit_ctmc}, and the covariance entries are
\begin{equation}\label{eq:diffusion_entries_repeat}
\begin{aligned}
A_{11}(p,w) &= (p_1(w)+p_2(w))\lambda_P(w)\,p + \lambda_R(p)\,w,\\
A_{22}(p,w) &= (1-p_1(w)+3p_2(w))\lambda_P(w)\,p + (\lambda_R(p)+\delta_0)\,w,\\
A_{12}(p,w) &= A_{21}(p,w)= -2p_2(w)\lambda_P(w)\,p - \lambda_R(p)\,w.
\end{aligned}
\end{equation}
We emphasize that \eqref{eq:cle_sde} preserves the deterministic mean-field limit in the sense that $\mathbb{E}[X(t)]$ follows the drift $b$ to leading order as $\Omega\to\infty$, while capturing intrinsic fluctuations of size $\mathcal{O}(\Omega^{-1/2})$ with a covariance structure inherited from the discrete event channels.

\section{Stochastic analysis of the chemical Langevin diffusion}
\label{sec:stochastic_analysis}

This section studies the chemical Langevin diffusion obtained in \Cref{sec:diffusion}. Our objectives are threefold: (i) to specify a well-posedness framework on the positive orthant, where the diffusion is state-dependent and may degenerate near the axes; (ii) to provide a non-explosion criterion ensuring that trajectories neither blow up nor reach the boundary in finite time; and (iii) to establish polynomial moment bounds. These results provide the technical foundation for the no-dedifferentiation pathology proved in \Cref{sec:noR}.
There are different approaches for demographic noise in population dynamics and epidemic models \citep{mandal_stochastic_2014,dos_santos_discreteness_2013,constable_demographic_2016,weissmann_simulation_2018}.
Our diffusion-level analysis follows a standard SDE toolkit for models driven by pure demographic noise, combining well-posedness on the positive orthant with Lyapunov control and moment bounds; see, for instance, some frameworks developed for stochastic predator--prey systems \citep{abundo_stochastic_1991,wang_analysis_2025-1}. 

\subsection{The diffusion model on the positive orthant}
\label{subsec:cle_model}

Fix $\Omega\in\mathbb{N}$. We consider the It\^o SDE on concentrations
\begin{equation}\label{eq:cle_main}
dX(t)=b(X(t))\,dt+\Omega^{-1/2}G(X(t))\,dB(t),
\qquad X(0)=X_0\in (0,\infty)^2,
\end{equation}
where $B$ is a standard Brownian motion in $\mathbb{R}^5$, the drift $b$ is given by \eqref{eq:drift_explicit_ctmc}, and the volatility factor $G$ is given by \eqref{eq:G_matrix_ctmc}. By construction,
\[
A(x)\coloneqq G(x)G(x)^\top=\sum_e \alpha_e(x)\,\nu_e\nu_e^\top
\]
is the mechanism-consistent covariance matrix of the diffusion approximation (\Cref{subsec:fclt}). We work on the open positive orthant
\[
U\coloneqq (0,\infty)^2.
\]
The generator of \eqref{eq:cle_main} is, for $f\in C_b^2(U)$,
\begin{equation}\label{eq:generator_main}
(\mathcal{L}f)(x)=b(x)\cdot \nabla f(x)+\frac{1}{2\Omega}\mathrm{Tr}\big(A(x)\nabla^2 f(x)\big),
\qquad x\in U.
\end{equation}

\subsection{Stopped strong solutions}
\label{subsec:stopped_solutions}

Since the coefficients in \eqref{eq:cle_main} are locally Lipschitz on $U$ (\Cref{app:ek_verification}), classical SDE theory yields existence and pathwise uniqueness up to the first exit from compact subsets of $U$. To formalize this, for $m\in\mathbb{N}$ define the compact exhaustion
\[
K_m\coloneqq [m^{-1},m]^2\subset U,
\]
and the corresponding exit time
\begin{equation}\label{eq:exit_time_def}
\tau_m\coloneqq \inf\{t\ge 0:\ X(t)\notin K_m\},
\qquad 
\tau\coloneqq \lim_{m\to\infty}\tau_m.
\end{equation}

\begin{proposition}[Existence and uniqueness up to exit]\label{prop:stopped_strong}
Assume that $b$ and $G$ are locally Lipschitz and locally bounded on $U$. Then for every $m\in\mathbb{N}$ and every initial condition $X_0\in U$, there exists a unique strong solution to \eqref{eq:cle_main} on $[0,\tau_m]$. In particular, $X(t)\in K_m$ for all $t<\tau_m$, and pathwise uniqueness holds up to $\tau_m$.
\end{proposition}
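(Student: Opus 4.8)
The plan is to run the standard localization (truncation) scheme for It\^o SDEs whose coefficients are merely locally Lipschitz. First I would fix $m\in\mathbb{N}$ and replace $b$ and $G$ by globally Lipschitz, bounded surrogates that are unchanged on $K_m$. Concretely, pick a smooth cutoff $\chi_m:\mathbb{R}^2\to[0,1]$ with $\chi_m\equiv 1$ on $K_m$ and $\operatorname{supp}\chi_m\subset\operatorname{int}K_{m+1}$, and set $b_m\coloneqq\chi_m b$ and $G_m\coloneqq\chi_m G$, extended by $0$ outside $K_{m+1}$. Since $b$ and $G$ are locally Lipschitz and locally bounded on $U$, their restrictions to the compact set $K_{m+1}\subset U$ are Lipschitz and bounded; multiplying by the bounded Lipschitz function $\chi_m$ and extending by zero yields functions $b_m,G_m$ that are globally Lipschitz and bounded on $\mathbb{R}^2$, with $b_m=b$ and $G_m=G$ on $K_m$.

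Second, I would invoke the classical existence-and-uniqueness theorem for It\^o SDEs with globally Lipschitz (hence linearly growing) coefficients: the truncated equation $dX^{(m)}(t)=b_m(X^{(m)}(t))\,dt+\Omega^{-1/2}G_m(X^{(m)}(t))\,dB(t)$ with $X^{(m)}(0)=X_0$ has a unique strong solution with continuous paths on $[0,\infty)$. Defining $\tau_m\coloneqq\inf\{t\ge 0:\ X^{(m)}(t)\notin K_m\}$, path continuity makes this a stopping time with $X^{(m)}(t)\in K_m$ for all $t<\tau_m$. Because $b_m$ and $G_m$ agree with $b$ and $G$ on $K_m$, the stopped process $t\mapsto X^{(m)}(t\wedge\tau_m)$ solves \eqref{eq:cle_main} on $[0,\tau_m]$, giving existence; moreover this $\tau_m$ coincides with the exit time in \eqref{eq:exit_time_def} for that solution.

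Third, for pathwise uniqueness up to $\tau_m$, let $X,\tilde X$ be two solutions of \eqref{eq:cle_main} driven by the same Brownian motion with $X(0)=\tilde X(0)=X_0$, each staying in $K_m$ up to its exit time, and let $L_m$ be a common Lipschitz constant for $b$ and $G$ on $K_m$. Applying It\^o's formula to $|X(t\wedge\tau_m)-\tilde X(t\wedge\tau_m)|^2$, taking expectations, and using the It\^o isometry (together with Burkholder--Davis--Gundy or Doob for the supremum) and the Lipschitz bound yields $\mathbb{E}\big[\sup_{s\le t}|X(s\wedge\tau_m)-\tilde X(s\wedge\tau_m)|^2\big]\le C_m\int_0^t\mathbb{E}\big[\sup_{r\le s}|X(r\wedge\tau_m)-\tilde X(r\wedge\tau_m)|^2\big]\,ds$, whence Gr\"onwall forces the difference to vanish a.s.\ on $[0,\tau_m]$. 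The same estimate applied to two truncations $X^{(m)},X^{(m')}$ with $m<m'$ (coefficients agreeing on $K_m\subset K_{m'}$) shows they coincide on $[0,\tau_m]$, so the construction is consistent in $m$ and the nested exit times are well defined as in \eqref{eq:exit_time_def}.

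I do not expect a genuine obstacle: the only points requiring a little care are checking that the cutoff-modified coefficients are truly globally Lipschitz on all of $\mathbb{R}^2$ (a product of a bounded Lipschitz cutoff with a function Lipschitz on its compact support, extended by zero, is globally Lipschitz), and noting that the degeneracy of $G$ near $\partial U=\partial(0,\infty)^2$ is irrelevant at this stage because $K_m$ is uniformly bounded away from the axes. The behaviour of $\tau_m$ as $m\to\infty$ — in particular whether $\tau=\lim_m\tau_m=\infty$ — is precisely the non-explosion question deferred to the subsequent results of this section.
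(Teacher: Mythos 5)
Your proposal is correct and follows essentially the same route as the paper: localize via a smooth cutoff to obtain globally Lipschitz, bounded surrogate coefficients, invoke the classical strong existence/uniqueness theorem, and note that the truncated solution agrees with \eqref{eq:cle_main} up to the exit time $\tau_m$ (the paper's Appendix version first applies a McShane extension before the cutoff, but since $b,G$ are already defined on all of $U\supset K_{m+1}$ your direct multiplication $\chi_m b$, $\chi_m G$ is an equivalent shortcut). Your explicit Gr\"onwall uniqueness estimate and the consistency check across truncation levels are details the paper leaves to the standard reference, and they are correctly executed.
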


\begin{proof}[Proof sketch]
On the compact set $K_m$, local Lipschitz continuity implies that $b$ and $G$ are bounded and Lipschitz. Extend $b|_{K_m}$ and $G|_{K_m}$ to globally Lipschitz, bounded maps on $U$ via a smooth cutoff (a standard construction; see \Cref{app:cutoff_extension}). The corresponding globally Lipschitz SDE admits a unique global strong solution. By construction, this solution agrees with \eqref{eq:cle_main} up to the exit time from $K_m$, which is $\tau_m$. See \citep{karatzas_brownian_1998} for the standard construction.
\end{proof}

\subsection{A Lyapunov criterion for non-explosion}
\label{subsec:lyapunov}

Proposition~\ref{prop:stopped_strong} isolates the only possible obstructions to global existence: either the process approaches the boundary of $U$ (one coordinate tends to $0$) or it escapes to infinity. A standard way to rule out both events is to construct a Lyapunov function that diverges at $\partial U$ and at infinity and has at most linear growth under the generator.

\begin{proposition}[Lyapunov non-explosion criterion]\label{prop:lyapunov_nonexplosion}
Assume the hypotheses of Proposition~\ref{prop:stopped_strong}. Suppose there exist $V\in C^2(U;[0,\infty))$ and constants $C_0,C_1>0$ such that
\begin{equation}\label{eq:V_blowup_main}
\lim_{\substack{x\to \partial U\\ x\in U}}V(x)=\infty,
\qquad 
\lim_{\substack{\|x\|\to\infty\\ x\in U}}V(x)=\infty,
\end{equation}
and such that the generator \eqref{eq:generator_main} satisfies
\begin{equation}\label{eq:LV_drift_main}
(\mathcal{L}V)(x)\le C_0+C_1 V(x),
\qquad x\in U.
\end{equation}
Then $\tau=\infty$ almost surely. In particular, the unique strong solution to \eqref{eq:cle_main} exists for all $t\ge 0$ and does not reach $\partial U$ or explode to infinity in finite time.
\end{proposition}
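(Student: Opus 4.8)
The statement is the classical Lyapunov (Khasminskii-type) non-explosion test, so the proof will localize on the compact exhaustion $K_m=[m^{-1},m]^2$ from \eqref{eq:exit_time_def} and combine a Gr\"onwall estimate for $\mathbb{E}[V(X(t\wedge\tau_m))]$ with the coercivity \eqref{eq:V_blowup_main} of $V$. Fix $m\in\mathbb{N}$. By Proposition~\ref{prop:stopped_strong} the stopped solution $X(\cdot\wedge\tau_m)$ is well defined and stays in the compact set $K_m\subset U$, on which $V$, $\nabla V$, $\nabla^2 V$, $b$ and $G$ are all bounded (since $V\in C^2(U)$ and $b,G$ are locally bounded). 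Applying It\^o's formula to $V(X(t\wedge\tau_m))$ and reading off the generator \eqref{eq:generator_main},
\[
V(X(t\wedge\tau_m))=V(X_0)+\int_0^{t\wedge\tau_m}(\mathcal{L}V)(X(s))\,ds+\Omega^{-1/2}\!\int_0^{t\wedge\tau_m}\nabla V(X(s))^\top G(X(s))\,dB(s).
\]
The stochastic integral has a bounded integrand and is therefore a genuine martingale, so taking expectations and using the drift bound \eqref{eq:LV_drift_main} gives $\mathbb{E}[V(X(t\wedge\tau_m))]\le V(X_0)+C_0 t+C_1\int_0^t\mathbb{E}[V(X(s\wedge\tau_m))]\,ds$; Gr\"onwall's inequality then yields the $m$-uniform bound $\mathbb{E}[V(X(t\wedge\tau_m))]\le(V(X_0)+C_0 t)e^{C_1 t}=:\Phi(t)$.

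\textbf{From coercivity to the probability bound.} Put $\gamma_m:=\inf\{V(x):x\in U\setminus\mathrm{int}(K_m)\}$. I claim $\gamma_m\to\infty$: for any $M>0$, the sublevel set $\{x\in U:V(x)\le M\}$ is, by \eqref{eq:V_blowup_main}, bounded and bounded away from $\partial U$, hence a compact subset of $U$, so it is contained in $\mathrm{int}(K_{m_0})$ for some $m_0$; consequently $U\setminus\mathrm{int}(K_m)\subseteq\{V>M\}$ and $\gamma_m\ge M$ for all $m\ge m_0$. On the event $\{\tau_m\le t\}$ the continuous path has already left $\mathrm{int}(K_m)$, so $V(X(t\wedge\tau_m))=V(X(\tau_m))\ge\gamma_m$, and therefore $\Phi(t)\ge\mathbb{E}[V(X(t\wedge\tau_m))]\ge\gamma_m\,\mathbb{P}(\tau_m\le t)$, i.e.\ $\mathbb{P}(\tau_m\le t)\le\Phi(t)/\gamma_m$. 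Since $\tau_m\uparrow\tau$ we have $\{\tau\le t\}\subseteq\{\tau_m\le t\}$ for every $m$, so $\mathbb{P}(\tau\le t)\le\Phi(t)/\gamma_m\to0$ as $m\to\infty$; thus $\mathbb{P}(\tau\le t)=0$ for every $t>0$, and letting $t\to\infty$ gives $\tau=\infty$ almost surely. This is precisely the assertion that the strong solution of \eqref{eq:cle_main} exists for all $t\ge0$ and, since $X(t)\in K_m$ for all $t<\tau_m$ with $\tau_m\uparrow\infty$, neither reaches $\partial U$ nor explodes in finite time.

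\textbf{Main obstacle.} The argument is largely routine once assembled; the one genuinely substantive step is the claim $\gamma_m\to\infty$, which is where the two separate divergence conditions in \eqref{eq:V_blowup_main} (blow-up at $\partial U$ and at infinity) must be used together to make the sublevel sets of $V$ compact \emph{inside} $U$. A secondary bookkeeping point --- and the reason one stops at $\tau_m$ rather than estimating globally at once --- is that boundedness of the integrand on $K_m$ is needed to upgrade the It\^o stochastic integral from a local to a true martingale before taking expectations; the degeneracy of $G$ near the axes plays no role in this part of the analysis.
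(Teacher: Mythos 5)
Your proof is correct and follows essentially the same route as the paper: Itô's formula on the stopped process, the drift bound plus Grönwall to get an $m$-uniform estimate, and the coercivity of $V$ forcing $\inf V$ off $\mathrm{int}(K_m)$ to infinity. The only (cosmetic) difference is that you extract the quantitative Chebyshev bound $\mathbb{P}(\tau_m\le t)\le\Phi(t)/\gamma_m$ rather than arguing by contradiction as the paper does, which if anything makes the final step slightly more explicit.
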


\begin{proof}[Proof sketch]
Apply It\^o's formula to $V(X(t\wedge \tau_m))$, where $\tau_m$ is defined in \eqref{eq:exit_time_def}. Taking expectations and using \eqref{eq:LV_drift_main} yields
\[
\mathbb{E}\big[V(X(t\wedge \tau_m)) \big] \le V(X_0) + \int_0^t \Big(C_0+C_1 \mathbb{E}\big[V(X(s\wedge \tau_m)) \big]\Big)\,ds.
\]
Gr\"onwall's inequality implies that,
\[
\mathbb{E}\big[V(X(t\wedge \tau_m))\big]\le \big(V(X_0)+C_0 t\big)e^{C_1 t},
\qquad t\ge 0,
\]
uniformly in $m$. If $\mathbb{P}(\tau<\infty)>0$, then on $\{\tau\le t\}$ one has $X(\tau_m)\in\partial K_m$ for all $m$ sufficiently large, and \eqref{eq:V_blowup_main} implies $\inf_{x\in\partial K_m}V(x)\to\infty$, contradicting the uniform bound. Hence $\mathbb{P}(\tau<\infty)=0$.
\end{proof}

\subsection{Polynomial moment bounds}
\label{subsec:moments}

We next establish polynomial moment bounds for the diffusion, initially up to the exit time $\tau$, and globally when $\tau=\infty$ almost surely. We formulate a convenient linear-growth assumption on the coefficients; for the present model it is implied by bounded feedback maps, since the propensities are at most linear in $(p,w)$.

\begin{assumption}[Linear-growth bounds]\label{ass:linear_growth}
There exist constants $K_b,K_A\ge 0$ such that for all $x\in U$,
\begin{equation}\label{eq:linear_growth_assumption}
x\cdot b(x)\le K_b\,(1+\|x\|^2),
\qquad 
\mathrm{Tr}\,A(x)\le K_A\,(1+\|x\|^2),
\end{equation}
where $A(x)=G(x)G(x)^\top$.
\end{assumption}

\begin{remark}
Assumption~\ref{ass:linear_growth} holds for the present diffusion when $p_1,p_2,\lambda_P,\lambda_R$ are bounded on $[0,\infty)$: each propensity $\alpha_e(p,w)$ is then at most linear in $(p,w)$, and $b$ and $A$ inherit the corresponding growth bounds.
\end{remark}

\begin{theorem}[Polynomial moment bounds up to exit]\label{thm:moment_bounds}
Assume Assumption~\ref{ass:linear_growth}. Let $q\ge 4$ and set $V_q(x)\coloneqq \|x\|^q$ for $x\in U$. Then there exists a constant $C_q>0$ (depending only on $q$, $K_b$, $K_A$, and $\Omega$) such that for every $m\in\mathbb{N}$ and all $t\ge 0$,
\begin{equation}\label{eq:moment_bound_stopped}
\mathbb{E}\big[\|X(t\wedge \tau_m)\|^q\big]\le \big(\mathbb{E}\|X_0\|^q+C_q t\big)\,e^{C_q t}.
\end{equation}
Consequently, letting $m\to\infty$ yields
\begin{equation}\label{eq:moment_bound_tau}
\mathbb{E}\big[\|X(t\wedge \tau)\|^q\big]\le \big(\mathbb{E}\|X_0\|^q+C_q t\big)\,e^{C_q t},
\qquad t\ge 0.
\end{equation}
If in addition $\tau=\infty$ almost surely (e.g.\ by Proposition~\ref{prop:lyapunov_nonexplosion}), then \eqref{eq:moment_bound_tau} holds with $t\wedge\tau=t$ for all $t\ge 0$.
\end{theorem}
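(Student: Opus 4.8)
The plan is to run the standard Lyapunov/It\^o argument with the radial test function $V_q(x)=\|x\|^q$: compute $\mathcal L V_q$ via \eqref{eq:generator_main}, bound it affinely by $V_q$ using Assumption~\ref{ass:linear_growth}, eliminate the stochastic integral by the $\tau_m$-localization, close with Gr\"onwall, and then pass $m\to\infty$ by Fatou. First I would record that, since $q\ge 4$, $V_q\in C^2(\mathbb R^2)$ with $\nabla V_q(x)=q\|x\|^{q-2}x$ and $\nabla^2 V_q(x)=q\|x\|^{q-2}I+q(q-2)\|x\|^{q-4}xx^\top$, so that
\[
(\mathcal L V_q)(x)=q\|x\|^{q-2}\,x\cdot b(x)+\frac{q}{2\Omega}\|x\|^{q-2}\,\mathrm{Tr}\,A(x)+\frac{q(q-2)}{2\Omega}\|x\|^{q-4}\,x^\top A(x)\,x .
\]
I would then bound each term: $x\cdot b(x)\le K_b(1+\|x\|^2)$ and $\mathrm{Tr}\,A(x)\le K_A(1+\|x\|^2)$ directly from Assumption~\ref{ass:linear_growth}, and $x^\top A(x)\,x\le \mathrm{Tr}\,A(x)\,\|x\|^2\le K_A(1+\|x\|^2)\|x\|^2$ using positive semidefiniteness of $A$. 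Each term is then at most a constant multiple of $\|x\|^{q-2}+\|x\|^q$, and since $\|x\|^{q-2}\le 1+\|x\|^q$ this gives $(\mathcal L V_q)(x)\le C_q\,(1+V_q(x))$ on $U$, with $C_q$ depending only on $q,K_b,K_A,\Omega$ (for instance $C_q=2qK_b+q(q-1)K_A/\Omega$).

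Next I would apply It\^o's formula (valid for the $C^2$ function $V_q$ along the continuous semimartingale $X$) to $V_q(X(t\wedge\tau_m))$. By Proposition~\ref{prop:stopped_strong} and path continuity, $X(s\wedge\tau_m)\in K_m$ for all $s$, so on $[0,\tau_m]$ the integrand $\nabla V_q(X)^\top G(X)$ of the stochastic integral is bounded; hence that integral is a genuine martingale and drops out upon taking expectations. Writing $\phi_m(t):=\mathbb E[V_q(X(t\wedge\tau_m))]$, which is finite because $V_q$ is bounded on the compact set $K_m$, the generator bound together with Tonelli and the elementary inequality $\mathbf{1}_{\{s<\tau_m\}}(1+V_q(X(s)))\le 1+V_q(X(s\wedge\tau_m))$ yields
\[
\phi_m(t)\le \mathbb E\|X_0\|^q+C_q t+C_q\int_0^t\phi_m(s)\,ds .
\]

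Since the prefactor $\mathbb E\|X_0\|^q+C_q t$ is nondecreasing, the integral form of Gr\"onwall's inequality then gives $\phi_m(t)\le(\mathbb E\|X_0\|^q+C_q t)e^{C_q t}$, which is \eqref{eq:moment_bound_stopped} (the constant being uniform in $m$). Finally $\tau_m\uparrow\tau$, hence $t\wedge\tau_m\uparrow t\wedge\tau$ and $X(t\wedge\tau_m)\to X(t\wedge\tau)$ a.s.\ by path continuity; Fatou's lemma then upgrades \eqref{eq:moment_bound_stopped} to \eqref{eq:moment_bound_tau}, and when $\tau=\infty$ a.s.\ (e.g.\ under Proposition~\ref{prop:lyapunov_nonexplosion}) one has $t\wedge\tau=t$.

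The only step requiring genuine care is the generator estimate — in particular the Hessian contribution $\|x\|^{q-4}x^\top A(x)\,x$, where one must exploit positive semidefiniteness of $A$ to replace the quadratic form by $\mathrm{Tr}\,A(x)\,\|x\|^2$ (rather than a cruder bound), so that the linear-growth hypothesis on $\mathrm{Tr}\,A$ actually closes the estimate. Everything else — the localization that turns the stochastic integral into a martingale, Gr\"onwall, and the Fatou passage $m\to\infty$ — is routine SDE bookkeeping.
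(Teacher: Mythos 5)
Your proposal is correct and follows essentially the same route as the paper: expand $\mathcal{L}V_q$, bound $x^\top A(x)x\le \|x\|^2\,\mathrm{Tr}\,A(x)$ by positive semidefiniteness, invoke Assumption~\ref{ass:linear_growth} to get $\mathcal{L}V_q\le C_q(1+V_q)$, localize at $\tau_m$ so the stochastic integral is a true martingale, and close with Gr\"onwall. The only (harmless) deviation is at the passage $m\to\infty$: you use Fatou's lemma, whereas the paper invokes uniform integrability (justified in its appendix via a higher moment exponent $r>q$); for the one-sided bound \eqref{eq:moment_bound_tau} your Fatou argument suffices and is arguably the cleaner route.
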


\begin{proof}[Proof sketch]
Fix $m\in\mathbb{N}$ and apply It\^o's formula to $V_q(X(t\wedge\tau_m))$. Using
\[
\nabla V_q(x)=q\|x\|^{q-2}x,\qquad 
\nabla^2 V_q(x)=q\|x\|^{q-2}I+q(q-2)\|x\|^{q-4}xx^\top,
\]
we obtain
\[
(\mathcal{L}V_q)(x)
= q\|x\|^{q-2}\,x\cdot b(x)
+\frac{q}{2\Omega}\|x\|^{q-2}\mathrm{Tr}\,A(x)
+\frac{q(q-2)}{2\Omega}\|x\|^{q-4}\,x^\top A(x)x.
\]
Since $A(x)$ is positive semidefinite, $x^\top A(x)x\le \|x\|^2\mathrm{Tr}\,A(x)$ (see \Cref{app:moments_details}), hence
\[
(\mathcal{L}V_q)(x)
\le q\|x\|^{q-2}\,x\cdot b(x)
+\frac{q(q-1)}{2\Omega}\|x\|^{q-2}\mathrm{Tr}\,A(x).
\]
Applying Assumption~\ref{ass:linear_growth} and the elementary bound $\|x\|^{q-2}\le 1+\|x\|^q$ yields
\[
(\mathcal{L}V_q)(x)\le C_q\big(1+\|x\|^q\big)
\]
for a constant $C_q$ depending on $q,K_b,K_A,\Omega$. Integrating the It\^o identity and applying Gr\"onwall's inequality gives \eqref{eq:moment_bound_stopped}. The bound \eqref{eq:moment_bound_tau} follows by letting $m\to\infty$ and using uniform integrability, which is ensured by \eqref{eq:moment_bound_stopped}.
\end{proof}

\section{No-dedifferentiation pathology in the diffusion approximation}
\label{sec:noR}

This section isolates the structural role of lineage plasticity by removing the dedifferentiation flux from the chemical Langevin diffusion. Concretely, we set the dedifferentiation rate identically zero and study the resulting diffusion on the positive orthant. The main result is a dichotomy: in subcritical regimes the stem compartment undergoes almost sure asymptotic extinction, while in supercritical regimes polynomial moments grow at least exponentially. These two behaviors constitute a \emph{no-dedifferentiation pathology} in the diffusion approximation, in the sense that the model cannot sustain a stationary homeostatic stem population under demographic fluctuations without a dedifferentiation mechanism.

\smallskip
\noindent\textbf{Scope.} All results in this section are formulated at the level of the diffusion approximation (CLE). The relationship to the underlying CTMC near the boundary---including discreteness effects and potential boundary-layer phenomena associated with diffusion degeneracy---is discussed in \Cref{sec:discussion}.

\subsection{The no-dedifferentiation diffusion and notions of population loss}
\label{subsec:noR_statement}

Fix $\Omega\in\mathbb{N}$ and consider the diffusion \eqref{eq:cle_main} with dedifferentiation removed, i.e.\ with $\lambda_R\equiv 0$. Writing $X(t)=(p(t),w(t))$, we obtain
\begin{equation}\label{eq:cle_noR}
dX(t)=b_0(X(t))\,dt+\Omega^{-1/2}G_0(X(t))\,dB(t),
\qquad X(0)=X_0\in(0,\infty)^2,
\end{equation}
where $b_0$ and $G_0$ denote the drift and volatility obtained from \eqref{eq:drift_explicit_ctmc} and \eqref{eq:G_matrix_ctmc} by setting $\lambda_R\equiv 0$. All statements below are understood for the stopped strong solution up to the exit time $\tau$ from $U=(0,\infty)^2$ as constructed in Section~\ref{sec:stochastic_analysis}.

A key point is to distinguish two notions of population loss that are often conflated in the diffusion literature.

\begin{definition}[Finite-time boundary hitting]\label{def:hitting}
Let $X(t)=(p(t),w(t))$ be a (possibly stopped) solution of \eqref{eq:cle_noR}. We say that the stem coordinate hits the boundary in finite time if
\[
\tau_{p=0}\coloneqq \inf\{t\ge 0:\ p(t)=0\}<\infty.
\]
Analogously, $\tau_{w=0}\coloneqq \inf\{t\ge 0:\ w(t)=0\}<\infty$ denotes finite-time hitting of the TD boundary.
\end{definition}

\begin{definition}[Asymptotic extinction]\label{def:asymptotic_extinction}
We say that the stem coordinate exhibits asymptotic extinction if
\[
p(t)\longrightarrow 0 \quad \text{as } t\to\infty \quad \text{almost surely on the event } \{\tau=\infty\}.
\]
If $\mathbb{P}(\tau<\infty)>0$, the preceding notion is interpreted as a statement on the non-explosion event $\{\tau=\infty\}$.
\end{definition}

The distinction is essential. Finite-time boundary hitting is a boundary attainability property of the diffusion (and depends sensitively on the degeneracy structure of $G_0$ near the axes), whereas asymptotic extinction concerns long-time decay and may occur even when the boundary is not reached at any finite time.

\subsection{Pathology without dedifferentiation}
\label{subsec:noR_pathology}

We now state and prove the no-dedifferentiation pathology. The result is formulated under uniform bounds on the division rate and a uniform bias condition on the net stem drift $p_1-p_2$.

\begin{proposition}[No-dedifferentiation pathology]\label{prop:no_dedifferentiation_main}
Assume that $\lambda_P$ is bounded and strictly positive on $[0,\infty)$, i.e.\ there exist finite constants
\[
0<\lambda_{P,\min}\le \lambda_P(w)\le \lambda_{P,\max}<\infty,\qquad \forall\,w\ge 0.
\]
Consider the no-dedifferentiation diffusion \eqref{eq:cle_noR}. Then the following two regimes occur.
\begin{enumerate}[label=\textnormal{(\roman*)}]
\item \textnormal{(Subcritical regime: asymptotic extinction).}
Assume $\inf_{w\ge 0}\,\big(p_1(w)+p_2(w)\big)\ge p_{\min}>0$ and
\begin{equation}\label{eq:subcritical_main}
\sup_{w\ge 0}\big(p_1(w)-p_2(w)\big)\le -\gamma<0
\end{equation}
for some $\gamma>0$. Let $\kappa\coloneqq \gamma\lambda_{P,\min}>0$. Then
\[
p(t)\longrightarrow 0
\qquad \text{almost surely as } t\to\infty \text{ on } \{\tau=\infty\}.
\]

\item \textnormal{(Supercritical regime: exponential moment divergence).}
Assume
\begin{equation}\label{eq:supercritical_main}
\inf_{w\ge 0}\,\big(p_1(w)-p_2(w)\big)\ge \gamma>0
\end{equation}
for some $\gamma>0$, and set $\kappa\coloneqq \gamma\lambda_{P,\min}>0$. Then for every $q\ge 1$ and all $t\ge 0$,
\begin{equation}\label{eq:dichotomy}
\mathbb{E}\big[p(t\wedge\tau)^q\big]\ \ge\ \mathbb{E}\big[p(0)^q\big]\exp(q\kappa t).
\end{equation}
Consequently one of the following alternatives holds:
\begin{enumerate}
  \item $\tau=\infty$ almost surely; or
  \item for the same $q$ at least one of the following is true:
    \[
    \mathbb{E}\big[p(\tau)^q\big]=\infty\qquad\text{or}\qquad
    \lim_{t\to\infty}\mathbb{E}\big[p(t)^q\mathbf 1_{\{\tau>t\}}\big]=\infty.
    \]
\end{enumerate}
Equivalently, if $\mathbb{E}\big[p(\tau)^q\big]<\infty$, we have the conditional lower bound
\[
\mathbb{E}\big[p(t)^q\mid \tau>t\big]\ \ge\ \frac{\mathbb{E}[p(0)^q]}{\mathbb{P}(\tau>t)}\,e^{q\kappa t} - \frac{\mathbb{E}[p(\tau)^q]}{\mathbb{P}(\tau>t)},
\]
so conditional moments on the survival event diverge asymptotically exponentially.
\end{enumerate}
\end{proposition}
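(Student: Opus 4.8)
The plan is to reduce both parts to the scalar ``stem'' equation. Setting $\lambda_R\equiv 0$ kills the fourth column of $G$ in \eqref{eq:G_matrix_ctmc}, so the first coordinate of \eqref{eq:cle_noR} obeys $dp(t)=c(w(t))\,p(t)\,dt+\Omega^{-1/2}\big(\sqrt{\alpha_{\mathrm{SR}}(p,w)}\,dB^1-\sqrt{\alpha_{\mathrm{SD}}(p,w)}\,dB^2\big)$, with $c(w)=(p_1(w)-p_2(w))\lambda_P(w)$ by \eqref{eq:drift_explicit_ctmc}, and the martingale part has quadratic variation $\Omega^{-1}(\alpha_{\mathrm{SR}}+\alpha_{\mathrm{SD}})\,dt=\Omega^{-1}(p_1(w)+p_2(w))\lambda_P(w)\,p\,dt$, matching $A_{0,11}$ in \eqref{eq:covariance_entries_ctmc}. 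The only inputs I use are the sign and size of $c$: subcritically $c(w)\le-\gamma\lambda_{P,\min}=-\kappa$, supercritically $c(w)\ge\gamma\lambda_{P,\min}=\kappa$, and in both cases the drift and squared-diffusion of $p$ are linear in $p$ with bounded coefficients. I would work on the compact exhaustion $K_m$ and stopping times $\tau_m$ of Section~\ref{sec:stochastic_analysis}, where coefficients are bounded and stochastic integrals are genuine martingales, and only pass $m\to\infty$ at the end.

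For (i) I would exhibit an exponentially weighted supermartingale. For the linear test function $\varphi(x)=x_1$ the second-order part of the generator drops out, and a one-line computation gives $(\partial_t+\mathcal L_0)\big(e^{\kappa t}x_1\big)=e^{\kappa t}\big(\kappa+c(x_2)\big)x_1\le 0$ on $U$ by \eqref{eq:subcritical_main}. Hence $e^{\kappa(t\wedge\tau_m)}p(t\wedge\tau_m)$ is a nonnegative bounded supermartingale for each $m$; letting $m\to\infty$ and using Fatou for both the unconditional expectation (to get integrability, with $\mathbb E[M_t]\le\mathbb E[p(0)]$) and the conditional expectation (to get the supermartingale inequality), $M_t:=e^{\kappa(t\wedge\tau)}p(t\wedge\tau)$ is a genuine nonnegative supermartingale. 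By the supermartingale convergence theorem $M_t$ converges a.s.\ to a finite limit; on $\{\tau=\infty\}$ this forces $p(t)=e^{-\kappa t}M_t\to 0$ a.s., which is Definition~\ref{def:asymptotic_extinction}. (The extra hypothesis $p_1+p_2\ge p_{\min}>0$ is in fact implied by \eqref{eq:subcritical_main}, since $p_1-p_2\le-\gamma$ forces $p_2\ge\gamma$; it merely records nondegeneracy of the noise and is not needed for this argument.)

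For (ii) I would run the same idea with the opposite sign and power. With $V_q(x)=x_1^q$ ($q\ge1$), using $A_{0,11}=(p_1+p_2)\lambda_P x_1\ge0$ and \eqref{eq:supercritical_main} (which also gives $p_1+p_2\ge p_1\ge\gamma$), one gets $\mathcal L_0 V_q(x)=q\,c(x_2)\,x_1^q+\tfrac{q(q-1)}{2\Omega}(p_1+p_2)\lambda_P\,x_1^{q-1}\ge q\kappa\,V_q(x)$ on $U$, so $e^{-q\kappa(t\wedge\tau_m)}p(t\wedge\tau_m)^q$ is a nonnegative bounded submartingale and $\mathbb E\big[e^{-q\kappa(t\wedge\tau_m)}p(t\wedge\tau_m)^q\big]\ge\mathbb E[p(0)^q]$. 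The polynomial moment bounds of Theorem~\ref{thm:moment_bounds}, which apply here because $\lambda_R\equiv0$ keeps Assumption~\ref{ass:linear_growth} in force, provide a uniform-in-$m$ bound on $\mathbb E\big[p(t\wedge\tau_m)^{q'}\big]$ for $q'>q$, hence uniform integrability, so the local martingale in It\^o's formula for $p(t\wedge\tau)^q$ is a true martingale and one may take $m\to\infty$. Feeding $\mathcal L_0 V_q\ge q\kappa V_q$ into that It\^o identity and applying Gr\"onwall's inequality gives \eqref{eq:dichotomy}. The step I expect to be the main obstacle is here: $p(s)^q\mathbf 1_{\{s<\tau\}}$ differs from $p(s\wedge\tau)^q$ by the boundary term $p(\tau^-)^q\mathbf 1_{\{\tau\le s\}}$, so closing the Gr\"onwall loop cleanly requires controlling this exit contribution (using non-explosion from Theorem~\ref{thm:moment_bounds} and the degeneracy of $G_0$ at the axes), which is precisely why the conclusion is phrased as a dichotomy rather than an unconditional growth law.

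The ``consequently'' part would then follow softly from \eqref{eq:dichotomy}. Write $\mathbb E[p(t\wedge\tau)^q]=\mathbb E\big[p(t)^q\mathbf 1_{\{\tau>t\}}\big]+\mathbb E\big[p(\tau)^q\mathbf 1_{\{\tau\le t\}}\big]$. If $\mathbb P(\tau=\infty)=1$ the second term vanishes (and \eqref{eq:dichotomy} gives $\mathbb E[p(t)^q]\ge\mathbb E[p(0)^q]e^{q\kappa t}\to\infty$), so alternative (a) holds. Otherwise, either $\mathbb E[p(\tau)^q]=\infty$ — the first clause of (b) — or $\mathbb E[p(\tau)^q]<\infty$, in which case $\mathbb E[p(\tau)^q\mathbf 1_{\{\tau\le t\}}]\uparrow\mathbb E[p(\tau)^q]<\infty$ while $\mathbb E[p(0)^q]e^{q\kappa t}\to\infty$, so \eqref{eq:dichotomy} forces $\mathbb E\big[p(t)^q\mathbf 1_{\{\tau>t\}}\big]\ge\mathbb E[p(0)^q]e^{q\kappa t}-\mathbb E[p(\tau)^q]\to\infty$, the second clause of (b); dividing by $\mathbb P(\tau>t)\le1$ yields the stated conditional lower bound and the asymptotically exponential divergence of the conditional moments.
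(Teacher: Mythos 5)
Your part (i) is correct and is in fact a streamlined version of the paper's argument: both hinge on the observation that $Y(t)=e^{\kappa t}p(t)$ is a nonnegative (local, hence genuine) supermartingale under \eqref{eq:subcritical_main}, but the paper goes on to prove the stronger fact $Y(t)\to 0$ on $\{\tau=\infty\}$ via a quadratic-variation lower bound and a Dambis--Dubins--Schwarz contradiction, which is where the hypothesis $p_1+p_2\ge p_{\min}$ enters. You correctly note that for the conclusion as stated ($p(t)\to0$), mere a.s.\ convergence of $Y$ to a finite limit suffices, since $p(t)=e^{-\kappa t}Y(t)$; and your observation that $p_1-p_2\le-\gamma$ already forces $p_1+p_2\ge\gamma$ is accurate. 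So (i) buys a shorter proof at the price of a slightly weaker quantitative statement (decay at rate $e^{-\kappa t}$ rather than $o(e^{-\kappa t})$), which is fine for the proposition. Your ``consequently'' discussion of the dichotomy and the conditional bound coincides with the paper's.

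The genuine gap is in (ii), at the inequality \eqref{eq:dichotomy} itself, and you half-acknowledge it without closing it. Your discounted-submartingale step is correct but proves only $\mathbb{E}\big[e^{-q\kappa(t\wedge\tau_m)}p(t\wedge\tau_m)^q\big]\ge\mathbb{E}[p(0)^q]$, i.e.\ a bound with the discount evaluated at the random time $t\wedge\tau$; since $e^{-q\kappa(t\wedge\tau)}\ge e^{-q\kappa t}$, this is strictly weaker than \eqref{eq:dichotomy} and cannot be upgraded to it by monotonicity. You then revert to the It\^o--Gr\"onwall route and explicitly flag the exit contribution $\mathbb{E}\big[p(\tau_m)^q\mathbf 1_{\{\tau_m\le s\}}\big]$ (which makes $\mathbb{E}[p(s)^q\mathbf 1_{\{s<\tau_m\}}]\le\mathbb{E}[p(s\wedge\tau_m)^q]$ the wrong direction for Gr\"onwall) as ``the main obstacle,'' but you do not resolve it, and your claim that this is ``precisely why the conclusion is phrased as a dichotomy'' is a misreading: in the proposition the dichotomy is \emph{deduced from} \eqref{eq:dichotomy}, so the unconditional stopped-moment growth law must still be proved as stated (or the statement weakened to your discounted version). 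To be fair, the paper's own proof asserts the differential inequality $\frac{d}{dt}\mathbb{E}[p(t\wedge\tau_m)^q]\ge q\kappa\,\mathbb{E}[p(t\wedge\tau_m)^q]$ without addressing this same boundary term, so you have identified a real soft spot rather than introduced a new error; but as a standalone argument your (ii) does not establish \eqref{eq:dichotomy}, and everything downstream of it (the alternatives (a)/(b) and the conditional lower bound, which you derive exactly as the paper does) is left conditional on that unproved estimate.
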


\begin{proof}
We give the core constructions for the two regimes, emphasizing the supermartingale and quadratic-variation mechanisms that drive the dichotomy. The detailed calculations can be found in \Cref{app:DDS_details}.

\smallskip
\noindent\emph{(i) Subcritical regime.}
Assume that the subcritical condition
\[
\sup_{w\ge 0}\big(p_1(w)-p_2(w)\big)\le -\gamma<0
\]
holds. Define
\[
Y(t):=e^{\kappa t}p(t),\qquad \kappa:=\gamma\,\lambda_{P,\min}>0.
\]
By It\^o's formula, for all $t<\tau$,
\[
dY(t)
= e^{\kappa t}\Big((p_1(w(t))-p_2(w(t)))\lambda_P(w(t))+\kappa\Big)p(t)\,dt
+ dM(t),
\]
where $M$ is a continuous local martingale. By the subcritical assumption and the lower bound
$\lambda_P\ge \lambda_{P,\min}$, the drift term is non-positive. 
Consequently, $Y(\cdot\wedge\tau)$ is a nonnegative local supermartingale, hence a supermartingale.
Since for any $t\ge 0$, $\mathbb{E}[Y(t\wedge \tau)]\le \mathbb{E}[Y(0)]=p(0)$, Doob's martingale convergence theorem implies the almost sure existence of the limit
\[
L:=\lim_{t\to\infty}Y(t\wedge \tau)
\]
and $L\ge 0$.

We now show that $L=0$ almost surely on $\{\tau=\infty\}$. Suppose, by contradiction, that there exists a set
$E\subset\{\tau=\infty\}$ with $\mathbb{P}(E)>0$ on which $L>0$.
Fix $\omega_0\in E$ and set $L=L(\omega_0)>0$.
By convergence of $Y(t)(\omega_0)$, there exists $T(\omega_0)$ such that
\[
Y(t)(\omega_0)\ge \tfrac{1}{2}L \qquad \text{for all } t\ge T(\omega_0).
\]
Using the quadratic variation identity,
\begin{equation}\label{eq:quad_var_subcritical}
\langle M\rangle_{t}
=\frac{1}{\Omega}\int_0^{t} e^{2\kappa s}\big(p_1(w(s))+p_2(w(s))\big)\lambda_P(w(s))p(s)\,ds
\ge \frac{p_{\min}\lambda_{P,\min}}{\Omega}\int_0^{t} e^{\kappa s}Y(s)\,ds,
\end{equation}
we obtain for $t\ge T(\omega_0)$,
\[
\int_0^{t} e^{\kappa s}Y(s)(\omega_0)\,ds
\ge \frac{L}{2}\int_{T(\omega_0)}^{t} e^{\kappa s}\,ds
= \frac{L}{2\kappa}\big(e^{\kappa t}-e^{\kappa T(\omega_0)}\big)
\xrightarrow[t\to\infty]{}\infty.
\]
Hence $\langle M\rangle_t(\omega_0)\to\infty$ as $t\to\infty$.

The Doob--Meyer decomposition applying to the supermartingale $Y(\cdot\wedge \tau)$ implies that there exist a martingale $\widetilde M$ and an increasing predictable process $A$ with $A(0)=0$ such that
\[
Y(t\wedge \tau)=Y(0)+\widetilde M(t)-A(t),\qquad t\ge0.
\]
For a nonnegative supermartingale one has $\sup_{t}\mathbb{E}[A(t)]<\infty$, hence $A(t)\to A(\infty)<\infty$ almost surely. Therefore, the convergence of $Y(t\wedge \tau)$ as $t\to\infty$ ensures that the martingale component $\widetilde M(t)=Y(t\wedge \tau)-Y(0)+A(t)$ must also converge almost surely.
This contradicts the fact that a continuous local martingale with divergent quadratic variation
cannot converge; see \Cref{app:DDS_details}.
The contradiction implies that 
\[
\lim_{t\to\infty}Y(t)=0 \quad \text{a.s.\ on } \{\tau=\infty\}.
\]
Consequently,
\[
p(t)=e^{-\kappa t}Y(t)\longrightarrow 0 \qquad \text{a.s.\ on } \{\tau=\infty\},
\]
establishing almost sure extinction in the subcritical regime.

\smallskip
\noindent\emph{(ii) Supercritical regime.}
Assume \eqref{eq:supercritical_main}. The stem drift in \eqref{eq:cle_noR} satisfies
\[
(p_1(w(t))-p_2(w(t)))\lambda_P(w(t))\ge \kappa,\qquad t<\tau.
\]
Fix $q\ge 1$ and apply It\^o's formula to $p(t)^q$ (stopped at $\tau_m$). The diffusion correction term is nonnegative, so we obtain
\[
d\big(p(t)^q\big)\ \ge\ q\kappa\,p(t)^q\,dt + d\mathcal{M}_q(t),
\qquad t<\tau_m,
\]
for a local martingale $\mathcal{M}_q$. Taking expectations and using $\mathbb{E}[\mathcal{M}_q(t\wedge\tau_m)]=0$ yields
\[
\frac{d}{dt}\,\mathbb{E}\big[p(t\wedge\tau_m)^q\big]\ \ge\ q\kappa\,\mathbb{E}\big[p(t\wedge\tau_m)^q\big].
\]
Gr\"onwall's inequality gives
\[
\mathbb{E}\big[p(t\wedge\tau_m)^q\big]\ \ge\ \mathbb{E}\big[p(0)^q\big]e^{q\kappa t}.
\]
Letting $m\to\infty$ yields the stated bound for $p(t\wedge\tau)$; the passage to the limit can be justified by uniform integrability using the polynomial moment bounds of Section~\ref{subsec:moments}. 

Write
\[
m(t):=\mathbb{E}\big[p(t\wedge\tau)^q\big]
=\mathbb{E}\big[p(\tau)^q\mathbf 1_{\{\tau\le t\}}\big]+\mathbb{E}\big[p(t)^q\mathbf 1_{\{\tau>t\}}\big].
\]
By \eqref{eq:dichotomy} we have \(m(t)\to\infty\) exponentially as \(t\to\infty\). Two cases:

If \(\mathbb{E}[p(\tau)^q]=\infty\) then the second alternative (exit-moment divergence) holds and we are done.

If \(\mathbb{E}[p(\tau)^q]<\infty\) then the first term
\(\mathbb{E}\big[p(\tau)^q\mathbf 1_{\{\tau\le t\}}\big]\) is uniformly bounded in \(t\). Since \(m(t)\to\infty\), the second term must diverge:
\[
\lim_{t\to\infty}\mathbb{E}\big[p(t)^q\mathbf 1_{\{\tau>t\}}\big]=\infty,
\]
which is the stated alternative. This proves the dichotomy.

Finally, the conditional version is immediate from (1) by dividing both sides by \(\mathbb{P}(\tau>t)\) and observing when $\mathbb{E}[p(\tau)^q]<\infty$,
\[
\mathbb{E}\big[p(t\wedge\tau)^q\big]
\le \mathbb{E}\big[p(t)^q\mathbf 1_{\{\tau>t\}}\big] + \mathbb{E}\big[p(\tau)^q\mathbf 1_{\{\tau\le t\}}\big]
\le \mathbb{E}\big[p(t)^q\mathbf 1_{\{\tau>t\}}\big] + \mathbb{E}\big[p(\tau)^q\big].
\]
\end{proof}

\begin{remark}[Mechanistic interpretation]\label{rmk:noR_interpretation}
In the full model \eqref{eq:cle_main}, dedifferentiation contributes the positive term $+\lambda_R(p)w$ to the stem drift. At a deterministic positive equilibrium, this dedifferentiation flux balances the net differentiation bias through the equalization identity (Section~\ref{sec:backbone}). When $\lambda_R\equiv 0$, this balancing mechanism is structurally absent, and the diffusion approximation exhibits either asymptotic extinction (when the net stem drift is uniformly negative) or moment divergence (when it is uniformly positive). In this sense, the dedifferentiation flux provides a mechanism to mitigate demographic fluctuations in finite niches.
\end{remark}

\section{Deterministic backbone for interpretation: structured PDE reduction and steady-state laws}
\label{sec:backbone}

Sections~\ref{sec:ctmc}--\ref{sec:noR} develop a mechanism-consistent stochastic description starting from a density-dependent CTMC and its diffusion approximation. In this section we introduce a deterministic ``backbone'' that serves two interpretive purposes. First, it provides a structured (McKendrick-von Foerster or damage-structured PDE) formulation that justifies the two-compartment description at the level of total masses.
Second, it yields explicit algebraic identities at positive equilibria (the ratio and equalization laws), which clarify how fate bias and turnover parameters must balance at homeostasis.
Importantly, the deterministic results in this section are used for interpretation and parameter constraints rather than as the primary driver of the stochastic analysis.

Throughout this section, we work with \emph{total-based} feedback maps and use superscripts to avoid confusion with the concentration feedback used in Sections~\ref{sec:ctmc}--\ref{sec:noR}. Specifically, we write
\[
p_1^{\mathrm{tot}},\,p_2^{\mathrm{tot}}:[0,\infty)\to[0,1],\qquad
\lambda_P^{\mathrm{tot}}:[0,\infty)\to(0,\infty),\qquad
\lambda_R^{\mathrm{tot}}:[0,\infty)\to[0,\infty),
\]
with $p_3^{\mathrm{tot}}(W)\coloneqq 1-p_1^{\mathrm{tot}}(W)-p_2^{\mathrm{tot}}(W)\ge 0$. These maps need not coincide with the concentration-feedback maps $(p_1,p_2,\lambda_P,\lambda_R)$ used in the stochastic layer.

\subsection{A damage-structured PDE model and exact reduction to totals}
\label{subsec:pde_to_ode}

We formulate a two-compartment lineage structured by a nonnegative damage load $x\in\mathbb{R}_+$. Let $S(t,x)$ denote the stem-cell density and $T(t,x)$ the TD density at time $t\ge 0$ and damage level $x\ge 0$. Define the compartment totals
\begin{equation}\label{eq:totals_def_backbone}
P(t)\coloneqq \int_0^\infty S(t,x)\,dx,\qquad 
W(t)\coloneqq \int_0^\infty T(t,x)\,dx.
\end{equation}
Damage accumulation is represented by nonnegative transport velocities $v_S,v_T:\mathbb{R}_+\to[0,\infty)$. Regulatory feedback is assumed to act through the totals $P(t)$ and $W(t)$ via the maps
\[
\lambda_P^{\mathrm{tot}}(W),\qquad \lambda_R^{\mathrm{tot}}(P),\qquad
p_1^{\mathrm{tot}}(W),\qquad p_2^{\mathrm{tot}}(W),
\]
and we fix a constant TD death rate $\delta_0>0$. The structured PDE system is
\begin{equation}\label{eq:PDE_system_backbone}
\begin{cases}
\partial_t S(t,x) + \partial_x\!\big(v_S(x)\,S(t,x)\big)
= -\lambda_P^{\mathrm{tot}}(W(t))\,S(t,x) + \lambda_R^{\mathrm{tot}}(P(t))\,T(t,x),
\\[4pt]
\partial_t T(t,x) + \partial_x\!\big(v_T(x)\,T(t,x)\big)
= -\big(\delta_0+\lambda_R^{\mathrm{tot}}(P(t))\big)\,T(t,x),
\end{cases}
\qquad t>0,\ x>0.
\end{equation}
Division is assumed to occur at rate $\lambda_P^{\mathrm{tot}}(W(t))$ independent of damage. Conditional on division, the outcome is symmetric self-renewal with probability $p_1^{\mathrm{tot}}(W(t))$, symmetric differentiation with probability $p_2^{\mathrm{tot}}(W(t))$, and asymmetric division with probability $p_3^{\mathrm{tot}}(W(t))$. We postulate that division resets newborn cell damage to $x=0$. Under this reset-to-zero assumption, the renewal boundary conditions at $x=0$ are
\begin{equation}\label{eq:renewal_bc_backbone}
\begin{cases}
v_S(0)\,S(t,0)
= \big(1+p_1^{\mathrm{tot}}(W(t))-p_2^{\mathrm{tot}}(W(t))\big)\,\lambda_P^{\mathrm{tot}}(W(t))\,P(t),
\\[4pt]
v_T(0)\,T(t,0)
= \big(1-p_1^{\mathrm{tot}}(W(t))+p_2^{\mathrm{tot}}(W(t))\big)\,\lambda_P^{\mathrm{tot}}(W(t))\,P(t),
\end{cases}
\qquad t>0.
\end{equation}
We assume integrable initial data $S(0,\cdot)=S_0$, $T(0,\cdot)=T_0$ with $S_0,T_0\ge 0$ and impose the flux integrability condition
\begin{equation}\label{eq:flux_vanish_backbone}
v_S(\cdot)\,S(t,\cdot)\in L^1(\mathbb{R}_+),\qquad v_T(\cdot)\,T(t,\cdot)\in L^1(\mathbb{R}_+), \qquad \text{for a.e. } t>0,
\end{equation}
which implies
\begin{equation}\label{eq:vanishing_infty}
\lim_{x\to\infty}v_S(x)S(t,x)=\lim_{x\to\infty}v_T(x)T(t,x)=0
\end{equation}
along a subsequence.

We record a standard well-posedness statement; a proof in the $L^1$ semigroup / fixed-point framework is given in \Cref{app:pde_wellposed}. For the explicit construction and detailed calculations, we refer to \citep{liang_global_2025}.

\begin{theorem}[Well-posedness and positivity for the structured PDE]\label{thm:pde_wellposed_backbone}
Assume $S_0,T_0\in L^1(\mathbb{R}_+)\cap L^\infty(\mathbb{R}_+)$ with $S_0,T_0\ge 0$ a.e., and assume $v_S,v_T\in W^{1,\infty}(\mathbb{R}_+)$ with $v_S,v_T\ge 0$. Suppose the feedback maps $p_1^{\mathrm{tot}},p_2^{\mathrm{tot}},\lambda_P^{\mathrm{tot}},\lambda_R^{\mathrm{tot}}$ are bounded and locally Lipschitz on $[0,\infty)$ with $p_1^{\mathrm{tot}}+p_2^{\mathrm{tot}}\le 1$. Then \eqref{eq:PDE_system_backbone}--\eqref{eq:renewal_bc_backbone} admits a unique mild solution
\[
(S,T)\in C([0,T_{\max});L^1(\mathbb{R}_+)^2),
\]
which is positivity preserving: $S(t,\cdot),T(t,\cdot)\ge 0$ a.e.\ for all $t<T_{\max}$. If \eqref{eq:flux_vanish_backbone} holds, then the totals \eqref{eq:totals_def_backbone} are absolutely continuous and satisfy the exact mass-balance ODE system \eqref{eq:totals_ode_backbone} below.
\end{theorem}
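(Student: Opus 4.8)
The plan is to read \eqref{eq:PDE_system_backbone}--\eqref{eq:renewal_bc_backbone} as a semilinear Cauchy problem on the Banach space $E \coloneqq L^1(\mathbb{R}_+)^2$ and to solve it by a contraction-mapping argument, with the detailed semigroup estimates relegated to \Cref{app:pde_wellposed} (cf.\ \citep{liang_global_2025}). First I would isolate the linear part: for \emph{frozen} continuous coefficients, the transport operators $u \mapsto -\partial_x(v_S u)$ and $u \mapsto -\partial_x(v_T u)$ on $L^1(\mathbb{R}_+)$, together with the inhomogeneous renewal boundary data at $x=0$, are handled by the method of characteristics. Since $v_S,v_T \in W^{1,\infty}(\mathbb{R}_+)$ are Lipschitz and nonnegative, the characteristic flow $\dot X = v_\bullet(X)$ is globally defined and monotone, and each density splits into (a) the initial datum transported along characteristics with the multiplicative attenuation $\exp(-\int (v_\bullet'(\cdot) + \text{removal rate})\,ds)$ and (b) the contribution of cells injected at $x=0$ at earlier times at the renewal rate prescribed by \eqref{eq:renewal_bc_backbone}; the $S$-equation additionally carries a Duhamel term with the dedifferentiation source $\lambda_R^{\mathrm{tot}}(P)T$. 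Because $T$ is decoupled from $S$ once $P$ is known, one solves the $T$-component first and then $S$.

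Next I would set up the fixed-point map on $C([0,T_0];\mathbb{R}_+^2)$ for the pair of totals $(P,W)$ (equivalently on $C([0,T_0];E)$): given an input $(P,W)$, form the frozen feedback coefficients $\lambda_P^{\mathrm{tot}}(W(\cdot))$, $\lambda_R^{\mathrm{tot}}(P(\cdot))$, $p_1^{\mathrm{tot}}(W(\cdot))$, $p_2^{\mathrm{tot}}(W(\cdot))$, solve the resulting linear problem via the representation above, and return the recomputed totals \eqref{eq:totals_def_backbone}. Using the \emph{boundedness} of the feedback maps to get uniform $L^1$ bounds on a ball, their \emph{local Lipschitz} continuity to estimate differences of the frozen coefficients, and the $L^1$-contractivity of the transport flow (mass is relocated or removed along characteristics but never created), a Gr\"onwall estimate makes this map a strict contraction on a small ball for $T_0$ small, so Banach's theorem yields a unique local mild solution $(S,T)\in C([0,T_{\max});E)$. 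Integrating the equations in $x$ (the computation carried out below) and using that all rates are bounded gives $P(t)+W(t)\le (P(0)+W(0))\,e^{Ct}$, so the $E$-norm cannot blow up in finite time and the solution in fact continues to all $t\ge 0$; I would nevertheless leave $T_{\max}$ in the statement as a placeholder for the continuation time. Positivity is built into the iteration: the cone $E_+ = L^1(\mathbb{R}_+;[0,\infty))^2$ is invariant under the solution map, since the transported initial data are nonnegative, the attenuation factors are positive, the Duhamel source $\lambda_R^{\mathrm{tot}}(P)T$ is nonnegative once $T\ge 0$, and the renewal flux in \eqref{eq:renewal_bc_backbone} is nonnegative because $p_1^{\mathrm{tot}}+p_2^{\mathrm{tot}}\le 1$ forces both $1+p_1^{\mathrm{tot}}-p_2^{\mathrm{tot}}\ge 0$ and $1-p_1^{\mathrm{tot}}+p_2^{\mathrm{tot}}\ge 0$; hence the fixed point lies in $C([0,T_{\max});E_+)$.

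Finally, and this is where hypothesis \eqref{eq:flux_vanish_backbone} is used, I would derive the totals ODE \eqref{eq:totals_ode_backbone}. Testing the mild formulation against the constant function $\mathbf{1}\in L^\infty(\mathbb{R}_+)=L^1(\mathbb{R}_+)^*$ --- rigorously, first for smooth compactly supported data, for which the representation gives a classical solution amenable to integration by parts in $x$, and then by density and continuity of the solution map on $E$ --- one discards the flux at $x=+\infty$ via \eqref{eq:vanishing_infty} and replaces the flux at $x=0$ by the renewal data \eqref{eq:renewal_bc_backbone}, obtaining
\begin{align*}
\dot P(t) &= v_S(0)S(t,0) - \lambda_P^{\mathrm{tot}}(W(t))\,P(t) + \lambda_R^{\mathrm{tot}}(P(t))\,W(t)\\
&= \big(p_1^{\mathrm{tot}}(W(t))-p_2^{\mathrm{tot}}(W(t))\big)\lambda_P^{\mathrm{tot}}(W(t))\,P(t) + \lambda_R^{\mathrm{tot}}(P(t))\,W(t),
\end{align*}
and likewise $\dot W(t) = \big(1-p_1^{\mathrm{tot}}(W(t))+p_2^{\mathrm{tot}}(W(t))\big)\lambda_P^{\mathrm{tot}}(W(t))P(t) - \big(\delta_0+\lambda_R^{\mathrm{tot}}(P(t))\big)W(t)$, which is precisely \eqref{eq:totals_ode_backbone} and mirrors the drift \eqref{eq:drift_explicit_ctmc} with total-based feedback. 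Since the right-hand sides are continuous in $t$, $P$ and $W$ are $C^1$, hence absolutely continuous.

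The main obstacle I expect is the \emph{double nonlocality}: the dependence on $(P,W)$ enters both the transport/reaction coefficients and the injection flux at $x=0$, so the contraction estimate must control a boundary-trace term in addition to interior terms, and it is here that $v_S,v_T\in W^{1,\infty}$ with $v_\bullet\ge 0$ (keeping characteristics and their Jacobians well-behaved up to $x=0$) is essential. A secondary technical point is the rigorous passage from the mild formulation to the \emph{pointwise-in-time} totals ODE: a mild solution need not be differentiable in $x$, so the trace $v_S(0)S(t,0)$ and the vanishing \eqref{eq:vanishing_infty} are not automatic and must either be encoded in the solution concept or extracted from the representation formula --- which is exactly why the flux-integrability hypothesis \eqref{eq:flux_vanish_backbone} appears explicitly in the statement.
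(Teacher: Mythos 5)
Your proposal follows essentially the same route as the paper's Appendix~\ref{app:pde_wellposed}: freeze the totals, solve the linear transport-with-renewal problem by characteristics, run a contraction argument on the path $(P,W)$ (equivalently on $C([0,T];L^1)$) for small time, inherit positivity from the positive linear semigroup and the nonnegative renewal fluxes, continue globally via the Gr\"onwall bound on $P+W$, and obtain the totals ODE by integrating in $x$ using \eqref{eq:flux_vanish_backbone} and \eqref{eq:renewal_bc_backbone}. The minor refinements you add (solving $T$ before $S$, and the density/duality argument justifying the passage from the mild formulation to the pointwise totals ODE) are consistent with, and slightly sharpen, the paper's sketch.
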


\medskip
\noindent\textbf{Exact reduction to totals.}
Integrating \eqref{eq:PDE_system_backbone} over $x\in(0,\infty)$, using the vanishing flux at infinity \eqref{eq:vanishing_infty} and the renewal boundary conditions \eqref{eq:renewal_bc_backbone}, yields the closed two-dimensional system for the totals:
\begin{equation}\label{eq:totals_ode_backbone}
\begin{cases}
P'(t)=\big(p_1^{\mathrm{tot}}(W(t))-p_2^{\mathrm{tot}}(W(t))\big)\lambda_P^{\mathrm{tot}}(W(t))\,P(t)
+\lambda_R^{\mathrm{tot}}(P(t))\,W(t),
\\[6pt]
W'(t)=\big(1-p_1^{\mathrm{tot}}(W(t))+p_2^{\mathrm{tot}}(W(t))\big)\lambda_P^{\mathrm{tot}}(W(t))\,P(t)
-\big(\delta_0+\lambda_R^{\mathrm{tot}}(P(t))\big)\,W(t).
\end{cases}
\end{equation}
Summing the two equations gives the total population balance
\begin{equation}\label{eq:total_balance_backbone}
\frac{d}{dt}\big(P(t)+W(t)\big)=\lambda_P^{\mathrm{tot}}(W(t))\,P(t)-\delta_0\,W(t),
\end{equation}
reflecting that each division event produces one net cell at the population level, whereas TD death removes one cell.

\subsection{Equilibrium identities: equalization and ratio laws}
\label{subsec:equilibrium_identities}

We record the algebraic identities satisfied by any \emph{positive} equilibrium of the totals ODE \eqref{eq:totals_ode_backbone}. These identities are purely structural: they do not require global stability assumptions and therefore provide robust, interpretable constraints that can be compared across deterministic and stochastic regimes.

Let $(P^*,W^*)\in(0,\infty)^2$ be an equilibrium of \eqref{eq:totals_ode_backbone}, i.e.,
\begin{equation}\label{eq:ss_identities}
\begin{cases}
0=\big(p_1^{\mathrm{tot}}(W^*)-p_2^{\mathrm{tot}}(W^*)\big)\lambda_P^{\mathrm{tot}}(W^*)\,P^*
+\lambda_R^{\mathrm{tot}}(P^*)\,W^*,\\[6pt]
0=\big(1-p_1^{\mathrm{tot}}(W^*)+p_2^{\mathrm{tot}}(W^*)\big)\lambda_P^{\mathrm{tot}}(W^*)\,P^*
-\big(\delta_0+\lambda_R^{\mathrm{tot}}(P^*)\big)\,W^*.
\end{cases}
\end{equation}
Writing \eqref{eq:ss_identities} as a homogeneous linear system in $(P^*,W^*)$ with coefficients evaluated at $(P^*,W^*)$, we obtain
\begin{equation}\label{eq:matrix_form_identities}
\mathbf{M}(P^*,W^*)\binom{P^*}{W^*}=0,\qquad
\mathbf{M}(P,W)\coloneqq
\begin{pmatrix}
(p_1^{\mathrm{tot}}(W)-p_2^{\mathrm{tot}}(W))\lambda_P^{\mathrm{tot}}(W) & \lambda_R^{\mathrm{tot}}(P)\\[4pt]
(1-p_1^{\mathrm{tot}}(W)+p_2^{\mathrm{tot}}(W))\lambda_P^{\mathrm{tot}}(W) & -(\delta_0+\lambda_R^{\mathrm{tot}}(P))
\end{pmatrix}.
\end{equation}
Since $(P^*,W^*)\neq (0,0)$, the coefficient matrix must be singular. Assuming $\lambda_P^{\mathrm{tot}}(W^*)>0$ (as in Section~\ref{subsec:pde_to_ode}), we compute
\begin{align}
0
=\det\mathbf{M}(P^*,W^*)
&=-\lambda_P^{\mathrm{tot}}(W^*)\Big((p_1^{\mathrm{tot}}(W^*)-p_2^{\mathrm{tot}}(W^*))\delta_0+\lambda_R^{\mathrm{tot}}(P^*)\Big),
\label{eq:det_zero_identities}
\end{align}
which yields the \emph{equalization law}
\begin{equation}\label{eq:equalization_law}
\lambda_R^{\mathrm{tot}}(P^*)=\big(p_2^{\mathrm{tot}}(W^*)-p_1^{\mathrm{tot}}(W^*)\big)\delta_0.
\end{equation}
In particular, since $\delta_0>0$ and $\lambda_R^{\mathrm{tot}}\ge 0$, \eqref{eq:equalization_law} implies the necessary sign constraint
\begin{equation}\label{eq:sign_constraint_identities}
p_2^{\mathrm{tot}}(W^*)\ge p_1^{\mathrm{tot}}(W^*).
\end{equation}

Substituting \eqref{eq:equalization_law} into either equation of \eqref{eq:ss_identities} yields the \emph{ratio law}. Indeed, substituting into the first equation gives
\[
\big(p_1^{\mathrm{tot}}(W^*)-p_2^{\mathrm{tot}}(W^*)\big)\Big(\lambda_P^{\mathrm{tot}}(W^*)P^*-\delta_0W^*\Big)=0,
\]
and substituting into the second equation gives
\[
\big(1-p_1^{\mathrm{tot}}(W^*)+p_2^{\mathrm{tot}}(W^*)\big)\Big(\lambda_P^{\mathrm{tot}}(W^*)P^*-\delta_0W^*\Big)=0.
\]
Since $1-p_1^{\mathrm{tot}}(W^*)+p_2^{\mathrm{tot}}(W^*)\ge 0$ and cannot vanish simultaneously with $p_1^{\mathrm{tot}}(W^*)-p_2^{\mathrm{tot}}(W^*)$, we conclude that
\begin{equation}\label{eq:ratio_law}
\frac{P^*}{W^*}=\frac{\delta_0}{\lambda_P^{\mathrm{tot}}(W^*)}.
\end{equation}

We summarize \eqref{eq:equalization_law} and \eqref{eq:ratio_law} as follows.

\begin{lemma}[Equalization and ratio identities]\label{lem:eq_ratio_laws}
Assume $\delta_0>0$ and $\lambda_P^{\mathrm{tot}}(W)>0$ for all $W\ge 0$. A point $(P^*,W^*)\in(0,\infty)^2$ is an equilibrium of \eqref{eq:totals_ode_backbone} if and only if it satisfies
\begin{equation}\label{eq:two_laws}
\lambda_R^{\mathrm{tot}}(P^*)=\big(p_2^{\mathrm{tot}}(W^*)-p_1^{\mathrm{tot}}(W^*)\big)\delta_0,
\qquad
\frac{P^*}{W^*}=\frac{\delta_0}{\lambda_P^{\mathrm{tot}}(W^*)}.
\end{equation}
In particular, \eqref{eq:sign_constraint_identities} is necessary for the existence of a positive equilibrium when $\lambda_R^{\mathrm{tot}}\ge 0$.
\end{lemma}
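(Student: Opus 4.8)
The plan is to prove the two implications separately, treating the equilibrium system \eqref{eq:ss_identities} as a homogeneous linear system in the pair $(P^*,W^*)$ whose coefficients are the \emph{numerical} (not functional) values obtained by freezing the feedback maps at the equilibrium point $(P^*,W^*)$.

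For the forward direction (equilibrium $\Rightarrow$ the two laws), I would argue exactly as in the display computation preceding the lemma. Since $(P^*,W^*)\in(0,\infty)^2$ is a nonzero element of the kernel of $\mathbf{M}(P^*,W^*)$, that $2\times2$ matrix must be singular, so $\det\mathbf{M}(P^*,W^*)=0$. Expanding the determinant and using $\lambda_P^{\mathrm{tot}}(W^*)>0$ to cancel the common factor yields $\big(p_1^{\mathrm{tot}}(W^*)-p_2^{\mathrm{tot}}(W^*)\big)\delta_0+\lambda_R^{\mathrm{tot}}(P^*)=0$, which rearranges to the equalization law. Substituting this relation back into the first row of \eqref{eq:ss_identities} gives $\big(p_1^{\mathrm{tot}}(W^*)-p_2^{\mathrm{tot}}(W^*)\big)\big(\lambda_P^{\mathrm{tot}}(W^*)P^*-\delta_0 W^*\big)=0$, and substituting into the second row gives the analogous identity with prefactor $1-p_1^{\mathrm{tot}}(W^*)+p_2^{\mathrm{tot}}(W^*)$. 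The one point to state carefully is that these two prefactors sum to $1$ and hence cannot vanish simultaneously; therefore the common second factor $\lambda_P^{\mathrm{tot}}(W^*)P^*-\delta_0 W^*$ must be zero, and dividing by $W^*>0$ gives the ratio law. The sign constraint \eqref{eq:sign_constraint_identities} then follows immediately from the equalization law, since $\lambda_R^{\mathrm{tot}}(P^*)\ge 0$ and $\delta_0>0$ force $p_2^{\mathrm{tot}}(W^*)-p_1^{\mathrm{tot}}(W^*)\ge 0$.

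For the converse (the two laws $\Rightarrow$ equilibrium), I would substitute the equalization law into each right-hand side of \eqref{eq:totals_ode_backbone} evaluated at $(P^*,W^*)$: both right-hand sides collapse to a common factor times $\big(\lambda_P^{\mathrm{tot}}(W^*)P^*-\delta_0 W^*\big)$, namely $p_1^{\mathrm{tot}}(W^*)-p_2^{\mathrm{tot}}(W^*)$ for the $P$-equation and $1-p_1^{\mathrm{tot}}(W^*)+p_2^{\mathrm{tot}}(W^*)$ for the $W$-equation, and this factor vanishes by the ratio law. Hence $P'=W'=0$ at $(P^*,W^*)$, so it is an equilibrium, completing the equivalence.

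There is no genuine obstacle here; the content is elementary linear algebra together with a back-substitution. The only subtleties worth flagging are (a) that the coefficient matrix must be evaluated at the fixed point so that the kernel argument concerns a constant matrix, and (b) the non-simultaneous-vanishing observation needed to pass from $\det\mathbf{M}(P^*,W^*)=0$ to the ratio law. Both are already visible in the computations preceding the statement, so the proof amounts to a clean restatement of those steps plus the short converse verification.
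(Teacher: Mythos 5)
Your proposal is correct and follows essentially the same route as the paper: necessity via singularity of the frozen coefficient matrix $\mathbf{M}(P^*,W^*)$ and back-substitution (with the observation that the two prefactors cannot vanish simultaneously, which your "they sum to $1$" remark makes explicit), and sufficiency by substituting the two laws back into \eqref{eq:totals_ode_backbone}. Your write-up of the converse is in fact slightly more explicit than the paper's one-line verification, but the argument is the same.
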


\begin{proof}
Necessity was established above by the singularity argument. For sufficiency, assume \eqref{eq:two_laws} holds. Substituting these identities into \eqref{eq:totals_ode_backbone} shows that both steady-state equations in \eqref{eq:ss_identities} are satisfied, hence $(P^*,W^*)$ is an equilibrium of \eqref{eq:totals_ode_backbone}.
\end{proof}

\begin{remark}[Degenerate cases]\label{rmk:degenerate_backbone}
If $p_1^{\mathrm{tot}}(W^*)=p_2^{\mathrm{tot}}(W^*)$, then \eqref{eq:equalization_law} forces $\lambda_R^{\mathrm{tot}}(P^*)=0$. Thus balanced symmetric outcomes at equilibrium are compatible only with a vanishing dedifferentiation flux at that equilibrium. Conversely, if $p_2^{\mathrm{tot}}(W)<p_1^{\mathrm{tot}}(W)$ for all $W\ge 0$, then \eqref{eq:equalization_law} cannot hold with $\lambda_R^{\mathrm{tot}}(P^*)\ge 0$, and no positive equilibrium exists.
\end{remark}

\begin{remark}[Interpretation in relation to the stochastic layer]\label{rmk:backbone_interpretation}
The ratio law \eqref{eq:ratio_law} expresses the steady-state compartment ratio in terms of a death--division quotient, while the equalization law \eqref{eq:equalization_law} identifies the dedifferentiation flux required to offset net differentiation bias. Although the stochastic layer (Sections~\ref{sec:ctmc}--\ref{sec:noR}) uses concentration feedback and does not assume this total-based structure, these identities provide a useful interpretive lens. In particular, removing dedifferentiation eliminates the balancing mechanism encoded by \eqref{eq:equalization_law}, consistent with the no-dedifferentiation pathology of Section~\ref{sec:noR}.
\end{remark}

\section{Deterministic dynamics: theorem regime versus threshold phenomenology}
\label{sec:det_dynamics}

The stochastic results in Sections~\ref{sec:ctmc}--\ref{sec:noR} do not require global stability of the deterministic backbone. Nevertheless, deterministic phase-space structure is useful for interpretation and for organizing numerical experiments. In particular, the total ODE \eqref{eq:totals_ode_backbone} can exhibit qualitatively distinct behaviors depending on the feedback architecture. A key point for clarity (and for consistency with the numerical section) is that our deterministic analysis separates two regimes:
\begin{itemize}
\item \emph{Regime A} (the theorem regime), where verifiable monotonicity and sensitivity conditions provide \emph{sufficient} criteria for uniqueness and global convergence to a positive equilibrium;
\item \emph{Regime B} (outside the theorem regime), where bistability, separatrices, and Allee-type thresholds can occur \citep{hart_reconstructing_2014}.
\end{itemize}
This separation resolves an otherwise common source of confusion: numerical examples exhibiting saddle equilibria and threshold dynamics do not contradict Regime~A results, because they arise precisely when the sufficient hypotheses of the theorem are violated. 

Throughout this section we work with the totals ODE \eqref{eq:totals_ode_backbone} and total-based feedback maps
$p_i^{\mathrm{tot}},\lambda_P^{\mathrm{tot}},\lambda_R^{\mathrm{tot}}$ introduced in Section~\ref{sec:backbone}.

\subsection{Regime A: global convergence under monotone feedback (theorem regime)}
\label{subsec:regimeA}

We impose the following standing hypotheses. They are used to ensure global convergence in the theorem regime.

\begin{assumption}[Standing assumptions for the deterministic ODE layer]\label{ass:parameter}
Let $\delta_0>0$ be fixed. Assume that
\[
p_1^{\mathrm{tot}},p_2^{\mathrm{tot}}:[0,\infty)\to[0,1],\qquad \lambda_P^{\mathrm{tot}}:[0,\infty)\to(0,\infty),\qquad \lambda_R^{\mathrm{tot}}:[0,\infty)\to[0,\infty)
\]
are continuously differentiable and locally Lipschitz on $[0,\infty)$. Define $p_3(W)\coloneqq 1-p_1(W)-p_2(W)$.

\begin{enumerate}[label=\textnormal{(A\arabic*)}]
\item \textnormal{(Division outcome probabilities and admissibility).}
The maps $p_1^{\mathrm{tot}},p_2^{\mathrm{tot}}:[0,\infty) \to [0,1]$ satisfy 
\begin{equation}\label{eq:D1_simplex}
p_1^{\mathrm{tot}}(W)+p_2^{\mathrm{tot}}(W)\le 1,\qquad \forall\, W\ge 0.
\end{equation}
Moreover, 
\[
\frac{dp_1^{\mathrm{tot}}}{dW}<0,\qquad \frac{dp_2^{\mathrm{tot}}}{dW}<0,\qquad \forall\,W\ge 0.
\]
Moreover, we assume $p_1^{\mathrm{tot}}(0)>0$ and $p_2^{\mathrm{tot}}(0)>0$.

\item \textnormal{(Feedback rates and boundedness).}
The division rate $\lambda_P^{\mathrm{tot}}$ is bounded and strictly positive on $[0,\infty)$:
\begin{equation}\label{eq:D2_lambdaP}
0<\lambda_{P,\min}\le \lambda_P^{\mathrm{tot}}(W)\le \lambda_{P,\max}<\infty,\qquad \forall\,W\ge 0.
\end{equation}

The map $\lambda_R$ is bounded on $[0,\infty)$:
\begin{equation}\label{eq:D3_lambdaR}
0\le \lambda_R^{\mathrm{tot}}(P)\le \lambda_{R,\max}<\infty,\qquad \forall\,P\ge 0.
\end{equation}

\item \textnormal{Net differentiation bias and compensatory sensitivity.}
There is a persistent net differentiation bias at homeostatic TD levels and a compensatory sensitivity ordering:
\[
p_2^{\mathrm{tot}}(W)>p_1^{\mathrm{tot}}(W),\qquad 
\frac{dp_2^{\mathrm{tot}}}{dW}<\frac{dp_1^{\mathrm{tot}}}{dW}<0,\qquad \forall\,W\ge 0.
\]
Equivalently, differentiation is (in magnitude) more sensitive to TD feedback than self-renewal, while differentiation bias $p_2^{\mathrm{tot}}-p_1^{\mathrm{tot}}$ remains positive.
\end{enumerate}
\end{assumption}
The assumptions above translate established biological regulatory mechanisms into mathematical constraints on the feedback functions.

Assumption (A1) encodes the regulation of cell fate by tissue crowding. Biological evidence indicates that differentiated cells secrete negative feedback regulators that inhibit stem cell proliferation \citep{wu_autoregulation_2003,daluiski_bone_2001,mcpherron_regulation_1997,tzeng_loss_2011,yamasaki_keratinocyte_2003}.
We model this by requiring that an increasing TD load $W$ suppresses both symmetric outcomes ($\dfrac{dp_1^{\mathrm{tot}}}{dW},\dfrac{dp_2^{\mathrm{tot}}}{dW}<0$). Consequently, the system shifts toward asymmetric division ($\dfrac{dp_3^{\mathrm{tot}}}{dW}>0$) as the tissue becomes crowded. 
This reflects a homeostatic strategy: asymmetric division acts as a robust, default mode that maintains the stem pool size without expanding the stem population or differentiated population, a behavior observed in homeostatic stem cell niches \citep{morrison_asymmetric_2006}. 

Assumption (A2) ensures the system is physically realistic and well-posed. The division rate $\lambda_P(W)$ remains positive to allow recovery from injury. Similarly, $\lambda_R^{\mathrm{tot}}(P)$ is bounded, consistent with the limited plasticity of TD cells.

Assumption (A3) is crucial for the stability of the lineage and imposes a two-fold structure on the population dynamics:
\begin{enumerate}[label=(\roman*)]
    \item \textnormal{Net differentiation bias ($p_2^{\mathrm{tot}}>p_1^{\mathrm{tot}}$):} Under normal conditions, the probability of differentiation exceeds that of self-renewal. This creates a new loss on the stem cell pool, ensuring that the lineage does not expand uncontrollably. This loss is balanced at equilibrium by the dedifferentiation flux $\lambda_R$. 
    This condition is in tune with the result in \citep{lo_feedback_2009,doumic_structured_2011} that the proliferation probability of stem cells being smaller than their differentiation probability maintains tissue homeostasis. 
    \item \textnormal{Compensatory sensitivity ($\bigg|\dfrac{dp_2^{\mathrm{tot}}}{dW}\bigg|>\bigg|\dfrac{dp_1^{\mathrm{tot}}}{dW}\bigg|$):} The condition $\dfrac{dp_2^{\mathrm{tot}}}{dW} < \dfrac{dp_1^{\mathrm{tot}}}{dW} < 0$ implies that differentiation is more sensitive to feedback inhibition than self-renewal. As TD load $W$ increases, differentiation shuts down more rapidly than self-renewal. This differential sensitivity preserves the stem cell pool under high regulatory loads, preventing lineage extinction while limiting TD accumulation.
\end{enumerate}

We do not impose a monotonicity constraint for $\lambda_R$. A naturally monotonicity condition for dedifferentiation is 
\begin{equation}\label{eq:lambdaRdecreasing}
\frac{d\lambda_R^{\mathrm{tot}}}{dP}<0,\qquad \forall\,P\ge 0.
\end{equation}
This is biologically motivated by the mechanisms that the direct contact of TD cells with stem cells will suppress their dedifferentiation. It is natural to treat dedifferentiation as a regenerative feedback triggered by stem depletion, so dedifferentiation is stronger when the stem pool is small and is suppressed under stem abundance.
Experimental results even show that the dedifferentiation of mouse secretory cells is prevented by a single stem cell \citep{tata_dedifferentiation_2013}. This suppression can be relieved only when the stem pool is completely ablated \citep{guo_dedifferentiation_2022}. 
However, contrary to the above suppression effect, stem cells have also been documented to synergistically promote dedifferentiation \citep{jiang_involvement_2015}. Cells in proximity to procambium cells, a kind of stem cell-like cells, can dedifferentiate more easily than other cell types during callus formation in Arabidopsis petioles. 

We state a sufficient condition ensuring that the positive equilibrium of \eqref{eq:totals_ode_backbone}, when it exists, is unique and attracts all bounded trajectories in $(0,\infty)^2$. The proof uses a Bendixson--Dulac argument to exclude periodic and homoclinic orbits in the positive orthant.

\begin{theorem}[Uniqueness and global convergence in the theorem regime]\label{thm:global_det_regimeA}
Assume that \eqref{eq:totals_ode_backbone} admits at least one equilibrium $(P^*,W^*)\in(0,\infty)^2$. Suppose that, for all $P,W>0$,
\begin{equation}\label{eq:regimeA_signs}
\big(\lambda_P^{\mathrm{tot}}\big)'(W)<0,
\qquad
0<\big(\lambda_R^{\mathrm{tot}}\big)'(P)\le \frac{\lambda_R^{\mathrm{tot}}(P)}{P}.
\end{equation}
Then the equilibrium in $(0,\infty)^2$ is unique. Moreover, every bounded solution of \eqref{eq:totals_ode_backbone} with initial condition $(P(0),W(0))\in (0,\infty)^2$ satisfies
\[
(P(t),W(t))\longrightarrow (P^*,W^*)
\qquad \text{as } t\to\infty.
\]
\end{theorem}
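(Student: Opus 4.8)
The plan is to combine the algebraic equilibrium characterisation of \Cref{lem:eq_ratio_laws} with a Bendixson--Dulac test and the Poincar\'e--Bendixson theorem. For uniqueness, recall from \Cref{lem:eq_ratio_laws} that any equilibrium $(P^*,W^*)\in U:=(0,\infty)^2$ obeys the ratio law $P^*=\phi(W^*)$ with $\phi(W):=\delta_0 W/\lambda_P^{\mathrm{tot}}(W)$ and the equalization law $\lambda_R^{\mathrm{tot}}(\phi(W^*))=\delta_0\big(p_2^{\mathrm{tot}}(W^*)-p_1^{\mathrm{tot}}(W^*)\big)$. The hypothesis $(\lambda_P^{\mathrm{tot}})'<0$ makes $\phi$ a strictly increasing bijection of $[0,\infty)$, while $(\lambda_R^{\mathrm{tot}})'>0$ together with \Cref{ass:parameter}\,(A3) (which gives $p_2^{\mathrm{tot}}-p_1^{\mathrm{tot}}$ strictly decreasing) makes
\[
g(W):=\lambda_R^{\mathrm{tot}}\!\big(\phi(W)\big)-\delta_0\big(p_2^{\mathrm{tot}}(W)-p_1^{\mathrm{tot}}(W)\big)
\]
strictly increasing on $[0,\infty)$, hence with at most one zero; as a positive equilibrium is assumed to exist, it is unique (and necessarily $g(0)<0$).

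For the absence of closed invariant curves, use the Dulac multiplier $B(P,W)=1/(PW)>0$ on $U$ and write $F=(F_1,F_2)$ for the right-hand side of \eqref{eq:totals_ode_backbone}. Since $p_i^{\mathrm{tot}},\lambda_P^{\mathrm{tot}}$ depend only on $W$ and $\lambda_R^{\mathrm{tot}}$ only on $P$, a direct computation gives
\[
\partial_P(BF_1)+\partial_W(BF_2)=\frac{P\,(\lambda_R^{\mathrm{tot}})'(P)-\lambda_R^{\mathrm{tot}}(P)}{P^2}+\frac{W\,h'(W)-h(W)}{W^2},\qquad h(W):=\big(1-p_1^{\mathrm{tot}}(W)+p_2^{\mathrm{tot}}(W)\big)\lambda_P^{\mathrm{tot}}(W).
\]
The first summand is $\le 0$ by the hypothesis $(\lambda_R^{\mathrm{tot}})'(P)\le \lambda_R^{\mathrm{tot}}(P)/P$; the second is strictly negative because \Cref{ass:parameter}\,(A1) and (A3) give $h>0$ and $h'<0$ (the two contributions $(p_2^{\mathrm{tot}}-p_1^{\mathrm{tot}})'\lambda_P^{\mathrm{tot}}$ and $(1-p_1^{\mathrm{tot}}+p_2^{\mathrm{tot}})(\lambda_P^{\mathrm{tot}})'$ are both negative), so $W h'(W)-h(W)<0$. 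Hence $\partial_P(BF_1)+\partial_W(BF_2)<0$ throughout $U$, and by the Bendixson--Dulac criterion there are no periodic orbits and no homoclinic or heteroclinic cycles contained in $U$.

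To conclude, note that $U$ is forward invariant --- on $\{P=0\}$ one has $P'=\lambda_R^{\mathrm{tot}}(0)W\ge 0$, and on $\{W=0,\,P>0\}$ one has $W'=h(0)P>0$ --- and that the origin is the only equilibrium on $\partial U$. For a bounded orbit, the $\omega$-limit set $\mathcal W$ is a nonempty compact connected invariant subset of $[0,\infty)^2$; no point $(0,w)$ or $(p,0)$ with $p,w>0$ can lie in $\mathcal W$, since the backward orbit through such a point immediately exits $[0,\infty)^2$ (or is unbounded), contradicting invariance and compactness of $\mathcal W$. By the previous step and Poincar\'e--Bendixson, $\mathcal W$ is then a single equilibrium, so it equals either $(P^*,W^*)$ or the origin; once the origin is excluded we obtain $\mathcal W=\{(P^*,W^*)\}$, i.e.\ the asserted global convergence. (Consistently, evaluating at the interior equilibrium, $\partial_P(BF_1)+\partial_W(BF_2)<0$ forces $\mathrm{tr}\,DF(P^*,W^*)<0$, and a direct check using $g'(W^*)>0$ gives $\det DF(P^*,W^*)>0$, so $(P^*,W^*)$ is a sink.)

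The main obstacle is exactly this last step: ruling out that a bounded orbit in $U$ is attracted to the trivial equilibrium. Care is needed here because, under the stated hypotheses, $\det DF(0,0)=-\lambda_P^{\mathrm{tot}}(0)\,g(0)>0$ (using $g(0)<0$) and $\mathrm{tr}\,DF(0,0)<0$, so the origin is linearly stable; the argument must therefore exhibit a trapping region in $U$ avoiding a neighbourhood of the origin, or invoke a monotonicity/Lyapunov structure separating the basin of $(P^*,W^*)$ from $\partial U$, or else strengthen the feedback hypotheses near $W=0$ (e.g.\ requiring $p_1^{\mathrm{tot}}(0)\ge p_2^{\mathrm{tot}}(0)$, which turns the origin into a saddle and removes the obstruction). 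By contrast, the Bendixson--Dulac computation is mechanical once $B=1/(PW)$ is chosen; the substantive content of the theorem lies in this boundary/origin analysis and in the precise identification of the basin of $(P^*,W^*)$.
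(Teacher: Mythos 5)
Your proposal takes essentially the same route as the paper's proof: the uniqueness step (ratio law making $P^*$ a strictly increasing function of $W^*$, then a monotone crossing of the two sides of the equalization law) is exactly the paper's argument, and the paper likewise uses the Dulac multiplier $1/(PW)$ followed by Poincar\'e--Bendixson. Your explicit divergence computation, the sign analysis of $h(W)=\big(1-p_1^{\mathrm{tot}}+p_2^{\mathrm{tot}}\big)\lambda_P^{\mathrm{tot}}$, and the boundary-invariance checks are correct and are in fact more detailed than the paper's sketch.

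The step you leave open --- ruling out that a bounded interior orbit is attracted to the origin --- is precisely the step the paper's proof elides: its sketch jumps from ``every bounded trajectory converges to an equilibrium'' to ``converges to $(P^*,W^*)$'' using uniqueness of the \emph{interior} equilibrium, without addressing the boundary equilibrium at $(0,0)$. Your diagnosis of why this cannot be closed under the stated hypotheses is correct: existence of a positive equilibrium combined with the strict monotonicity forced by \eqref{eq:regimeA_signs} and Assumption~\ref{ass:parameter}(A3) gives $g(0)=\lambda_R^{\mathrm{tot}}(0)-\delta_0\big(p_2^{\mathrm{tot}}(0)-p_1^{\mathrm{tot}}(0)\big)<0$, hence $\det DF(0,0)=-\lambda_P^{\mathrm{tot}}(0)\,g(0)>0$ and $\mathrm{tr}\,DF(0,0)<0$, so the origin is locally asymptotically stable and attracts an open set of initial data inside $(0,\infty)^2$; those solutions are bounded yet do not converge to $(P^*,W^*)$. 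So the gap you identified lies in the paper's argument (and, read literally, in the statement itself), not in your treatment: the conclusion does hold for orbits whose closure is a compact subset of $(0,\infty)^2$ --- there your $\omega$-limit argument together with the Dulac step and uniqueness finishes the proof --- but for ``bounded'' in the usual sense one must either add a hypothesis that destabilizes the origin (e.g.\ controlling the sign of $\lambda_R^{\mathrm{tot}}(0)-\delta_0(p_2^{\mathrm{tot}}(0)-p_1^{\mathrm{tot}}(0))$, which conflicts with existence of the interior equilibrium under these monotonicity assumptions) or restrict the claim to the basin of $(P^*,W^*)$ / to solutions bounded away from the boundary. Flagging this explicitly, rather than papering over it, was the right call.
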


\begin{proof}[Proof sketch]
The equilibrium identities in Section~\ref{subsec:equilibrium_identities} imply that any positive equilibrium $(P^*,W^*)$ must satisfy the equalization and ratio laws. The ratio law \eqref{eq:ratio_law} allows us to express $P^*$ as a strictly increasing function of $W^*$:
\begin{equation}\label{eq:P_reexpress}
    P^* = \frac{\delta_0W^*}{\lambda_P^{\mathrm{tot}}(W)}.
\end{equation}
Plugging \eqref{eq:P_reexpress} into the equalization law \eqref{eq:equalization_law} yields the following relation:
\begin{equation}\label{eq:two-hand_side}
    (p_2^{\mathrm{tot}}(W^*)-p_1^{\mathrm{tot}}(W^*))\delta_0 = \lambda_R^{\mathrm{tot}}(P^*(W^*)).
\end{equation}
The two hand sides of \eqref{eq:two-hand_side} have the reverse monotonicity, ensuring a unique $W^*$. The monotonicity of $P^*$ in $W^*$ ensures a unique $P^*$, hence the uniqueness of the equilibrium $(P^*,W^*)$.

To preclude closed orbits in the planar system, choose the Dulac function $\phi(P,W)=1/(PW)$ on $(0,\infty)^2$ and compute
\[
\frac{\partial(\phi G_1)}{\partial P}+\frac{\partial(\phi G_2)}{\partial W},
\]
where $(G_1,G_2)$ is the vector field of \eqref{eq:totals_ode_backbone}. Under Assumption~\ref{ass:parameter} and \Cref{eq:regimeA_signs}, this divergence is strictly negative on $(0,\infty)^2$, so Dulac's criterion excludes periodic orbits. A standard Green's theorem argument then also excludes homoclinic loops. By the Poincar\'e--Bendixson theorem \citep{perko_differential_2001}, every bounded trajectory converges to an equilibrium; uniqueness implies convergence to $(P^*,W^*)$.
\end{proof}

\begin{remark}[Role in the paper]\label{rmk:regimeA_role}
Theorem~\ref{thm:global_det_regimeA} provides \emph{sufficient} and directly checkable conditions guaranteeing a globally attracting homeostatic equilibrium in the deterministic totals model. In Section~\ref{sec:numerics} we use a parameter set satisfying \eqref{eq:regimeA_signs} to illustrate global convergence (Numerical set~III-A), thereby separating theorem-consistent dynamics from the threshold phenomena discussed next.
\end{remark}

\subsection{Regime B: bistability, separatrices, and Allee-type thresholds outside the theorem regime}
\label{subsec:regimeB}

When the monotonicity and sensitivity conditions \eqref{eq:regimeA_signs} are violated, the planar totals system \eqref{eq:totals_ode_backbone} may exhibit richer phase portraits, including saddle equilibria, invariant manifolds, and bistability between extinction and a positive homeostatic state. Such threshold phenomena are biologically meaningful: they encode a critical-mass requirement and predict failure of recovery when the lineage is depleted beyond a separatrix. Furthermore, they are not specific to lineage models: in related population systems, algebraic--spectral conditions can organize existence windows of interior equilibria and delineate saddle--node thresholds, providing a reproducible bifurcation-level description of bistability and regime transitions \citep{wang_algebraicspectral_2026}.

In particular, our numerical examples (Section~\ref{sec:numerics}) exhibit a positive equilibrium with one stable and one unstable eigen-direction (a hyperbolic saddle) together with a separatrix that partitions the positive orthant into distinct basins of attraction. Trajectories on one side of the separatrix converge to the positive equilibrium, while trajectories on the other side collapse to extinction, producing an Allee-type effect in the deterministic dynamics.

\begin{remark}[Consistency with the theorem regime]\label{rmk:regimeB_consistency}
These bistable and Allee-type behaviors occur for parameter sets that violate the sufficient hypotheses of Theorem~\ref{thm:global_det_regimeA} (for instance, when $\big(\lambda_R^{\mathrm{tot}}\big)'(P)\le \lambda_R^{\mathrm{tot}}(P)/P$ fails, or when alternative sensitivity orderings are imposed). They therefore do \emph{not} contradict the conclusions of Regime~A. Instead, they delineate the boundary of the theorem regime and quantify the threshold risks that emerge when those sufficient stability conditions are not met, including collapse after strong depletion and recovery only above a critical mass.
\end{remark}

\begin{remark}[How Regime B interfaces with the stochastic layer]\label{rmk:regimeB_stochastic}
In the stochastic setting, demographic noise can induce rare transitions across deterministic separatrices and can amplify extinction risk in the low-density basin. Conversely, in the presence of dedifferentiation, stochastic trajectories initialized well inside the homeostatic basin may remain concentrated near the deterministic attractor for large system size. We return to these points in Section~\ref{sec:discussion}.
\end{remark}

\section{Numerical experiments}
\label{sec:numerics}

This section provides numerical evidence supporting the analytical results developed in Sections~\ref{sec:ctmc}--\ref{sec:det_dynamics}. The experiments are organized by theorem appearance order to avoid conflating deterministic stability claims with stochastic scaling statements. In particular:
\begin{enumerate}[label=(\roman*)]
\item Section~\ref{subsec:num_diffusion} validates the diffusion approximation and the $\Omega^{-1/2}$ fluctuation scaling predicted by the FCLT (\Cref{sec:diffusion,sec:stochastic_analysis}).
\item Section~\ref{subsec:num_noR} illustrates the no-dedifferentiation pathology (Section~\ref{sec:noR}) in subcritical and supercritical regimes.
\item Section~\ref{subsec:num_regimes} contrasts deterministic phase portraits in Regime~A (theorem regime) and Regime~B (threshold phenomenology), supporting the regime separation in Section~\ref{sec:det_dynamics}.
\end{enumerate}

All simulations were performed in Python (SciPy/NumPy). Stochastic trajectories of the diffusion model were generated by the Euler--Maruyama method with sufficiently small time steps; parameter values and initial conditions are stated in each subsection.

\subsection{Validation of the diffusion approximation and \texorpdfstring{$\Omega^{-1/2}$} scaling}
\label{subsec:num_diffusion}

We first validate the diffusion approximation derived from the density-dependent CTMC via the FCLT (Theorem~\ref{thm:fclt}). The objective here is not to assess global stability of the deterministic model, but rather to demonstrate two scaling features of the diffusion approximation:
\begin{enumerate}[label=\textnormal{(\roman*)}]
\item stochastic trajectories fluctuate around the mean-field solution, and
\item the fluctuation amplitude decreases as $\mathcal{O}(\Omega^{-1/2})$ as the system size increases.
\end{enumerate}

We use the same functional forms as in the text for the concentration-feedback maps:
\[
p_1(w)=\frac{0.3}{1+w/150},\qquad
p_2(w)=\frac{0.35}{1+w/150}+0.1,\qquad
\lambda_P(w)=\frac{2}{1+(w/20)^3},\qquad
\lambda_R(p)=\frac{0.1\,p}{50+p},
\]
with TD death rate $\delta_0=0.5$. We simulate the chemical Langevin diffusion \eqref{eq:cle_main} for two system sizes, $\Omega=50$ (higher noise) and $\Omega=1000$ (lower noise), and compare trajectories to the deterministic mean-field ODE \eqref{eq:mean_field_ode}.

\Cref{fig:mean_field} shows trajectories initiated from a low-density initial condition. The stochastic paths fluctuate around the deterministic trajectory, with visibly larger deviations at $\Omega=50$ than at $\Omega=1000$, consistent with the $\Omega^{-1/2}$ scaling. Importantly, this experiment is intended to validate the diffusion approximation and its scaling properties, not to establish deterministic stability properties of the underlying ODE.

\begin{figure}[htbp]
    \centering
    \includegraphics[width=119mm]{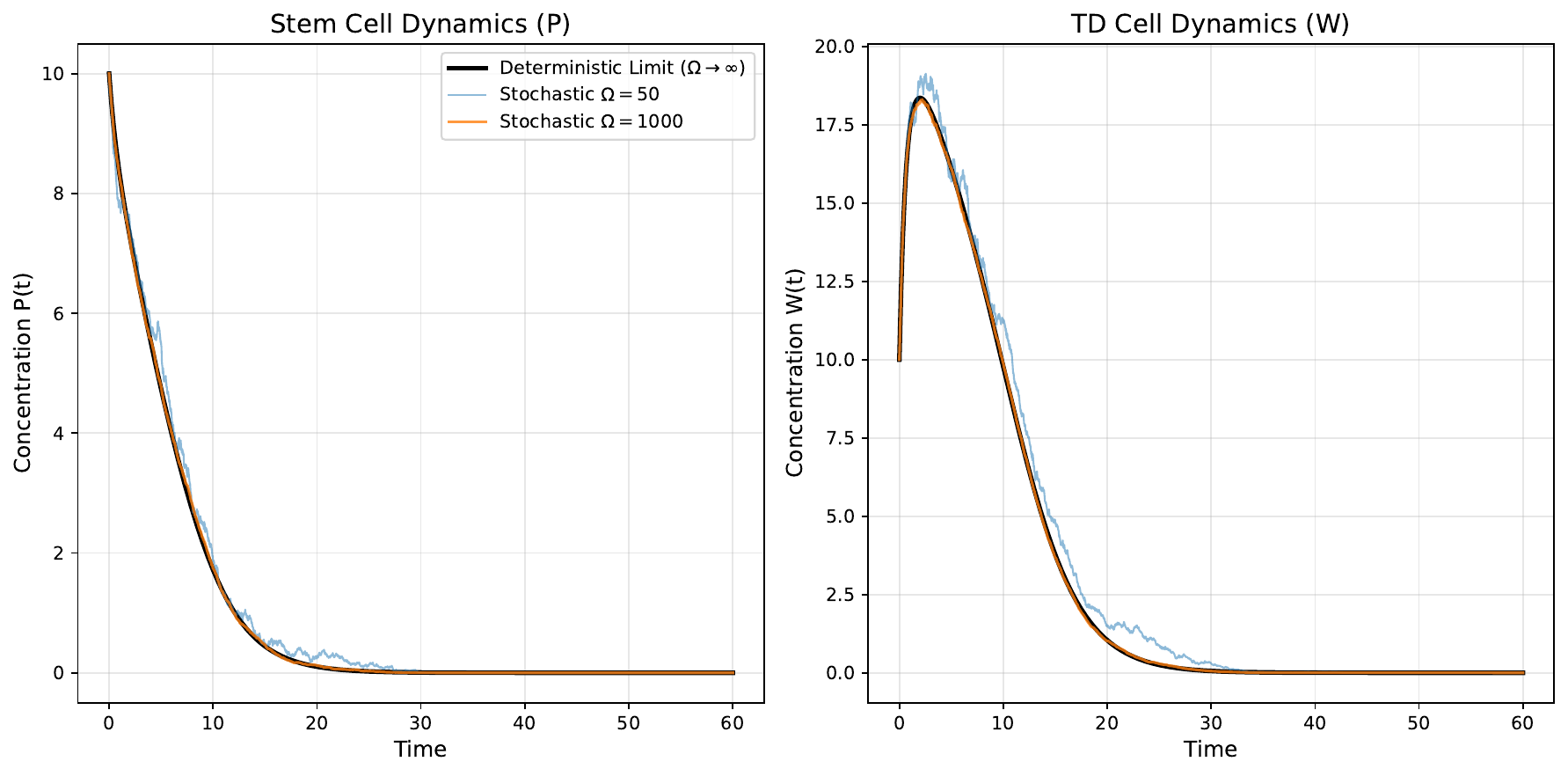}
    \caption{Diffusion approximation and system-size scaling. The black curve shows the deterministic mean-field trajectory. Colored curves show sample paths of the chemical Langevin diffusion \eqref{eq:cle_main} for $\Omega=50$ (higher noise) and $\Omega=1000$ (lower noise). The reduction in fluctuation amplitude with increasing $\Omega$ is consistent with the $\mathcal{O}(\Omega^{-1/2})$ scaling predicted by Theorem~\ref{thm:fclt}.}
    \label{fig:mean_field}
\end{figure}

To complement the low-density initialization, \Cref{fig:diffusion_approximation} shows simulations started closer to the homeostatic scale. Again, stochastic realizations remain clustered around the deterministic trajectory for $\Omega=1000$ and exhibit larger excursions for $\Omega=50$. Together, \Cref{fig:mean_field,fig:diffusion_approximation} corroborate that the diffusion approximation captures mean-field drift to leading order while quantifying demographic fluctuations through the mechanism-consistent covariance structure.

\begin{figure}[htbp]
    \centering
    \includegraphics[width=119mm]{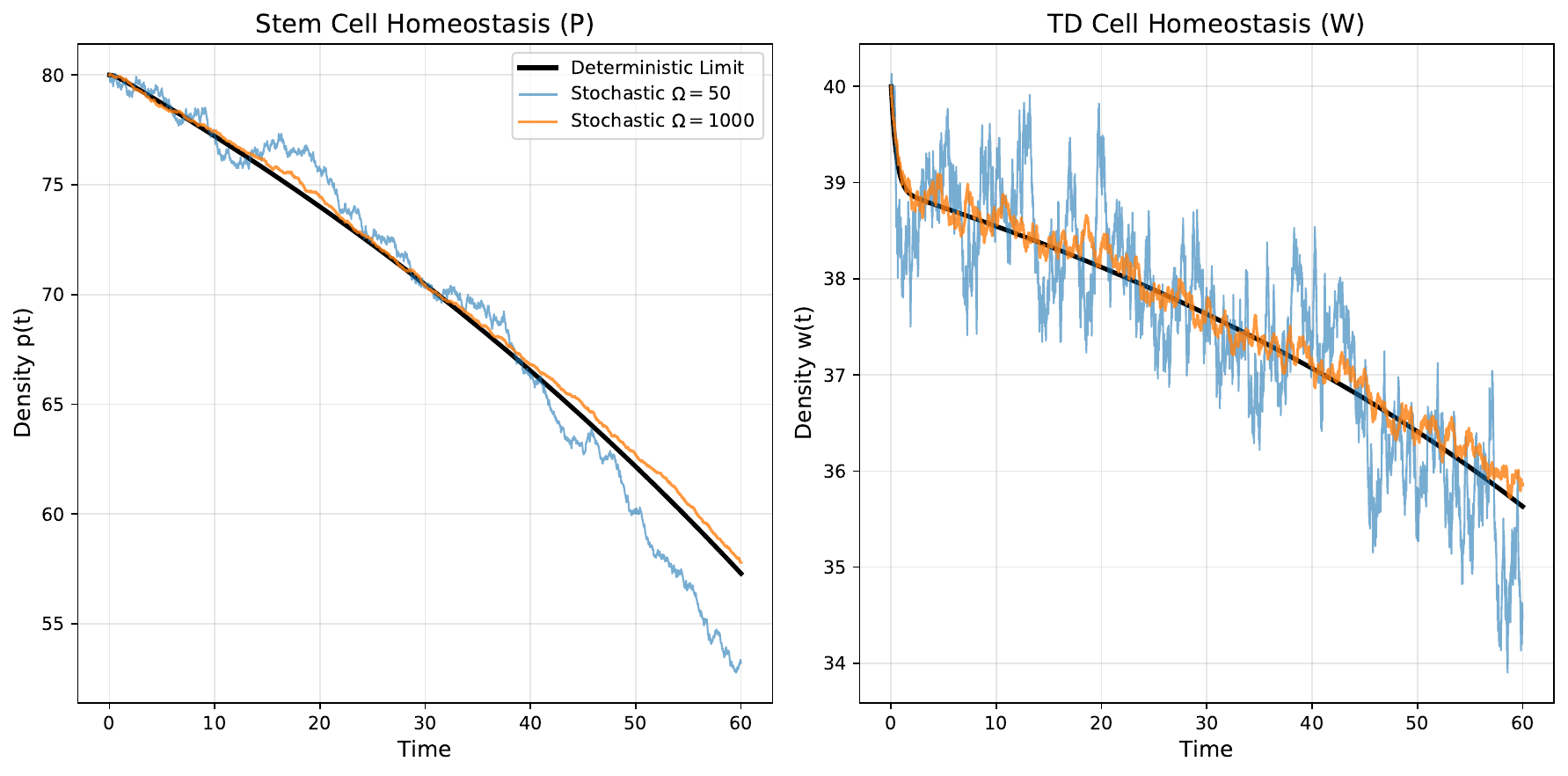}
    \caption{\textbf{Validation of diffusion approximation in a higher-density regime.} The black curve shows the deterministic mean-field trajectory, while colored curves show chemical Langevin sample paths for $\Omega=50$ and $\Omega=1000$. The fluctuation magnitudes of trajectories decrease with $\Omega$, consistent with $\mathcal{O}(\Omega^{-1/2})$ fluctuations.}
    \label{fig:diffusion_approximation}
\end{figure}

\subsection{Numerical illustration of the no-dedifferentiation pathology}
\label{subsec:num_noR}

We next illustrate the dichotomy of Proposition~\ref{prop:no_dedifferentiation_main} by simulating the diffusion approximation with dedifferentiation removed ($\lambda_R\equiv 0$). The simulations are partitioned into subcritical and supercritical regimes according to the sign of the net stem drift $p_1-p_2$.

\Cref{fig:pathology} (left) shows a subcritical regime in which the stem coordinate exhibits asymptotic extinction in the sense of Definition~\ref{def:asymptotic_extinction}. \Cref{fig:pathology} (right) shows a supercritical regime in which moments grow rapidly, consistent with the exponential moment divergence statement in Proposition~\ref{prop:no_dedifferentiation_main}. We emphasize that these figures are intended to visualize long-time decay versus growth in the diffusion approximation; they should not be interpreted as proofs of finite-time boundary hitting properties, which depend on boundary attainability and numerical boundary handling.

\begin{figure}[htbp]
    \centering
    \includegraphics[width=119mm]{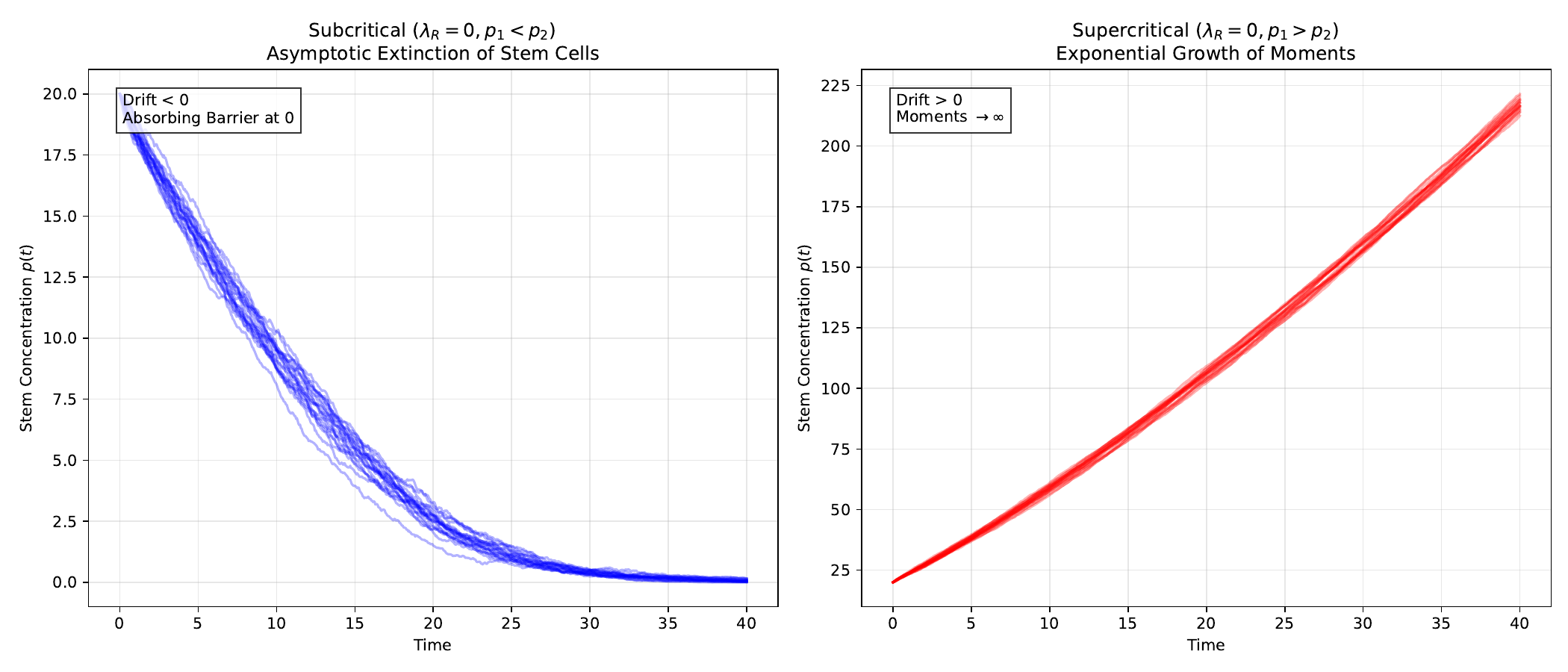}
    \caption{\textbf{No-dedifferentiation pathology in the diffusion approximation ($\lambda_R\equiv 0$).} (a): subcritical regime (net stem drift uniformly negative) exhibiting asymptotic extinction of the stem coordinate. (b): supercritical regime (net stem drift uniformly positive) exhibiting rapid growth consistent with exponential moment divergence. These simulations illustrate Proposition~\ref{prop:no_dedifferentiation_main} at the diffusion-approximation level and do not assert finite-time boundary hitting.}
    \label{fig:pathology}
\end{figure}

\subsection{Regime A versus Regime B deterministic phase portraits}
\label{subsec:num_regimes}

This subsection complements the stochastic scaling tests by illustrating the deterministic regime separation introduced in Section~\ref{sec:det_dynamics}. The goal is to (i) provide a representative \emph{Regime~A} example satisfying the ``theorem-regime'' hypotheses of Theorem~\ref{thm:global_det_regimeA} (global convergence to a unique positive equilibrium for bounded trajectories), and (ii) provide a contrasting \emph{Regime~B} example exhibiting a saddle equilibrium, a separatrix, and an Allee-type threshold. Importantly, the Regime~B example is constructed to lie \emph{outside} the sufficient hypotheses of Theorem~\ref{thm:global_det_regimeA}, thereby avoiding any logical tension between deterministic theory and numerical phenomenology.

\paragraph{Regime A (theorem regime: global convergence)}
For Regime~A we use the parameter set reported in Section~\ref{subsec:num_diffusion} (Deterministic set~III-A), for which all tested trajectories converge to a unique positive equilibrium. In particular, the numerical experiments yield convergence to
\[
(P^*,W^*) \approx (50.9895,\ 54.0855),
\]
with final distances to $(P^*,W^*)$ at $t=400$ on the order of $10^{-7}$ across a wide grid of initial conditions. This provides a representative visualization of theorem-consistent global convergence and will be used as the Regime~A reference point in the discussion.

\paragraph{Regime B (threshold phenomenology outside the theorem regime)}
To illustrate threshold dynamics, we consider a total-based feedback architecture that produces two positive equilibria: a hyperbolic saddle and a locally asymptotically stable equilibrium. We use the same totals-fate and division-rate maps as in the preceding deterministic examples,
\[
p_1^{\mathrm{tot}}(W)=\frac{0.30}{1+W/150},\qquad
p_2^{\mathrm{tot}}(W)=\frac{0.35}{1+W/150}+0.10,\qquad
\lambda_P^{\mathrm{tot}}(W)=\frac{2}{1+(W/20)^3},\qquad \delta_0=0.5,
\]
and choose a non-monotone dedifferentiation feedback of the form
\begin{equation}\label{eq:regimeB_lambdaR}
\lambda_R^{\mathrm{tot}}(P)
=
r_{\max}\,
\frac{P^{n}}{K^{n}+P^{n}}\,
\exp\!\left(-\frac{P}{M}\right),
\qquad (n,K,M)=(3,60,80),
\end{equation}
with $r_{\max}=0.2712$. (This choice is intended as a constructive example of behavior outside the theorem regime; see Remark~\ref{rmk:regimeB_consistency} for interpretation.)

For this Regime~B parameter set, the steady-state compatibility equation (ratio + equalization laws, Lemma~\ref{lem:eq_ratio_laws}) admits two positive equilibria. Linearization of \eqref{eq:totals_ode_backbone} at these equilibria gives:
\begin{itemize}
\item a \emph{hyperbolic saddle} at
\[
(P_{\mathrm{sd}},W_{\mathrm{sd}})\approx (72.7641,\ 37.7305),
\qquad
\mathrm{spec}(J)\approx\{-2.0982,\ 0.00575\},
\]
\item and a \emph{locally asymptotically stable} equilibrium at
\[
(P_{\mathrm{st}},W_{\mathrm{st}})\approx (82.4801,\ 39.0566),
\qquad
\mathrm{spec}(J)\approx\{-2.1155,\ -0.00516\}.
\]
\end{itemize}
The stable manifold of the saddle forms a separatrix that partitions the positive quadrant into two basins: trajectories initialized on one side converge to the stable positive equilibrium, whereas trajectories initialized on the other side collapse to extinction. This produces an Allee-type threshold in the deterministic total dynamics.

\paragraph{Quantitative evidence that Regime B lies outside the theorem regime}
Theorem~\ref{thm:global_det_regimeA} requires, among other conditions, the inequality
\[
\big(\lambda_R^{\mathrm{tot}}\big)'(P)\le \frac{\lambda_R^{\mathrm{tot}}(P)}{P},
\qquad \forall\,P>0.
\]
For the Regime~B feedback \eqref{eq:regimeB_lambdaR}, we verify numerically that this sufficient condition is violated: there exists $P>0$ such that
\[
\big(\lambda_R^{\mathrm{tot}}\big)'(P) - \frac{\lambda_R^{\mathrm{tot}}(P)}{P} > 0,
\]
and in fact
\[
\max_{P>0}\left(\big(\lambda_R^{\mathrm{tot}}\big)'(P)-\frac{\lambda_R^{\mathrm{tot}}(P)}{P}\right)
\approx 9.02\times 10^{-4}
\quad \text{at } P\approx 3.21\times 10^{1}.
\]
Hence this example lies outside the sufficient hypotheses of Theorem~\ref{thm:global_det_regimeA}; the observed bistability and threshold behavior are therefore consistent with the regime classification and do not contradict the theorem-regime convergence statement.

To visualize the threshold dynamics described above, we plot the Regime~B phase portrait and basin map in \Cref{fig:regimeB_phase,fig:regimeB_basin}. 
\Cref{fig:regimeB_phase} displays streamlines of the totals ODE \eqref{eq:totals_ode_backbone}, the nullclines, and the two positive equilibria (the hyperbolic saddle and the locally stable node); the saddle's stable manifold is highlighted as the separatrix. 
\Cref{fig:regimeB_basin} shows a numerical basin classification of initial conditions together with the same separatrix overlaid, confirming that the stable manifold coincides with the basin boundary and hence represents the deterministic Allee-type threshold.

\begin{figure}[htbp]
    \centering
    \includegraphics[width=119mm]{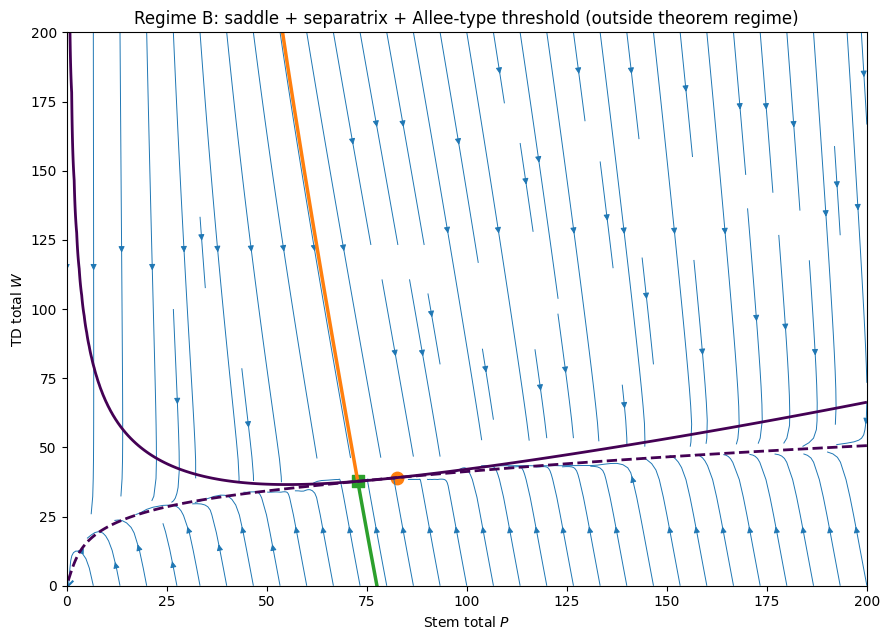}
    \caption{\textbf{Regime B phase portrait: saddle, separatrix, and Allee-type threshold (outside the theorem regime).}
    Streamlines show the vector field of the totals ODE \eqref{eq:totals_ode_backbone} under the Regime~B parameter set \eqref{eq:regimeB_lambdaR}.
    Solid and dashed curves denote the nullclines $\dot P=0$ and $\dot W=0$, respectively.
    The system admits two positive equilibria: a hyperbolic saddle at $(P_{\mathrm{sd}},W_{\mathrm{sd}})\approx(72.76,37.73)$ and a locally stable equilibrium at $(P_{\mathrm{st}},W_{\mathrm{st}})\approx(82.48,39.06)$.
    The stable manifold of the saddle (thick curve) acts as a separatrix dividing trajectories that recover to the positive equilibrium from those that collapse to extinction, producing an Allee-type threshold.}
    \label{fig:regimeB_phase}
\end{figure}

\begin{figure}[htbp]
    \centering
    \includegraphics[width=119mm]{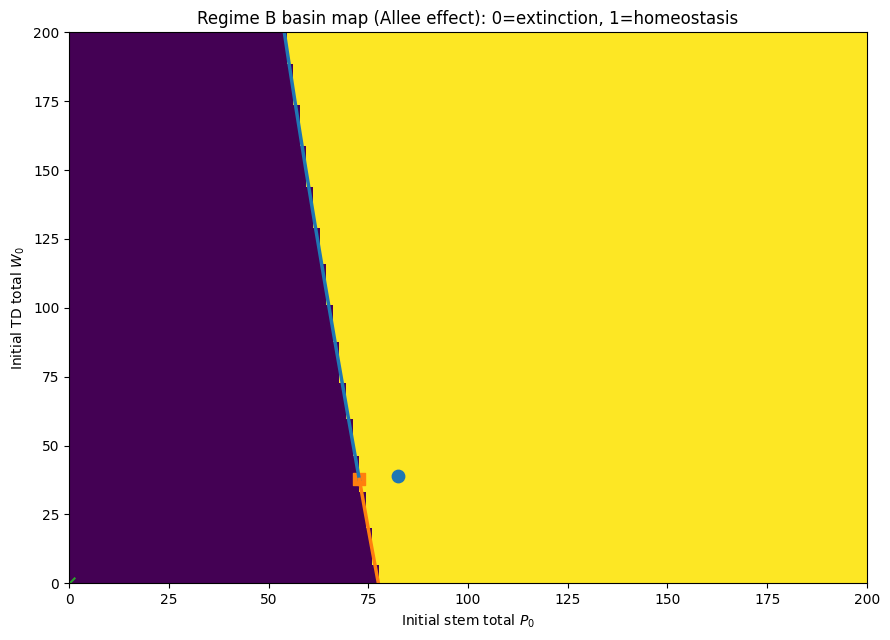}
    \caption{\textbf{Regime B basin map (Allee effect) with separatrix overlay.}
    Each initial condition $(P_0,W_0)$ is classified by its long-time fate under the totals ODE \eqref{eq:totals_ode_backbone} with the Regime~B parameter set \eqref{eq:regimeB_lambdaR}:
    label $0$ (dark region) indicates convergence to extinction, while label $1$ (bright region) indicates convergence to the stable positive equilibrium.
    The overlaid curve is the stable manifold of the saddle equilibrium, computed by backward integration from a small perturbation along the stable eigenvector; it coincides with the basin boundary and therefore represents the deterministic Allee-type threshold.}
    \label{fig:regimeB_basin}
\end{figure}

\section{Discussion and biological implications}
\label{sec:discussion}

This work develops a multiscale, mechanism-consistent framework for stem--TD lineage dynamics with dedifferentiation and feedback. The central theme is that \emph{lineage plasticity is not merely an additional reaction channel}: when demographic fluctuations are derived from an underlying event model, the presence or absence of a dedifferentiation flux has structural consequences for long-time behavior. Our results connect three layers---a density-dependent CTMC, its diffusion approximation, and a deterministic structured backbone---and clarify which conclusions belong to which layer.

\paragraph{Mechanism-consistent noise: from CTMC to diffusion}
We began with a five-channel density-dependent CTMC encoding symmetric self-renewal, symmetric differentiation, asymmetric division, dedifferentiation, and TD death. The FCLT yields a chemical Langevin diffusion with \emph{state-dependent} covariance that exactly matches the aggregated channel-wise infinitesimal covariances. This avoids ad hoc noise prescriptions and ensures that the stochastic model preserves the deterministic mean-field drift while correctly representing the mechanism generating demographic noise. The explicit covariance structure also provides a direct route to quantitative fluctuation predictions and system-size scaling.

\paragraph{No-dedifferentiation pathology as a diffusion-level structural dichotomy}
Our main qualitative conclusion is the no-dedifferentiation pathology (Section~\ref{sec:noR}). At the level of the diffusion approximation, removing the dedifferentiation flux induces a dichotomy: subcritical regimes exhibit almost sure asymptotic extinction of the stem coordinate, while supercritical regimes display exponential divergence of polynomial moments. This is a \emph{diffusion-approximation statement}: it characterizes the behavior of the chemical Langevin model derived under the system-size scaling and does not claim finite-time boundary hitting, nor does it replace the discrete CTMC near-boundary behavior. Nevertheless, it identifies a robust failure mode in the stochastic approximation: without a dedifferentiation mechanism, demographic noise cannot be consistently counteracted by the remaining feedback structure, leading to collapse or runaway growth depending on the sign of the net stem drift.

\paragraph{Deterministic backbone: observable steady-state constraints}
Although the stochastic analysis does not rely on global stability of the deterministic model, the deterministic backbone yields two exact algebraic constraints at positive equilibria: the ratio law and the equalization law (Section~\ref{sec:backbone}). These identities link microscopic fate bias and turnover parameters to macroscopic compartment ratios and to the dedifferentiation flux required to offset net differentiation bias. In applications, they can be used as \emph{observable constraints}: measuring compartment ratios and turnover rates provides direct information about the effective balance between fate bias and plasticity at homeostasis. From an interpretive standpoint, these laws also clarify what fails in the no-dedifferentiation setting: the balancing condition encoded by equalization is structurally absent when $\lambda_R\equiv 0$.

\paragraph{Regime A versus Regime B: stability guarantees and threshold risks}
To address the coexistence of theorem-type convergence claims and threshold dynamics, we distinguished a theorem regime (Regime~A) from a phenomenological regime (Regime~B) in the deterministic totals model (Section~\ref{sec:det_dynamics}). Regime~A provides sufficient and checkable conditions guaranteeing a unique positive equilibrium attracting bounded trajectories. Regime~B, arising when these sufficient conditions are violated, exhibits bistability and separatrices, producing an Allee-type threshold: below a critical initial mass the lineage collapses, while above it recovery to a positive state occurs. Biologically, Regime~B provides a mechanism for \emph{threshold failure} in tissue repair and transplantation: sufficiently depleted systems may be unable to trigger an effective regenerative dedifferentiation flux and therefore enter a collapse basin. This regime classification strengthens the overall narrative: the theorem regime quantifies when global homeostasis is guaranteed, while the phenomenological regime identifies concrete risks when those guarantees fail.

\paragraph{Limitations and outlook}
Several extensions are natural. First, boundary behavior for the diffusion approximation warrants careful comparison with the underlying CTMC in small-population regimes, where discreteness and diffusion degeneracy may lead to boundary-layer effects. 
Second, spatial heterogeneity and microenvironmental feedback can be incorporated, and they can create pattern-forming niches that alter growth and treatment response; hybrid PDE--agent-based models with bidirectional endothelial feedback provide a concrete mechanism by which reaction--diffusion patterning generates heterogeneous drug penetration and resistant niches \citep{liu_bidirectional_2025}
Third, empirical parameterization of feedback maps would enable direct confrontation with lineage-tracing and perturbation experiments, with the ratio/equalization identities serving as calibration constraints and the diffusion model quantifying fluctuation-driven failure probabilities.

Overall, the framework presented here provides a principled bridge from discrete lineage events to noise-driven tissue dynamics and clarifies how a dedifferentiation flux can mitigate demographic fluctuations, while also delineating parameter regimes in which threshold failures can occur.

\backmatter

\section*{Data Availability:}
Data sharing is not applicable to this article as no new data were created or analyzed in this study.

\section*{Funding:}
Both authors of this article have confirmed that this research received no external funding.

\section*{Author Contributions}
All authors contributed to the study. All authors read and approved the final
manuscript.

\section*{Ethical statement}
The authors declare that the research presented in this manuscript is original and has not
been published elsewhere and is not under consideration by another journal. The study was conducted in
accordance with ethical principles and guidelines.

\section*{Declaration:}
All authors declare no competing interests.

\begin{appendices}

\section{Stochastic technicalities}
\label{app:stoch_tech}

This appendix collects technical ingredients used in the stochastic layer:
(i) a uniform LLN for the Poisson time-change representation,
(ii) verification of the Ethier--Kurtz hypotheses for the density-dependent CTMC limits,
and (iii) cutoff/extension tools and auxiliary inequalities used to construct stopped strong solutions
and to streamline moment estimates.

\subsection{A uniform law of large numbers for Poisson processes}
\label{app:poisson_lln}

This appendix provides a self-contained proof of Lemma~\ref{lem:poisson_lln} (uniform LLN for the Poisson time-change), using exponential moment bounds for the compensated Poisson process together with a maximal inequality and the Borel--Cantelli lemma. The argument is standard; we include it for completeness.

\subsubsection{Exponential moment bound for the compensated Poisson process}
\label{app:poisson_mgf}

Let $\{Y(t)\}_{t\ge 0}$ be a unit-rate Poisson process and define the compensated process
\[
\tilde Y(t)\coloneqq Y(t)-t.
\]

\begin{lemma}[Exponential moment bound]\label{lem:poisson_exp_moment}
There exists a constant $C>0$ such that for all $t\ge 0$ and all $\lambda\in[-1,1]$,
\begin{equation}\label{eq:poisson_exp_moment}
\mathbb{E}\Big[e^{\lambda \tilde Y(t)}\Big]\le \exp\big(C\,t\,\lambda^2\big).
\end{equation}
\end{lemma}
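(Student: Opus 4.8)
The plan is to compute the moment generating function of $Y(t)$ in closed form, recenter, and then reduce the claim to an elementary Taylor estimate uniform in $\lambda$. Since $Y(t)$ is Poisson with mean $t$, its MGF is $\mathbb{E}\big[e^{\lambda Y(t)}\big] = \exp\big(t(e^\lambda - 1)\big)$ for every $\lambda \in \mathbb{R}$; multiplying through by $e^{-\lambda t}$ gives the centered identity
\[
\mathbb{E}\big[e^{\lambda \tilde Y(t)}\big] = \exp\big(t\,(e^\lambda - 1 - \lambda)\big), \qquad \lambda \in \mathbb{R},\ t \ge 0.
\]
It then remains to bound $\psi(\lambda) := e^\lambda - 1 - \lambda$ by a constant multiple of $\lambda^2$ on $[-1,1]$.

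For this second step I would observe that $\psi$ is convex with $\psi(0) = \psi'(0) = 0$, so $\psi \ge 0$ on all of $\mathbb{R}$; in particular the desired inequality is an upper bound on a nonnegative quantity. Expanding in its (everywhere convergent) power series, $\psi(\lambda) = \sum_{k \ge 2} \lambda^k/k!$, so for $|\lambda| \le 1$,
\[
0 \le \psi(\lambda) \le \sum_{k \ge 2} \frac{|\lambda|^k}{k!} \le \lambda^2 \sum_{k \ge 2} \frac{|\lambda|^{k-2}}{k!} \le \lambda^2 \sum_{k \ge 2} \frac{1}{k!} = (e-2)\,\lambda^2.
\]
Taking $C := e - 2$ and substituting back into the centered MGF identity yields $\mathbb{E}\big[e^{\lambda \tilde Y(t)}\big] \le \exp\big(C\,t\,\lambda^2\big)$ for all $t \ge 0$ and all $\lambda \in [-1,1]$, which is \eqref{eq:poisson_exp_moment}.

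There is essentially no substantive obstacle here: the Poisson MGF computation is textbook, and the only place needing (minor) care is ensuring the constant $C$ is uniform over the whole interval $\lambda \in [-1,1]$, which the power-series estimate handles directly. An alternative route to the same conclusion is the elementary inequality $e^\lambda \le 1 + \lambda + \lambda^2$ on $[-1,1]$ (equivalently, monotonicity of $\lambda \mapsto \psi(\lambda)/\lambda^2$), but the series bound is cleanest and makes the constant explicit; either suffices for the subsequent maximal-inequality/Borel--Cantelli argument.
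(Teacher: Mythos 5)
Your proposal is correct and follows essentially the same route as the paper: compute the centered Poisson MGF $\mathbb{E}[e^{\lambda \tilde Y(t)}]=\exp(t(e^\lambda-1-\lambda))$ and then bound $e^\lambda-1-\lambda\le C\lambda^2$ on $[-1,1]$. The only difference is cosmetic --- you use the power-series estimate (giving $C=e-2$) where the paper uses the Lagrange remainder (giving $C=e/2$); either constant works for the downstream Borel--Cantelli argument.
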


\begin{proof}
Since $Y(t)\sim\mathrm{Poi}(t)$, its moment generating function is
\[
\mathbb{E}\big[e^{\lambda Y(t)}\big]=\exp\big(t(e^\lambda-1)\big).
\]
Therefore,
\[
\mathbb{E}\big[e^{\lambda \tilde Y(t)}\big]
=\mathbb{E}\big[e^{\lambda(Y(t)-t)}\big]
=\exp\big(t(e^\lambda-1-\lambda)\big).
\]
For $\lambda\in[-1,1]$, Taylor's theorem yields
\[
e^\lambda-1-\lambda=\frac{e^\xi}{2}\lambda^2
\quad\text{for some }\xi\in[-1,1],
\]
hence $0\le e^\lambda-1-\lambda\le \frac{e}{2}\lambda^2$. Taking $C\ge e/2$ gives \eqref{eq:poisson_exp_moment}.
\end{proof}

\subsubsection{Uniform LLN on compact time intervals}
\label{app:poisson_uniform_lln}

We now prove the uniform LLN used in the mean-field argument.

\begin{lemma}[Uniform LLN]\label{lem:poisson_uniform_lln}
Let $Y$ be a unit-rate Poisson process. For each $u_0>0$,
\[
\lim_{\Omega\to\infty}\ \sup_{0\le u\le u_0}\left|\Omega^{-1}Y(\Omega u)-u\right|=0
\qquad \text{a.s.}
\]
Equivalently,
\[
\lim_{\Omega\to\infty}\ \sup_{0\le u\le u_0}\left|\Omega^{-1}\tilde Y(\Omega u)\right|=0
\qquad \text{a.s.}
\]
\end{lemma}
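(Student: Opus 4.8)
The plan is to derive the uniform law of large numbers from the exponential moment bound of \Cref{lem:poisson_exp_moment} by combining a Doob maximal inequality with the Borel--Cantelli lemma, using that in our application the system-size parameter ranges over $\mathbb{N}$, so it suffices to establish the limit along integers. First I would observe that $\Omega^{-1}Y(\Omega u)-u=\Omega^{-1}\tilde Y(\Omega u)$, so the two displayed assertions coincide, and that it is enough to show: for every fixed $\varepsilon>0$ there is a constant $c_\varepsilon>0$ with
\[
\mathbb{P}\!\left(\sup_{0\le u\le u_0}\big|\Omega^{-1}\tilde Y(\Omega u)\big|\ge\varepsilon\right)\le 2\,e^{-c_\varepsilon\Omega}\qquad\text{for all }\Omega\in\mathbb{N}.
\]
Indeed, summability of the right-hand side over $\Omega$ then yields, via Borel--Cantelli, a full-probability event on which the supremum is eventually below $\varepsilon$; intersecting such events for $\varepsilon=1/k$, $k\in\mathbb{N}$, gives the claimed almost sure convergence.

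For the tail bound I would fix $\lambda\in(0,1]$ and use that $\tilde Y$ is a martingale with c\`adl\`ag paths, so that $s\mapsto e^{\lambda\tilde Y(s)}$ is a nonnegative right-continuous submartingale. Doob's submartingale maximal inequality together with \eqref{eq:poisson_exp_moment} gives, for any $a>0$ and $T>0$,
\[
\mathbb{P}\!\left(\sup_{0\le s\le T}\tilde Y(s)\ge a\right)\le e^{-\lambda a}\,\mathbb{E}\big[e^{\lambda\tilde Y(T)}\big]\le \exp\!\big(-\lambda a+C T\lambda^2\big),
\]
and the same estimate with $-\tilde Y$ in place of $\tilde Y$ (again a martingale, with $e^{-\lambda\tilde Y(\cdot)}$ a submartingale) controls the lower tail. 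Taking $T=\Omega u_0$ and $a=\varepsilon\Omega$, the unconstrained minimiser $\lambda^\star=\varepsilon/(2Cu_0)$ is independent of $\Omega$ and lies in $(0,1]$ whenever $\varepsilon\le 2Cu_0$, producing the bound $\exp(-\varepsilon^2\Omega/(4Cu_0))$; for $\varepsilon>2Cu_0$ one either takes the boundary value $\lambda=1$, which still gives geometric decay in $\Omega$, or simply uses the monotone inclusion $\{\sup\ge\varepsilon\}\subseteq\{\sup\ge 2Cu_0\}$. Either way one obtains the displayed exponential bound with some $c_\varepsilon>0$, and a union bound over the upper and lower tails supplies the factor $2$, completing the Borel--Cantelli step.

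The \emph{main technical point} to be careful about is the interaction between the optimisation over $\lambda$ and the restriction $|\lambda|\le 1$ under which \Cref{lem:poisson_exp_moment} holds: one must check that the optimal exponent $\lambda^\star$ does not depend on $\Omega$, so that admissibility $\lambda^\star\le 1$ is a condition on $\varepsilon$ alone, and treat the large-$\varepsilon$ range separately; this is routine but is the only place where a naive application would fail. A secondary point is the validity of Doob's maximal inequality over the continuous interval $[0,\Omega u_0]$, which is legitimate because the Poisson paths are c\`adl\`ag. I note in passing that an essentially computation-free alternative is available---combine the pointwise strong law $Y(t)/t\to1$ with the monotonicity of $u\mapsto\Omega^{-1}Y(\Omega u)$ and the continuity of the limit map $u\mapsto u$ through a P\'olya/Dini-type grid sandwiching---but since \Cref{lem:poisson_exp_moment} has just been established, the maximal-inequality route is the most direct and is the one I would write out in full.
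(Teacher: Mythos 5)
Your proof is correct, and its overall architecture coincides with the paper's: the exponential moment bound of \Cref{lem:poisson_exp_moment}, a maximal inequality to control the running supremum, an exponentially decaying tail bound summable in $\Omega\in\mathbb{N}$, and Borel--Cantelli (with the $\varepsilon=1/k$ intersection). The one genuine difference lies in the maximal-inequality step. The paper first invokes an Etemadi-type inequality to reduce $\mathbb{P}\bigl(\sup_{u\le u_0}|\Omega^{-1}\tilde Y(\Omega u)|>4\varepsilon\bigr)$ to four times the worst pointwise tail, and then applies a per-time Chernoff bound with the $\Omega$-dependent choice $\lambda=\Omega^{-1/2}$, producing tails of order $\exp(-\varepsilon\sqrt{\Omega})$. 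You instead apply Doob's maximal inequality directly to the nonnegative right-continuous submartingales $e^{\pm\lambda\tilde Y(\cdot)}$ on $[0,\Omega u_0]$, optimize over an $\Omega$-independent $\lambda$, and handle the admissibility constraint $\lambda\le 1$ (from \Cref{lem:poisson_exp_moment}) by splitting on the size of $\varepsilon$. Your route yields the sharper geometric decay $\exp(-c_\varepsilon\Omega)$, avoids the factor-$4$ threshold bookkeeping, and replaces the paper's somewhat informal ``Etemadi on a fine partition'' step with a standard Doob argument; the paper's $\lambda=\Omega^{-1/2}$ trick, in exchange, sidesteps any case analysis on $\varepsilon$ at the cost of a weaker (but still summable) bound. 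Both are complete proofs of the lemma.
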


\begin{proof}
Fix $u_0>0$ and $\varepsilon>0$. Define the event
\[
E_\Omega(\varepsilon)\coloneqq
\left\{\sup_{0\le u\le u_0}\left|\Omega^{-1}\tilde Y(\Omega u)\right|>4\varepsilon\right\},
\qquad
q_\Omega(\varepsilon)\coloneqq \mathbb{P}\big(E_\Omega(\varepsilon)\big).
\]
We show that $\sum_{\Omega=1}^\infty q_\Omega(\varepsilon)<\infty$, which implies
$\mathbb{P}(E_\Omega(\varepsilon)\ \text{i.o.})=0$ by Borel--Cantelli, and hence the claim.

Let $T\coloneqq \Omega u_0$. Since $u\mapsto \tilde Y(u)$ is a c\`adl\`ag martingale with jumps of size $1$,
a standard maximal inequality for sums of independent increments (Etemadi's inequality \citep{etemadi_classical_1985} applied to the increment sequence on a fine partition) yields a constant factor bound of the form
\begin{equation}\label{eq:etemadi_style_bound}
q_\Omega(\varepsilon)
\le 4\,\sup_{0\le u\le u_0}\mathbb{P}\left(\left|\Omega^{-1}\tilde Y(\Omega u)\right|>\varepsilon\right).
\end{equation}
We next bound the right-hand side uniformly over $u\in[0,u_0]$ by an exponential Markov argument.

Fix $u\in[0,u_0]$ and write $t=\Omega u\in[0,T]$. For any $\lambda\in(0,1]$,
\[
\mathbb{P}\big(\tilde Y(t)>\varepsilon\Omega\big)
=\mathbb{P}\big(e^{\lambda \tilde Y(t)} > e^{\lambda \varepsilon\Omega}\big)
\le e^{-\lambda\varepsilon\Omega}\,\mathbb{E}\big[e^{\lambda \tilde Y(t)}\big]
\le \exp\big(-\lambda\varepsilon\Omega + C t\lambda^2\big)
\le \exp\big(-\lambda\varepsilon\Omega + C \Omega u_0 \lambda^2\big),
\]
where we used Lemma~\ref{lem:poisson_exp_moment} and $t\le \Omega u_0$.
Similarly,
\[
\mathbb{P}\big(-\tilde Y(t)>\varepsilon\Omega\big)
=\mathbb{P}\big(e^{-\lambda\tilde Y(t)}>e^{\lambda\varepsilon\Omega}\big)
\le \exp\big(-\lambda\varepsilon\Omega + C \Omega u_0 \lambda^2\big)
\quad \text{for }\lambda\in(0,1].
\]
Choosing $\lambda=\Omega^{-1/2}\in(0,1]$ for $\Omega\ge 1$ gives
\[
\sup_{0\le u\le u_0}\mathbb{P}\left(\left|\Omega^{-1}\tilde Y(\Omega u)\right|>\varepsilon\right)
\le 2\exp\big(-\varepsilon\sqrt{\Omega}+C u_0\big).
\]
Substituting into \eqref{eq:etemadi_style_bound} yields
\[
q_\Omega(\varepsilon)\le 8\exp\big(-\varepsilon\sqrt{\Omega}+C u_0\big),
\]
and therefore $\sum_{\Omega=1}^\infty q_\Omega(\varepsilon)<\infty$.
Hence $E_\Omega(\varepsilon)$ occurs only finitely many times almost surely. Since $\varepsilon>0$ was arbitrary, this implies
\[
\sup_{0\le u\le u_0}\left|\Omega^{-1}\tilde Y(\Omega u)\right|\xrightarrow[\Omega\to\infty]{}0
\quad\text{a.s.},
\]
which is equivalent to the stated uniform LLN for $\Omega^{-1}Y(\Omega u)-u$.
\end{proof}

\begin{remark}
Lemma~\ref{lem:poisson_uniform_lln} is used in Section~\ref{subsec:lln} only through the fact that scaled centered Poisson terms vanish uniformly on compact time intervals. 
\end{remark}

\subsection{Verification of the Ethier--Kurtz hypotheses for the CTMC limits}
\label{app:ek_verification}

In Section~\ref{sec:diffusion} we appeal to standard limit theorems for density-dependent Markov chains (LLN and FCLT) as developed, for example, by Ethier and Kurtz \citep{ethier_markov_1986}. This appendix verifies that the hypotheses required for \Cref{thm:mean_field_limit,thm:fclt} hold for the propensity structure introduced in \Cref{sec:ctmc}.

\subsubsection{Local regularity and growth on compact sets}
\label{app:ek_local}

Recall that the feedback maps $p_1,p_2,\lambda_P,\lambda_R$ are assumed locally Lipschitz on $[0,\infty)$ (\Cref{subsec:ctmc_scaling}), and that the propensity densities are
\[
\begin{aligned}
\alpha_{\mathrm{SR}}(p,w)&=p_1(w)\lambda_P(w)p,\\[4pt]
\alpha_{\mathrm{SD}}(p,w)&=p_2(w)\lambda_P(w)p,\\[4pt]
\alpha_{\mathrm{ASD}}(p,w)&=p_3(w)\lambda_P(w)p,\\[4pt]
\alpha_{\mathrm{R}}(p,w)&=\lambda_R(p)w,\\[4pt]
\alpha_{\mathrm{D}}(p,w)&=\delta_0 w,
\end{aligned}
\]
with $p_3(w)=1-p_1(w)-p_2(w)\ge 0$.

\begin{lemma}[Local Lipschitz continuity of propensities]\label{lem:alpha_loc_lip}
Each propensity density $\alpha_e:\mathbb{R}_+^2\to[0,\infty)$ is locally Lipschitz on $\mathbb{R}_+^2$.
\end{lemma}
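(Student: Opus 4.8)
The plan is to reduce the claim to the elementary fact that finite products of locally bounded, locally Lipschitz scalar functions are locally Lipschitz, combined with the observation that the coordinate projections $(p,w)\mapsto p$ and $(p,w)\mapsto w$ are $1$-Lipschitz and that the feedback maps $w\mapsto p_1(w),p_2(w),\lambda_P(w)$ and $p\mapsto\lambda_R(p)$ are locally Lipschitz on $[0,\infty)$ by the standing hypotheses of \Cref{subsec:ctmc_scaling}. Composing a one-variable locally Lipschitz map with the relevant projection yields a locally Lipschitz map on $\mathbb{R}_+^2$, since the projections send compact subsets of $\mathbb{R}_+^2$ to compact subsets of $[0,\infty)$.

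First I would fix an arbitrary compact set $K\subset\mathbb{R}_+^2$ and record that, on $K$, each of the scalar functions $p$, $w$, $p_1(w)$, $p_2(w)$, $p_3(w)=1-p_1(w)-p_2(w)$, $\lambda_P(w)$, $\lambda_R(p)$, and the constant $\delta_0$ is bounded and Lipschitz: boundedness of $p_1,p_2,p_3$ is immediate from $p_i:[0,\infty)\to[0,1]$; boundedness of $p,w$ follows from compactness of $K$; boundedness of $\lambda_P,\lambda_R$ on the compact projections of $K$ follows from continuity; and $p_3$ is Lipschitz as a finite linear combination of Lipschitz functions. Then I would invoke the product rule for Lipschitz maps: if $f,g:K\to\mathbb{R}$ are bounded with Lipschitz constants $L_f,L_g$ and sup-norms $\|f\|_\infty,\|g\|_\infty$, then $fg$ is Lipschitz on $K$ with constant $\|f\|_\infty L_g+\|g\|_\infty L_f$, via $|f(x)g(x)-f(y)g(y)|\le|f(x)|\,|g(x)-g(y)|+|g(y)|\,|f(x)-f(y)|$. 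Iterating this once handles the bilinear terms $\alpha_{\mathrm{R}}(p,w)=\lambda_R(p)\,w$ and $\alpha_{\mathrm{D}}(p,w)=\delta_0\,w$, and iterating it twice handles the triple products $\alpha_{\mathrm{SR}}=p_1(w)\lambda_P(w)\,p$, $\alpha_{\mathrm{SD}}=p_2(w)\lambda_P(w)\,p$, $\alpha_{\mathrm{ASD}}=p_3(w)\lambda_P(w)\,p$. Nonnegativity of each $\alpha_e$ is separate and immediate from the sign constraints $p_i\in[0,1]$, $\lambda_P>0$, $\lambda_R\ge0$, $\delta_0>0$, $p,w\ge0$. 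Since $K$ was arbitrary, each $\alpha_e$ is locally Lipschitz on $\mathbb{R}_+^2$.

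There is essentially no genuine obstacle here; the only point requiring a word of care is the promotion of ``locally Lipschitz on $[0,\infty)$'' for the one-variable feedback maps to ``locally Lipschitz on $\mathbb{R}_+^2$'' for the composed maps. This is handled by the remark that a Lipschitz constant for, say, $\lambda_P$ on the (compact) image of $K$ under the $w$-projection serves as a Lipschitz constant for $(p,w)\mapsto\lambda_P(w)$ on $K$. With that bookkeeping in place the lemma follows, and the same estimates simultaneously deliver the local boundedness and local growth bound \eqref{eq:local_growth_ctmc} used to invoke the Ethier--Kurtz limit theorems.
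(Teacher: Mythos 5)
Your proof is correct and follows essentially the same route as the paper's: fix a compact set, observe that all factors ($p$, $w$, $p_1$, $p_2$, $p_3$, $\lambda_P$, $\lambda_R$, $\delta_0$) are bounded and Lipschitz there, and conclude via products of bounded Lipschitz functions. You simply make explicit the product-Lipschitz estimate and the projection/composition bookkeeping that the paper leaves implicit.
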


\begin{proof}
Fix a compact set $K\subset\mathbb{R}_+^2$. On $K$, the coordinates $p,w$ are bounded, and the one-dimensional maps
$p_1,p_2,p_3,\lambda_P,\lambda_R$ are Lipschitz on the corresponding bounded intervals. Each $\alpha_e$ is a finite sum/product of these Lipschitz maps with the bounded coordinates $p,w$, hence is Lipschitz on $K$.
\end{proof}

\begin{lemma}[Local growth bound]\label{lem:local_growth_bound}
For each compact set $K\subset\mathbb{R}_+^2$,
\[
\sup_{x\in K}\sum_e \alpha_e(x)\,|\nu_e|^2<\infty.
\]
\end{lemma}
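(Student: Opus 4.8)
The plan is to reduce the claim to two elementary facts: the channel set is finite with fixed stoichiometric vectors, and each propensity density is continuous (indeed locally Lipschitz, by Lemma~\ref{lem:alpha_loc_lip}). First I would note that $\{\mathrm{SR},\mathrm{SD},\mathrm{ASD},\mathrm{R},\mathrm{D}\}$ has five elements and each $\nu_e\in\mathbb{Z}^2$ is a fixed constant vector, so $c_\nu\coloneqq\max_e|\nu_e|^2<\infty$ (concretely $c_\nu=5$, attained at $\nu_{\mathrm{SD}}=(-1,+2)$). This already turns $\sum_e\alpha_e(x)\,|\nu_e|^2$ into a finite nonnegative linear combination of the $\alpha_e(x)$, so it suffices to show $\sup_{x\in K}\alpha_e(x)<\infty$ for each channel $e$.

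Second, I would fix a compact $K\subset\mathbb{R}_+^2$ and let $P_K,W_K<\infty$ denote the maxima of the two coordinates over $K$. Since $p_1,p_2,\lambda_P$ are locally Lipschitz, they are bounded on the compact interval $[0,W_K]$, and $\lambda_R$ is bounded on $[0,P_K]$; using $0\le p_1,p_2,p_3\le 1$ together with the explicit formulas \eqref{eq:propensity_densities_ctmc}, one reads off directly that $\alpha_e(p,w)\le \big(\sup_{[0,W_K]}\lambda_P\big)P_K$ for the three division channels $e\in\{\mathrm{SR},\mathrm{SD},\mathrm{ASD}\}$, and $\alpha_{\mathrm{R}}(p,w)\le\big(\sup_{[0,P_K]}\lambda_R\big)W_K$, $\alpha_{\mathrm{D}}(p,w)\le\delta_0 W_K$, all uniformly on $K$. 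Summing with the weights $|\nu_e|^2\le c_\nu$ then yields an explicit finite bound. (Equivalently, one may simply invoke that a continuous function on a compact set is bounded, which is the only use made of Lemma~\ref{lem:alpha_loc_lip}.)

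There is no genuine obstacle here: the statement is a bookkeeping consequence of compactness and continuity, and the ``hard part'' is essentially vacuous. The one point worth flagging is that boundedness of the feedback maps is invoked only on the bounded intervals $[0,P_K]$ and $[0,W_K]$, so no global boundedness hypothesis on $p_1,p_2,\lambda_P,\lambda_R$ is required — this is exactly why \eqref{eq:local_growth_ctmc} holds under the standing local Lipschitz assumptions alone, and it is precisely the form of hypothesis needed to invoke the Ethier--Kurtz LLN and FCLT underlying Theorems~\ref{thm:mean_field_limit} and~\ref{thm:fclt}.
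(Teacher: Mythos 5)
Your argument is correct and follows essentially the same route as the paper: each $\alpha_e$ is bounded on the compact set $K$ (by continuity from the local Lipschitz assumption), the channel set is finite with fixed stoichiometric vectors, and the sum is therefore uniformly bounded. Your explicit coordinate-wise bounds (via $P_K$, $W_K$ and $p_1,p_2,p_3\le 1$) are a harmless refinement of the paper's shorter compactness argument.
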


\begin{proof}
By Lemma~\ref{lem:alpha_loc_lip}, each $\alpha_e$ is continuous, hence bounded on $K$. The jump set $\{\nu_e\}_e$ is finite, so $\sum_e |\nu_e|^2<\infty$. Therefore
\[
\sup_{x\in K}\sum_e \alpha_e(x)\,|\nu_e|^2
\le \left(\sup_{x\in K}\sum_e \alpha_e(x)\right)\left(\sum_e |\nu_e|^2\right)<\infty.
\]
\end{proof}

\subsubsection{Regularity of drift and covariance}
\label{app:ek_drift_cov}

The mean-field drift is $b(x)=\sum_e \alpha_e(x)\nu_e$ and the instantaneous covariance is
$A(x)=\sum_e \alpha_e(x)\nu_e\nu_e^\top$.

\begin{lemma}[Local Lipschitz drift and continuous covariance]\label{lem:b_A_regular}
The drift field $b:\mathbb{R}_+^2\to\mathbb{R}^2$ is locally Lipschitz. Moreover, $A:\mathbb{R}_+^2\to\mathbb{R}^{2\times 2}$ is continuous and locally bounded.
\end{lemma}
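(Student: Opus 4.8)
The plan is to derive both assertions directly from Lemma~\ref{lem:alpha_loc_lip} together with the elementary fact that a finite linear combination of locally Lipschitz scalar functions with constant vector or matrix coefficients is again locally Lipschitz. No new analytic input beyond Lemma~\ref{lem:alpha_loc_lip} should be needed; the whole point is that the channel set is finite, so aggregation over $e$ is harmless.

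Concretely, I would fix an arbitrary compact set $K\subset\mathbb{R}_+^2$. By Lemma~\ref{lem:alpha_loc_lip} each propensity $\alpha_e$ is Lipschitz on $K$ with some constant $L_e=L_e(K)$ and, being continuous, bounded on $K$ by some $M_e=M_e(K)$. Since $b(x)=\sum_e\alpha_e(x)\,\nu_e$ is a sum of only five terms with fixed increment vectors $\nu_e\in\mathbb{Z}^2$, the triangle inequality gives $\|b(x)-b(y)\|\le\sum_e|\nu_e|\,|\alpha_e(x)-\alpha_e(y)|\le\big(\sum_e L_e|\nu_e|\big)\,\|x-y\|$ for all $x,y\in K$. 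As $K$ was arbitrary, $b$ is locally Lipschitz on $\mathbb{R}_+^2$. The identical argument applied to $A(x)=\sum_e\alpha_e(x)\,\nu_e\nu_e^\top$, now with fixed matrix coefficients $\nu_e\nu_e^\top$, shows that $A$ is in fact locally Lipschitz on $\mathbb{R}_+^2$; in particular $A$ is continuous, and $\sup_{x\in K}\|A(x)\|\le\sum_e M_e\,|\nu_e|^2<\infty$ yields local boundedness (this last inequality is essentially a restatement of Lemma~\ref{lem:local_growth_bound}).

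I do not expect a genuine obstacle: the lemma is a packaging of Lemma~\ref{lem:alpha_loc_lip} with the observation that finiteness of the channel set makes all the sums benign. The only point deserving a moment's care is that ``locally Lipschitz on $\mathbb{R}_+^2$'' is understood to include the boundary axes (the diffusion is run on the open orthant $U=(0,\infty)^2$, but $b$ and $A$ are defined up to $\partial\mathbb{R}_+^2$). This is unproblematic because the one-dimensional feedback maps $p_1,p_2,\lambda_P,\lambda_R$ are assumed locally Lipschitz on the closed half-line $[0,\infty)$, so their restrictions to the bounded closed intervals cut out by a compact $K$ touching an axis remain Lipschitz, and multiplication by the bounded coordinates $p,w$ preserves local Lipschitz continuity up to and including $\{p=0\}\cup\{w=0\}$.
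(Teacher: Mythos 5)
Your proof is correct and follows the same route as the paper: invoke Lemma~\ref{lem:alpha_loc_lip} on a compact set and exploit finiteness of the channel set to sum Lipschitz and boundedness estimates with the fixed coefficients $\nu_e$ and $\nu_e\nu_e^\top$. Your version even records slightly more (local Lipschitz continuity of $A$ and the boundary-inclusive reading of $\mathbb{R}_+^2$), but this is the same argument.
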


\begin{proof}
On any compact $K\subset\mathbb{R}_+^2$, Lemma~\ref{lem:alpha_loc_lip} implies each $\alpha_e$ is Lipschitz and bounded. Since $\{\nu_e\}_e$ is finite,
$b=\sum_e \alpha_e\nu_e$ is Lipschitz on $K$, and $A=\sum_e \alpha_e\nu_e\nu_e^\top$ is continuous and bounded on $K$.
\end{proof}

\begin{lemma}[$C^1$ drift under $C^1$ feedback]\label{lem:b_C1}
If, in addition, the feedback maps $p_1,p_2,\lambda_P,\lambda_R$ are continuously differentiable on $[0,\infty)$, then $b$ is continuously differentiable on $\mathbb{R}_+^2$ and $\nabla b$ is locally bounded.
\end{lemma}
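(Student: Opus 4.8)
The plan is to differentiate the explicit drift formula \eqref{eq:drift_explicit_ctmc} directly, using only the product and chain rules together with the fact that finite linear combinations of $C^1$ maps are $C^1$. First I would observe that under the extra hypothesis, the auxiliary probability $p_3 = 1 - p_1 - p_2$ is again $C^1$ on $[0,\infty)$, and that each propensity density $\alpha_e$ in \eqref{eq:propensity_densities_ctmc} is a finite product of one of the maps $p_1,p_2,p_3,\lambda_P,\lambda_R$ (pre-composed with the smooth coordinate projection $x=(p,w)\mapsto w$ or $x\mapsto p$) with a linear coordinate function; hence each $\alpha_e \in C^1(\mathbb{R}_+^2)$. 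Since $b=\sum_e \alpha_e\,\nu_e$ with constant stoichiometric vectors $\nu_e\in\mathbb{Z}^2$, this already gives $b\in C^1(\mathbb{R}_+^2)$. The only point deserving a word of care is that ``$C^1$ on $[0,\infty)$'' is understood with one-sided derivatives at the endpoint $0$; these one-sided derivatives are preserved under the sums, products, and coordinate compositions used to assemble $\alpha_e$ and $b$, so $C^1$-regularity holds up to the coordinate axes.

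For the local boundedness of $\nabla b$, the cleanest route is to note that $\nabla b$, being the gradient of a $C^1$ map, is continuous on $\mathbb{R}_+^2$, and a continuous function is automatically bounded on every compact subset. If one prefers an explicit estimate, differentiate \eqref{eq:drift_explicit_ctmc} entrywise: each entry of the Jacobian is a polynomial expression in $p$, $w$, the values $p_1(w),p_2(w),\lambda_P(w),\lambda_R(p)$, and the derivatives $p_1'(w),p_2'(w),\lambda_P'(w),\lambda_R'(p)$ --- for example $\partial_p b_1 = (p_1(w)-p_2(w))\lambda_P(w) + \lambda_R'(p)\,w$ and $\partial_w b_1 = \big(p_1'(w)-p_2'(w)\big)\lambda_P(w)\,p + (p_1(w)-p_2(w))\lambda_P'(w)\,p + \lambda_R(p)$. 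Fixing a compact $K\subset\mathbb{R}_+^2$ with axis projections $K_p,K_w\subset[0,\infty)$, the coordinates are bounded on $K$, the feedback maps are continuous hence bounded on $K_w$ or $K_p$, and their derivatives are continuous on $[0,\infty)$ hence bounded on $K_w$ or $K_p$; so every Jacobian entry, and therefore $\|\nabla b(x)\|$, is bounded on $K$.

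I do not anticipate a genuine obstacle here: the statement is a routine consequence of multivariable calculus once the explicit form \eqref{eq:drift_explicit_ctmc} is in hand, and the proof is essentially the $C^1$ refinement of Lemma~\ref{lem:b_A_regular}. The mild subtlety to flag --- rather than an obstacle --- is the boundary case at the axes noted above; one should also remark that $C^1$ feedback yields only continuity (not Lipschitz continuity) of $\nabla b$, so stronger control of $\nabla b$, if needed elsewhere, would require $C^2$ feedback. This is consistent with the hypotheses of \Cref{thm:fclt}, which only require $b\in C^1$.
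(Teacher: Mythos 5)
Your proposal is correct and follows essentially the same route as the paper: $b$ is assembled from sums and products of $C^1$ one-dimensional feedback maps with the coordinates $p,w$ (via \eqref{eq:drift_explicit_ctmc}), hence is $C^1$, and local boundedness of $\nabla b$ follows from continuity on compact sets. Your explicit Jacobian entries and the remark about one-sided derivatives at the axes are fine additional detail but not a different argument.
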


\begin{proof}
Each component of $b$ is a finite combination of products and sums of $C^1$ one-dimensional maps with the coordinates $p,w$ (see \eqref{eq:drift_explicit_ctmc}), hence is $C^1$ on $\mathbb{R}_+^2$. Local boundedness of $\nabla b$ follows from continuity on compact sets.
\end{proof}

\subsubsection{Application to the LLN and FCLT statements}
\label{app:ek_apply}

The preceding lemmas verify the standard hypotheses required by the LLN and FCLT for density-dependent Markov chains, as formulated in \citep{ethier_markov_1986}. In particular:
\begin{itemize}
\item Lemmas~\ref{lem:alpha_loc_lip} and \ref{lem:local_growth_bound} provide the local Lipschitz and local growth conditions on propensities.
\item Lemma~\ref{lem:b_A_regular} gives local Lipschitz continuity of the drift and continuity of the covariance.
\item Lemma~\ref{lem:b_C1} provides the $C^1$ regularity of $b$ assumed in Theorem~\ref{thm:fclt}.
\end{itemize}
These properties justify invoking the standard LLN and FCLT results for density-dependent processes in Section~\ref{sec:diffusion}, yielding Theorems~\ref{thm:mean_field_limit} and \ref{thm:fclt} for the lineage CTMC introduced in Section~\ref{sec:ctmc}.

\subsection{Cutoff extensions and auxiliary inequalities for the CLE analysis}
\label{app:cutoff_extension}

This appendix collects technical lemmas used in the stochastic analysis of the chemical Langevin diffusion (Section~\ref{sec:stochastic_analysis}). The main purpose is to justify the cutoff-extension argument invoked in the proof sketches of Proposition~\ref{prop:stopped_strong} and to record standard inequalities used in the moment estimates.

\subsubsection{Lipschitz extension with smooth cutoff}
\label{app:lip_cutoff}

We first state a convenient extension lemma. It allows one to take coefficients that are Lipschitz on a compact set $K\subset U$ and build globally Lipschitz, bounded coefficients on $U$ that coincide with the original ones on $K$. This is a standard device for constructing stopped strong solutions.

\begin{lemma}[Lipschitz extension with smooth cutoff]\label{lem:lipschitz_cutoff_appendix}
Let $U\subset\mathbb{R}^d$ be open and let $K\subset U$ be relatively closed. Let $\tilde K$ be open with
\[
K\subset \tilde K \Subset U.
\]
Fix $r\in\mathbb{N}$. If $\mu:K\to\mathbb{R}^{d\times r}$ is bounded and Lipschitz on $K$, then there exists an extension
\[
\mu^\sharp\in \mathrm{Lip}(U)\cap L^\infty(U)
\quad\text{such that}\quad
\mu^\sharp\big|_K=\mu.
\]
Moreover, for any $\chi\in C_c^\infty(U)$ satisfying $0\le \chi\le 1$, $\chi\equiv 1$ on $K$, and $\mathrm{supp}(\chi)\subset \tilde K$, the cutoff extension
\[
\tilde\mu(x)\coloneqq \chi(x)\,\mu^\sharp(x),\qquad x\in U,
\]
is bounded and globally Lipschitz on $U$. In particular,
\begin{equation}\label{eq:lip_bound_cutoff_appendix}
\mathrm{Lip}(\tilde\mu)\ \le \ \mathrm{Lip}(\mu^\sharp)+\|\mu^\sharp\|_{L^\infty}\,\|\nabla\chi\|_{L^\infty}.
\end{equation}
\end{lemma}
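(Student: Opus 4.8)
The plan is to construct $\mu^\sharp$ in two stages and then read off \eqref{eq:lip_bound_cutoff_appendix} from a Leibniz-type inequality for products of Lipschitz maps. First I would extend $\mu$ off $K$ by a componentwise McShane construction. Writing $L:=\mathrm{Lip}(\mu)$ and $\mu=(\mu_{ij})$ with each scalar component $L$-Lipschitz on $K$, I would set $\hat\mu_{ij}(x):=\inf_{y\in K}\bigl(\mu_{ij}(y)+L\|x-y\|\bigr)$ for $x\in\mathbb{R}^d$. Boundedness of $\mu$ on $K$ makes this infimum finite, and the standard McShane argument shows $\hat\mu_{ij}$ is $L$-Lipschitz on all of $\mathbb{R}^d$ and restricts to $\mu_{ij}$ on $K$ (at $x\in K$ the choice $y=x$ realizes the value $\mu_{ij}(x)$, while the $L$-Lipschitz bound on $K$ rules out any strictly smaller value). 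Assembling components gives $\hat\mu:\mathbb{R}^d\to\mathbb{R}^{d\times r}$ Lipschitz with Euclidean constant at most $\sqrt{dr}\,L$—or exactly $L$ if one prefers to invoke Kirszbraun's theorem—and agreeing with $\mu$ on $K$.

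The second stage restores boundedness. Let $B:=\sup_{y\in K}\|\mu(y)\|<\infty$ and let $\Pi_B$ be the nearest-point projection of $\mathbb{R}^{d\times r}$ onto the closed ball $\{z:\|z\|\le B\}$, which is a $1$-Lipschitz retraction onto a convex set. I would then take $\mu^\sharp:=\Pi_B\circ\hat\mu$, which satisfies $\|\mu^\sharp\|_{L^\infty}\le B$, $\mathrm{Lip}(\mu^\sharp)\le\mathrm{Lip}(\hat\mu)$, and $\mu^\sharp|_K=\mu$ (since $\|\mu(y)\|\le B$ forces $\Pi_B(\mu(y))=\mu(y)$). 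Restricting to $U$ delivers $\mu^\sharp\in\mathrm{Lip}(U)\cap L^\infty(U)$ with $\mu^\sharp|_K=\mu$, establishing the first assertion.

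For the cutoff product, fix $\chi$ as in the statement and set $\tilde\mu:=\chi\,\mu^\sharp$. Boundedness is immediate from $0\le\chi\le1$. Since $\chi\in C_c^\infty(U)$ it is globally Lipschitz with constant $\|\nabla\chi\|_{L^\infty}$, so for $x,x'\in U$, inserting the cross term $\chi(x')\mu^\sharp(x)$ and applying the triangle inequality,
\begin{align*}
\|\tilde\mu(x)-\tilde\mu(x')\|
&\le|\chi(x)-\chi(x')|\,\|\mu^\sharp(x)\|+\chi(x')\,\|\mu^\sharp(x)-\mu^\sharp(x')\|\\
&\le\bigl(\|\mu^\sharp\|_{L^\infty}\|\nabla\chi\|_{L^\infty}+\mathrm{Lip}(\mu^\sharp)\bigr)\|x-x'\|,
\end{align*}
which is precisely \eqref{eq:lip_bound_cutoff_appendix}. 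Because $\mathrm{supp}(\chi)\subset\tilde K\Subset U$, the map $\tilde\mu$ vanishes near $\partial U$, so this estimate holds globally on $U$ and $\tilde\mu$ extends by zero to a bounded globally Lipschitz map on $\mathbb{R}^d$.

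I expect the only point needing care to be the truncation step: the bare McShane extension of a bounded Lipschitz function can grow linearly and need not be bounded, so composing with the $1$-Lipschitz retraction onto the ball of radius $B$—rather than a cruder cap that might destroy the Lipschitz property—is what simultaneously secures $\mu^\sharp\in L^\infty$, the inequality $\mathrm{Lip}(\mu^\sharp)\le\mathrm{Lip}(\hat\mu)$, and the interpolation property on $K$. The remaining steps are routine.
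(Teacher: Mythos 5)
Your proof is correct and follows essentially the same route as the paper: a componentwise McShane extension followed by the cutoff product and the triangle-inequality estimate yielding \eqref{eq:lip_bound_cutoff_appendix}. Your explicit post-composition with the $1$-Lipschitz nearest-point projection onto the ball of radius $B$ is a clean implementation of the truncation the paper only mentions in passing to secure $\mu^\sharp\in L^\infty(U)$, so no gap remains.
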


\begin{proof}
\emph{Step 1: Lipschitz extension.}
By the McShane extension theorem (for scalar-valued Lipschitz maps) \citep{mcshane_extension_1934,azagra_kirszbrauns_2021} applied component-wise, each component of $\mu$ admits a Lipschitz extension to $U$ with the same Lipschitz constant. Assembling these component-wise extensions yields $\mu^\sharp\in \mathrm{Lip}(U)$ with $\mu^\sharp|_K=\mu$. Boundedness of $\mu^\sharp$ follows since $\mu$ is bounded on $K$ and the McShane extension preserves the Lipschitz seminorm (hence growth on bounded sets); alternatively one may truncate outside a sufficiently large ball, since only values on $\tilde K\Subset U$ will be used after cutoff.

\emph{Step 2: Cutoff and global Lipschitz bound.}
Let $\chi\in C_c^\infty(U)$ be as stated and define $\tilde\mu=\chi\mu^\sharp$. Then $\tilde\mu$ is bounded because $\chi$ is bounded and compactly supported and $\mu^\sharp$ is bounded. For any $x,y\in U$,
\[
\|\tilde\mu(x)-\tilde\mu(y)\|
\le \|\chi(x)-\chi(y)\|\,\|\mu^\sharp(x)\|+\chi(y)\,\|\mu^\sharp(x)-\mu^\sharp(y)\|.
\]
Using $\|\chi(x)-\chi(y)\|\le \|\nabla\chi\|_{L^\infty}\|x-y\|$ and the Lipschitz property of $\mu^\sharp$ yields \eqref{eq:lip_bound_cutoff_appendix}. Finally, since $\chi\equiv 1$ on $K$, we have $\tilde\mu|_K=\mu^\sharp|_K=\mu$.
\end{proof}

\subsubsection{A concrete cutoff construction for the exhaustion \texorpdfstring{$K_m=[m^{-1},m]^2$}{}}
\label{app:cutoff_construction}

We now record a concrete cutoff family adapted to the compact exhaustion used in Section~\ref{sec:stochastic_analysis}. While any standard construction suffices, it is convenient to have an explicit choice to keep constants transparent.

For $m\in\mathbb{N}$ recall
\[
K_m=[m^{-1},m]^2\subset U=(0,\infty)^2.
\]
Define a slightly larger open set
\[
\tilde K_m\coloneqq \big((m+1)^{-1},m+1\big)^2\Subset U.
\]
Let $s\in C^\infty(\mathbb R)$, $0\le s\le 1$, $s(t)=0$ for $t\le0$ and $s(t)=1$ for $t\ge1$. 
Define for $x\in\mathbb R$:
\begin{align*}
\phi_m(x)\coloneqq s\!\left(\frac{x-(m+1)^{-1}}{\,1/m-(m+1)^{-1}\,}\right),
\qquad
\psi_m(x)\coloneqq s\!\left(\frac{(m+1)-x}{\,m+1-m\,}\right).
\end{align*}
Define the one-dimensional cutoff
\[
\eta_m(x)\coloneqq \phi_m(x)\,\psi_m(x),
\]
and set the two-dimensional cutoff
\begin{equation}\label{eq:chi_m_def}
\chi_m(p,w)\coloneqq \eta_m(p)\,\eta_m(w).
\end{equation}
Then $\chi_m\in C_c^\infty(U)$, $\chi_m\equiv 1$ on $K_m$, and $\mathrm{supp}(\chi_m)\subset \tilde K_m$. Moreover, $\|\nabla\chi_m\|_{L^\infty}$ is finite for each fixed $m$ and scales at most polynomially in $m$ (the steep transition occurs near $p=1/m$ and $w=1/m$, yielding $\|\nabla\chi_m\|_{L^\infty}=\mathcal{O}(m^2)$).

Given coefficients $b,G$ that are locally Lipschitz on $U$, the restrictions $b|_{K_m}$ and $G|_{K_m}$ are bounded and Lipschitz. Applying Lemma~\ref{lem:lipschitz_cutoff_appendix} with $K=K_m$ and $\tilde K=\tilde K_m$ yields globally Lipschitz, bounded coefficients
\[
b^{(m)}=\chi_m b^\sharp,\qquad G^{(m)}=\chi_m G^\sharp
\]
on $U$ such that $b^{(m)}=b$ and $G^{(m)}=G$ on $K_m$. This is the construction invoked in Proposition~\ref{prop:stopped_strong}.

\subsubsection{Standard inequalities used in the moment bounds}
\label{app:moment_inequalities}

We collect two elementary inequalities used in the proof of Theorem~\ref{thm:moment_bounds}.

\begin{lemma}[PSD quadratic form bound]\label{lem:psd_quadratic_bound}
Let $A\in\mathbb{R}^{d\times d}$ be symmetric positive semidefinite and let $x\in\mathbb{R}^d$. Then
\[
x^\top A x \le \|x\|^2\,\mathrm{Tr}(A).
\]
\end{lemma}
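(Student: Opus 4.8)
The plan is to reduce everything to the spectral decomposition of $A$ and then use two elementary facts: Parseval's identity for an orthonormal eigenbasis, and the observation that a single nonnegative eigenvalue is bounded by the sum of all of them. Since $A$ is symmetric, the spectral theorem provides an orthonormal basis $u_1,\dots,u_d$ of $\mathbb{R}^d$ and real eigenvalues $\lambda_1,\dots,\lambda_d$ with $A u_i = \lambda_i u_i$; positive semidefiniteness gives $\lambda_i\ge 0$ for every $i$, and by definition $\mathrm{Tr}(A)=\sum_{i=1}^d\lambda_i$.

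First I would expand $x$ in this eigenbasis, writing $x=\sum_{i=1}^d c_i u_i$ with $c_i=u_i^\top x$, so that $\|x\|^2=\sum_{i=1}^d c_i^2$ by orthonormality. Then $x^\top A x=\sum_{i=1}^d \lambda_i c_i^2$. Since each $\lambda_i\ge 0$, we have $\lambda_i\le\sum_{j=1}^d\lambda_j=\mathrm{Tr}(A)$ for every $i$, and hence
\[
x^\top A x=\sum_{i=1}^d \lambda_i c_i^2\ \le\ \mathrm{Tr}(A)\sum_{i=1}^d c_i^2=\|x\|^2\,\mathrm{Tr}(A),
\]
which is the claimed inequality. (Equivalently, one may phrase the middle step as $x^\top A x\le \lambda_{\max}(A)\|x\|^2$ via the Rayleigh quotient, combined with $\lambda_{\max}(A)\le\mathrm{Tr}(A)$.)

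There is no real obstacle here: the statement is a routine linear-algebra fact, and the only points worth stating carefully are that positive semidefiniteness is used exactly twice — once to guarantee $\lambda_i\ge 0$ so that $\lambda_i\le\mathrm{Tr}(A)$, and implicitly to ensure the Rayleigh bound is meaningful — and that the eigenbasis is orthonormal so that Parseval's identity $\sum_i c_i^2=\|x\|^2$ holds. The result is then invoked in the proof of \Cref{thm:moment_bounds} with $A=A(x)=G(x)G(x)^\top$, which is positive semidefinite by the factorization \eqref{eq:G_matrix_ctmc}, to replace $x^\top A(x)x$ by $\|x\|^2\,\mathrm{Tr}\,A(x)$ and thereby collapse the two curvature terms from It\^o's formula into a single $\mathrm{Tr}\,A$ contribution controlled by \Cref{ass:linear_growth}.
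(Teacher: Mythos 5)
Your proof is correct and follows essentially the same route as the paper: both diagonalize $A$ via the spectral theorem, use nonnegativity of the eigenvalues to bound the weighted sum $\sum_i \lambda_i c_i^2$ by $\big(\sum_i\lambda_i\big)\big(\sum_i c_i^2\big)$, and identify these factors with $\mathrm{Tr}(A)$ and $\|x\|^2$ via orthogonality. No gaps; nothing further needed.
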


\begin{proof}
Diagonalize $A=Q^\top \Lambda Q$ with $\Lambda=\mathrm{diag}(\lambda_1,\dots,\lambda_d)$, $\lambda_i\ge 0$. Then
\[
x^\top A x = (Qx)^\top \Lambda (Qx)=\sum_{i=1}^d \lambda_i (Qx)_i^2
\le \left(\sum_{i=1}^d \lambda_i\right)\left(\sum_{i=1}^d (Qx)_i^2\right)
=\mathrm{Tr}(A)\,\|x\|^2,
\]
since $Q$ is orthogonal.
\end{proof}

\begin{lemma}[Polynomial comparison]\label{lem:poly_comparison}
Let $q\ge 2$ and $x\in\mathbb{R}^d$. Then
\[
\|x\|^{q-2}\le 1+\|x\|^q.
\]
\end{lemma}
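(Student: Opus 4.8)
The plan is to reduce the vector statement to a one-variable inequality. Writing $r:=\|x\|\ge 0$, the claim becomes: for every $q\ge 2$ and every $r\ge 0$, one has $r^{q-2}\le 1+r^q$. Since $q-2\ge 0$, both maps $t\mapsto t^{q-2}$ and $t\mapsto t^{q}$ are nondecreasing on $[0,\infty)$ (with the usual convention $t^{0}\equiv 1$ when $q=2$). I would then argue by a case split on whether $r\le 1$ or $r>1$.

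In the case $r\le 1$, monotonicity of $t\mapsto t^{q-2}$ gives $r^{q-2}\le 1^{q-2}=1\le 1+r^{q}$. In the case $r>1$, since $q-2\le q$ and $r>1$, we have $r^{q-2}\le r^{q}\le 1+r^{q}$. Combining the two cases yields $r^{q-2}\le 1+r^{q}$ for all $r\ge 0$, i.e.\ $\|x\|^{q-2}\le 1+\|x\|^{q}$, as asserted. Equivalently, the whole argument collapses to the one-line bound $r^{q-2}\le\max\{1,r^{q}\}\le 1+r^{q}$.

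The only point meriting a word of care is the borderline exponent $q=2$, where $\|x\|^{q-2}=1$ identically and the inequality $1\le 1+\|x\|^{q}$ is immediate, so no separate treatment is required; the same remark handles the degenerate value $r=0$. There is no substantive obstacle here: the statement is a routine comparison of powers, and the dichotomy $\|x\|\le 1$ versus $\|x\|>1$ is the entire content of the proof.
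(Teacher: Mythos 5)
Your proposal is correct and is essentially the same argument the paper gives: a case split on $\|x\|\le 1$ (where $\|x\|^{q-2}\le 1$) versus $\|x\|\ge 1$ (where $\|x\|^{q-2}\le \|x\|^{q}$). Nothing is missing, including the borderline case $q=2$, which the paper handles implicitly in the same way.
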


\begin{proof}
If $\|x\|\le 1$ then $\|x\|^{q-2}\le 1$. If $\|x\|\ge 1$ then $\|x\|^{q-2}\le \|x\|^q$. Combining the two cases yields the claim.
\end{proof}

\subsubsection{Expanded It\^o-generator computation for \texorpdfstring{$V_q(x)=\|x\|^q$}{}}
\label{app:ito_Vq}

For completeness we expand the generator computation underlying Theorem~\ref{thm:moment_bounds}. Let $q\ge 4$ and define $V_q(x)=\|x\|^q$ for $x\in U$. Then
\[
\nabla V_q(x)=q\|x\|^{q-2}x,\qquad
\nabla^2 V_q(x)=q\|x\|^{q-2}I + q(q-2)\|x\|^{q-4}xx^\top.
\]
With $\mathcal{L}$ given by \eqref{eq:generator_main},
\begin{align*}
(\mathcal{L}V_q)(x)
&= b(x)\cdot \nabla V_q(x)+\frac{1}{2\Omega}\mathrm{Tr}\!\big(A(x)\nabla^2 V_q(x)\big)\\
&= q\|x\|^{q-2}\,x\cdot b(x)
+ \frac{q}{2\Omega}\|x\|^{q-2}\mathrm{Tr}\,A(x)
+ \frac{q(q-2)}{2\Omega}\|x\|^{q-4}\,x^\top A(x)x.
\end{align*}
Applying Lemma~\ref{lem:psd_quadratic_bound} yields
\[
(\mathcal{L}V_q)(x)
\le q\|x\|^{q-2}\,x\cdot b(x)
+ \frac{q(q-1)}{2\Omega}\|x\|^{q-2}\mathrm{Tr}\,A(x).
\]
Under Assumption~\ref{ass:linear_growth}, we then have
\[
(\mathcal{L}V_q)(x)
\le qK_b\|x\|^{q-2}(1+\|x\|^2)
+\frac{q(q-1)}{2\Omega}K_A\|x\|^{q-2}(1+\|x\|^2).
\]
Finally, using Lemma~\ref{lem:poly_comparison} to bound $\|x\|^{q-2}(1+\|x\|^2)\le C(1+\|x\|^q)$ for an absolute constant $C$, we obtain the bound
\[
(\mathcal{L}V_q)(x)\le C_q\big(1+\|x\|^q\big),
\]
which is the estimate used in the Gr\"onwall argument of Theorem~\ref{thm:moment_bounds}.

\section{PDE well-posedness framework: semigroups, renewal traces, and fixed points}
\label{app:pde_wellposed}

This appendix outlines a standard $L^1$ well-posedness framework for the structured transport system with renewal boundary conditions introduced in Section~\ref{subsec:pde_to_ode}. The objective is to justify Theorem~\ref{thm:pde_wellposed_backbone}: existence, uniqueness, and positivity of mild solutions, and the validity of the mass-balance identities which lead to the exact reduction \eqref{eq:totals_ode_backbone}. The argument follows classical semigroup and fixed-point constructions for McKendrick--von~Foerster type equations with renewal (inflow) boundaries. We emphasize that the purpose of this appendix is methodological: it provides a clean functional-analytic scaffold for trace and renewal theory.

\subsection{Setting, notation, and standing assumptions}
\label{app:pde_setting}

We consider the structured PDE system (Section~\ref{subsec:pde_to_ode})
\begin{equation}\label{eq:pde_app}
\begin{cases}
\partial_t S(t,x) + \partial_x\!\big(v_S(x)\,S(t,x)\big)
= -\lambda_P^{\mathrm{tot}}(W(t))\,S(t,x) + \lambda_R^{\mathrm{tot}}(P(t))\,T(t,x),\\[4pt]
\partial_t T(t,x) + \partial_x\!\big(v_T(x)\,T(t,x)\big)
= -\big(\delta_0+\lambda_R^{\mathrm{tot}}(P(t))\big)\,T(t,x),
\end{cases}
\qquad t>0,\ x>0,
\end{equation}
with renewal boundary conditions
\begin{equation}\label{eq:bc_app}
\begin{cases}
v_S(0)\,S(t,0)
= \big(1+p_1^{\mathrm{tot}}(W(t))-p_2^{\mathrm{tot}}(W(t))\big)\,\lambda_P^{\mathrm{tot}}(W(t))\,P(t),\\[4pt]
v_T(0)\,T(t,0)
= \big(1-p_1^{\mathrm{tot}}(W(t))+p_2^{\mathrm{tot}}(W(t))\big)\,\lambda_P^{\mathrm{tot}}(W(t))\,P(t),
\end{cases}
\qquad t>0,
\end{equation}
and totals
\begin{equation}\label{eq:totals_app}
P(t)=\int_0^\infty S(t,x)\,dx,\qquad W(t)=\int_0^\infty T(t,x)\,dx.
\end{equation}
The initial data are
\begin{equation}\label{eq:init_app}
S(0,\cdot)=S_0,\qquad T(0,\cdot)=T_0,
\end{equation}
with $S_0,T_0\ge 0$ a.e.

We work on the Banach space
\[
X\coloneqq L^1(\mathbb{R}_+)\times L^1(\mathbb{R}_+),
\qquad \|(S,T)\|_X=\|S\|_{L^1}+\|T\|_{L^1}.
\]
We assume:
\begin{enumerate}[label=\textnormal{(B\arabic*)}]
\item \textnormal{(Initial regularity).} $S_0,T_0\in L^1(\mathbb{R}_+)\cap L^\infty(\mathbb{R}_+)$ and $S_0,T_0\ge 0$ a.e.
\item \textnormal{(Velocities).} $v_S,v_T\in W^{1,\infty}(\mathbb{R}_+)$ and $v_S(x),v_T(x)\ge 0$ for all $x\ge 0$.
\item \textnormal{(Feedback maps).} $p_1^{\mathrm{tot}},p_2^{\mathrm{tot}}:[0,\infty)\to[0,1]$, $\lambda_P^{\mathrm{tot}}:[0,\infty)\to(0,\infty)$,
$\lambda_R^{\mathrm{tot}}:[0,\infty)\to[0,\infty)$ are bounded and locally Lipschitz, with $p_1^{\mathrm{tot}}(W)+p_2^{\mathrm{tot}}(W)\le 1$.
\end{enumerate}
Finally, for the mass-balance identities we impose a standard vanishing-flux condition at infinity:
\begin{equation}\label{eq:flux_integrability_app}
v_S(\cdot)\,S(t,\cdot)\in L^1(\mathbb{R}_+),\qquad v_T(\cdot)\,T(t,\cdot)\in L^1(\mathbb{R}_+),
\end{equation}
for a.e.\ $t>0$, implying $\lim_{x\to\infty} v_S(x)S(t,x)=\lim_{x\to\infty} v_T(x)T(t,x)=0$ along a subsequence.

\subsection{Linear transport-with-renewal operator at frozen totals}
\label{app:operator}

Fix $(P,W)\in\mathbb{R}_+^2$. Consider the linear system obtained by freezing the coefficients at $(P,W)$:
\begin{equation}\label{eq:linear_frozen}
\begin{cases}
\partial_t S + \partial_x(v_S S)
= -\lambda_P^{\mathrm{tot}}(W)\,S + \lambda_R^{\mathrm{tot}}(P)\,T,\\[3pt]
\partial_t T + \partial_x(v_T T)
= -\big(\delta_0+\lambda_R^{\mathrm{tot}}(P)\big)\,T,
\end{cases}
\end{equation}
with prescribed renewal influxes
\begin{equation}\label{eq:linear_bc_frozen}
\begin{cases}
v_S(0)S(t,0)=\big(1+p_1^{\mathrm{tot}}(W)-p_2^{\mathrm{tot}}(W)\big)\lambda_P^{\mathrm{tot}}(W)\,P,\\[3pt]
v_T(0)T(t,0)=\big(1-p_1^{\mathrm{tot}}(W)+p_2^{\mathrm{tot}}(W)\big)\lambda_P^{\mathrm{tot}}(W)\,P.
\end{cases}
\end{equation}

Introduce the operator $\mathcal{A}_{P,W}$ on $X$ by
\begin{equation}\label{eq:A_def_app}
\mathcal{A}_{P,W}
\begin{pmatrix}S\\T\end{pmatrix}
\coloneqq
\begin{pmatrix}
-\partial_x(v_S S)-\lambda_P^{\mathrm{tot}}(W)S+\lambda_R^{\mathrm{tot}}(P)T\\[2pt]
-\partial_x(v_T T)-(\delta_0+\lambda_R^{\mathrm{tot}}(P))T
\end{pmatrix},
\end{equation}
with domain
\begin{equation}\label{eq:domain_app}
D(\mathcal{A}_{P,W})\coloneqq
\left\{
(S,T)\in W^{1,1}(\mathbb{R}_+)^2:\ 
\begin{array}{l}
v_SS\in W^{1,1}(\mathbb{R}_+),\ v_TT\in W^{1,1}(\mathbb{R}_+),\\[2pt]
v_S(0)S(0)=\big(1+p_1^{\mathrm{tot}}(W)-p_2^{\mathrm{tot}}(W)\big)\lambda_P^{\mathrm{tot}}(W)\,P,\\[2pt]
v_T(0)T(0)=\big(1-p_1^{\mathrm{tot}}(W)+p_2^{\mathrm{tot}}(W)\big)\lambda_P^{\mathrm{tot}}(W)\,P
\end{array}
\right\}.
\end{equation}
The boundary conditions in \eqref{eq:domain_app} encode the renewal traces \eqref{eq:linear_bc_frozen}. Under \textnormal{(B2)} and standard trace theory for transport equations, the inflow traces $v_S(0)S(t,0)$ and $v_T(0)T(t,0)$ are meaningful for mild solutions (see, e.g., semigroup treatments of McKendrick--von~Foerster equations with inflow boundary conditions).

\subsection{Semigroup for frozen coefficients and positivity}
\label{app:semigroup}

For each fixed $(P,W)$, the linear problem \eqref{eq:linear_frozen}--\eqref{eq:linear_bc_frozen} is a transport system with bounded reaction terms and prescribed inflow at $x=0$. It admits an explicit characteristic representation. In particular, the solution depends continuously on the initial data and the boundary influxes, and it preserves nonnegativity.

\begin{lemma}[Linear well-posedness at frozen totals]\label{lem:linear_frozen_wellposed}
Fix $(P,W)\in\mathbb{R}_+^2$. Under assumptions \textnormal{(B1)}--\textnormal{(B3)}, for every $U_0\in X$ there exists a unique mild solution $U\in C([0,\infty);X)$ of \eqref{eq:linear_frozen}--\eqref{eq:linear_bc_frozen}. Moreover, the associated evolution operators define a strongly continuous positive semigroup $\{S_{P,W}(t)\}_{t\ge 0}$ on $X$:
\[
U(t)=S_{P,W}(t)U_0,\qquad t\ge 0.
\]
\end{lemma}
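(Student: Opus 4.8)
The plan is to exploit the triangular structure of \eqref{eq:linear_frozen}: with $(P,W)$ fixed, the $T$-equation involves only $T$, so I would solve it first by the method of characteristics and then treat the $S$-equation as a scalar transport equation with a known nonnegative source $\lambda_R^{\mathrm{tot}}(P)\,T(t,\cdot)$ and a prescribed inflow trace at $x=0$. Since $v_T\in W^{1,\infty}(\mathbb{R}_+)$ with $v_T\ge 0$ (assumption \textnormal{(B2)}), the characteristic ODE $\dot\xi=v_T(\xi)$ has a unique, monotone nondecreasing flow, along which the equation reduces to a linear scalar ODE; integrating produces the explicit formula expressing $T(t,x)$ as the datum $T_0$ transported along characteristics with exponential weight $e^{-(\delta_0+\lambda_R^{\mathrm{tot}}(P))\sigma}$ (and a Jacobian factor bounded by $e^{\|v_T'\|_\infty t}$), plus, for characteristics emanating from the boundary, the prescribed influx $(1-p_1^{\mathrm{tot}}(W)+p_2^{\mathrm{tot}}(W))\lambda_P^{\mathrm{tot}}(W)P$ carried forward with the same weight. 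Substituting this $T$ into the $S$-equation and repeating the construction (now with the extra source) yields a closed characteristic representation of $S(t,x)$ as well.

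From these formulas the asserted properties follow by routine estimates, which I would only sketch. Boundedness of $v_S',v_T'$ controls the Jacobians, the exponential weights are $\le 1$ (indeed $\le e^{-\delta_0\sigma}$ for $T$), and the boundary-originated mass at time $t$ equals $\int_0^t (\text{influx rate})\,e^{-(\delta_0+\lambda_R^{\mathrm{tot}}(P))(t-s)}\,ds\le (\text{influx rate})/\delta_0$; an $L^1$ Gr\"onwall estimate $\|U(t)\|_X\le C(\|U_0\|_X+tP)e^{\omega t}$, with $\omega$ depending only on the bounds for $\lambda_P^{\mathrm{tot}},\lambda_R^{\mathrm{tot}},\delta_0$ and $\|v_S'\|_\infty,\|v_T'\|_\infty$, then shows $U(t)\in X$, and $L^1$-continuity of the transport flows together with uniform boundedness of the reaction and source coefficients gives $U\in C([0,\infty);X)$. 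Positivity is immediate: every term in the characteristic representations of $S$ and $T$ is nonnegative (transported nonnegative data, nonnegative source $\lambda_R^{\mathrm{tot}}(P)T\ge 0$, nonnegative boundary influx), so $S_0,T_0\ge 0$ forces $S(t,\cdot),T(t,\cdot)\ge 0$. Uniqueness follows from linearity: the difference of two mild solutions with the same data satisfies the homogeneous problem, for which the same estimate with $P$ replaced by $0$ gives $\|U^{(1)}(t)-U^{(2)}(t)\|_X\le \omega\int_0^t\|U^{(1)}(s)-U^{(2)}(s)\|_X\,ds$, hence the difference vanishes. Uniqueness then yields the evolution identity (the solution at time $t+s$ started from $U_0$ equals the solution at time $t$ started from $U(s)$), so the operators $S_{P,W}(t)$ obey the semigroup law; strong continuity is the $t\to 0$ case of the continuity above combined with the uniform operator bound, and positivity of the operators is positivity of the solution map. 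To identify the generator as $\mathcal{A}_{P,W}$ with domain \eqref{eq:domain_app}, I would solve the resolvent equation $(\lambda-\mathcal{A}_{P,W})U=F$ for large $\lambda>0$ explicitly — a first-order linear ODE in $x$ for $T$, then for $S$, subject to the prescribed trace at $x=0$ — whose solution lies in $D(\mathcal{A}_{P,W})$ with $\|U\|_X\le\|F\|_X/(\lambda-\omega)$ and depends positively on $F$; the Hille--Yosida/Lumer--Phillips theorem then matches the positive $C_0$-semigroup with the one built by characteristics. (When the prescribed influx is nonzero the evolution is affine rather than linear; one writes $U(t)=S^0_{P,W}(t)U_0+\int_0^t S^0_{P,W}(t-s)\,r_{P,W}\,ds$ in terms of the genuinely linear homogeneous-trace semigroup $S^0_{P,W}$ and a boundary lifting $r_{P,W}$ of the renewal data, which is the form used in the subsequent fixed-point argument completing Theorem~\ref{thm:pde_wellposed_backbone}.)

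The main obstacle is the boundary trace theory for merely $L^1$ mild solutions: giving rigorous meaning to $v_S(0)S(t,0)$ and $v_T(0)T(t,0)$ as normal traces of the transport fluxes $v_SS$ and $v_TT$, and verifying that the renewal identities \eqref{eq:linear_bc_frozen} are attained for a.e.\ $t$. This is where the degeneracy of $v_S,v_T$ bites: where a velocity vanishes, characteristics stall at the boundary, no mass can enter through $x=0$, and a prescribed positive influx is realizable only in a boundary-layer/limiting sense; the clean statement really wants $v_S(0),v_T(0)>0$ (or positivity on a one-sided neighbourhood of $0$), and under \textnormal{(B2)} alone one reads the characteristic formula with the convention that no influx enters where the velocity is zero, so that the trace identities and the flux integrability \eqref{eq:flux_integrability_app} propagate. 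Granting this, the semigroup law, positivity, strong continuity, and the resolvent bound are all routine consequences of the explicit characteristic representation.
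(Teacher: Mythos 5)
Your plan follows essentially the same route the paper intends for this lemma (and only sketches in the accompanying remark): exploit the triangular structure, solve the $T$-equation and then the $S$-equation by characteristics with inflow boundary data, obtain $L^1$ Gr\"onwall bounds from the boundedness of the coefficients, read off positivity from the nonnegativity of every term in the characteristic representation, and invoke standard $C_0$-semigroup/resolvent machinery for the generator, so the proposal is correct and matches the paper's approach. Your two caveats are apt and sharpen points the paper leaves implicit: with a nonzero prescribed influx the solution map is affine rather than linear, so the ``semigroup'' statement is properly read through the variation-of-constants form $U(t)=S^0_{P,W}(t)U_0+\int_0^t S^0_{P,W}(t-s)\,r_{P,W}\,ds$ with the homogeneous-trace semigroup $S^0_{P,W}$, and assumption (B2) alone allows $v_S(0)=0$ or $v_T(0)=0$, in which case a positive renewal influx is not attainable and strict positivity of the velocities at $x=0$ (or your stated convention) is genuinely needed for the trace identities \eqref{eq:linear_bc_frozen}.
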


\begin{remark}[Trace and renewal framework]
A standard route to Lemma~\ref{lem:linear_frozen_wellposed} is to rewrite each equation as a transport equation with source term and inflow boundary condition, solve along characteristics, and use boundedness of coefficients to obtain $L^1$ estimates. The renewal traces at $x=0$ are treated in the usual sense for inflow boundaries. Any of the standard references for McKendrick--von~Foerster semigroups with inflow/renewal boundaries can be used for a full proof.
\end{remark}

\subsection{Nonlinear problem as a fixed point in the totals}
\label{app:fixed_point}

We return to the nonlinear system \eqref{eq:pde_app}--\eqref{eq:bc_app} and write it as a nonlinear abstract Cauchy problem on $X$:
\begin{equation}\label{eq:abstract_nonlinear_app}
\frac{d}{dt}U(t)=\mathcal{A}_{\mathcal{M}(U(t))}\,U(t),\qquad U(0)=U_0,
\end{equation}
where $U(t)=(S(t,\cdot),T(t,\cdot))$ and
\[
\mathcal{M}(U)\coloneqq (P,W)=\left(\int_0^\infty S\,dx,\ \int_0^\infty T\,dx\right).
\]
A standard construction is to prescribe a continuous path $(\widehat P,\widehat W)\in C([0,T];\mathbb{R}_+^2)$ and solve the corresponding \emph{time-dependent} linear problem where the coefficients are frozen along this path:
\[
\lambda_P^{\mathrm{tot}}(\widehat W(t)),\quad \lambda_R^{\mathrm{tot}}(\widehat P(t)),\quad
p_i^{\mathrm{tot}}(\widehat W(t)),\qquad t\in[0,T].
\]
Let $U_{\widehat P,\widehat W}\in C([0,T];X)$ denote the resulting mild solution and define the induced totals
\[
P_{\widehat P,\widehat W}(t)=\int_0^\infty S_{\widehat P,\widehat W}(t,x)\,dx,\qquad
W_{\widehat P,\widehat W}(t)=\int_0^\infty T_{\widehat P,\widehat W}(t,x)\,dx.
\]
This defines a mapping
\[
\Phi:(\widehat P,\widehat W)\mapsto (P_{\widehat P,\widehat W},W_{\widehat P,\widehat W})
\quad\text{on}\quad C([0,T];\mathbb{R}_+^2).
\]

\begin{proposition}[Local well-posedness by contraction]\label{prop:local_wellposed_pde}
Under assumptions \textnormal{(B1)}--\textnormal{(B3)}, for every nonnegative initial datum $U_0\in X$ there exists $T>0$ and a unique mild solution $U\in C([0,T];X)$ to \eqref{eq:pde_app}--\eqref{eq:bc_app}. The solution is positivity preserving.
\end{proposition}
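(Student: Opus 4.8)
The plan is a standard Banach fixed-point argument for the abstract nonlinear Cauchy problem \eqref{eq:abstract_nonlinear_app}, using the frozen-coefficient linear theory of Lemma~\ref{lem:linear_frozen_wellposed} as the building block, promoted to time-dependent coefficients along a prescribed path. First I would fix a horizon $T>0$ and a radius $R>0$ (both chosen below) and work on the complete metric space $\mathcal{X}_{T,R}\coloneqq\{(\widehat P,\widehat W)\in C([0,T];\mathbb{R}_+^2):\ \|(\widehat P,\widehat W)\|_{\infty}\le R\}$ equipped with the sup metric. Given $(\widehat P,\widehat W)\in\mathcal{X}_{T,R}$, the coefficients $\lambda_P^{\mathrm{tot}}(\widehat W(t))$, $\lambda_R^{\mathrm{tot}}(\widehat P(t))$, $p_i^{\mathrm{tot}}(\widehat W(t))$ are continuous and bounded in $t$ by \textnormal{(B3)}, so the corresponding time-dependent version of \eqref{eq:linear_frozen}--\eqref{eq:linear_bc_frozen} can be solved along characteristics with prescribed inflow at $x=0$; this yields a positive evolution family on $X$, a mild solution $U_{\widehat P,\widehat W}\in C([0,T];X)$, the induced totals, and hence the map $\Phi:(\widehat P,\widehat W)\mapsto(P_{\widehat P,\widehat W},W_{\widehat P,\widehat W})$.

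Second, I would establish the invariance $\Phi(\mathcal{X}_{T,R})\subset\mathcal{X}_{T,R}$. Integrating \eqref{eq:pde_app} in $x$, using the renewal boundary conditions \eqref{eq:bc_app} and the vanishing-flux condition \eqref{eq:flux_integrability_app}, gives the totals identity (namely \eqref{eq:totals_ode_backbone} with the path-frozen coefficients), from which $\|(P_{\widehat P,\widehat W}(t),W_{\widehat P,\widehat W}(t))\|_1\le \|U_0\|_X\,e^{Ct}$ with $C$ depending only on $\sup\lambda_P^{\mathrm{tot}}$, $\sup\lambda_R^{\mathrm{tot}}$, $\delta_0$. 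Choosing $R\coloneqq 2\|U_0\|_X$ and $T$ small enough that $e^{CT}\le 2$ yields invariance. Nonnegativity of the iterates is inherited from the positivity of the evolution family (Lemma~\ref{lem:linear_frozen_wellposed}) together with the nonnegativity of the renewal influxes, which holds because $1+p_1^{\mathrm{tot}}-p_2^{\mathrm{tot}}\ge 0$, $1-p_1^{\mathrm{tot}}+p_2^{\mathrm{tot}}\ge 0$, $\lambda_P^{\mathrm{tot}}\ge 0$ and $\widehat P\ge 0$.

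Third, the contraction estimate. Given paths $(\widehat P_1,\widehat W_1),(\widehat P_2,\widehat W_2)\in\mathcal{X}_{T,R}$ with mild solutions $U_1,U_2$, the difference $U_1-U_2$ solves a linear transport system whose source and inflow trace are of the form $c(t)\,(U_1-U_2)+r(t)\,U_2$, where $c$ is bounded and $r(t)$ collects the differences of the feedback coefficients evaluated along the two paths. On $\mathcal{X}_{T,R}$ the feedback maps are Lipschitz (they are locally Lipschitz by \textnormal{(B3)} and their arguments range over $[0,R]$), and $\|U_2(t)\|_X$ is bounded by the previous step, so a Duhamel/variation-of-constants representation against the evolution family, followed by Gr\"onwall's inequality, gives $\|\Phi(\widehat P_1,\widehat W_1)-\Phi(\widehat P_2,\widehat W_2)\|_{\infty}\le C_R\,T\,\|(\widehat P_1,\widehat W_1)-(\widehat P_2,\widehat W_2)\|_{\infty}$. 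Shrinking $T$ so that $C_R T\le \tfrac12$ makes $\Phi$ a contraction; its unique fixed point is the desired mild solution on $[0,T]$, uniqueness in $C([0,T];X)$ follows since any mild solution induces a fixed point of $\Phi$, and positivity passes to the limit of the nonnegative iterates.

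I expect the main obstacle to be the rigorous treatment of the renewal (inflow) boundary term. One must (i) give meaning to the traces $v_S(0)S(t,0)$ and $v_T(0)T(t,0)$ for mild solutions of the transport equations — which is exactly where the transport trace theory under \textnormal{(B2)} enters, and where the tacit nondegeneracy $v_S(0),v_T(0)>0$ is needed so that the renewal conditions actually prescribe $S(t,0)$ and $T(t,0)$ — and (ii) control how these inflow data depend on the path of totals, so that the Duhamel representation yields the Lipschitz-in-$T$ bound of the previous paragraph rather than mere continuity. Once the boundary inflow is incorporated as a source concentrated along characteristics emanating from $x=0$ (equivalently, via the Duhamel formula for the inhomogeneous inflow problem), the remaining estimates are the routine Gr\"onwall and contraction bookkeeping sketched above; continuation to a maximal interval $[0,T_{\max})$ then follows by reiterating the local construction using the a priori exponential bound on $\|U(t)\|_X$.
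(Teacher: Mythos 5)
Your proposal follows essentially the same route as the paper's own proof: a Banach fixed-point argument on the space of total-mass paths $(\widehat P,\widehat W)$, solving the path-frozen linear transport-with-renewal problem by characteristics, obtaining a contraction constant proportional to $T$ through the local Lipschitz continuity of the feedback maps on $[0,R]$, and inheriting positivity from the positive linear evolution together with nonnegativity of the renewal influxes. Your added details --- the explicit invariance of the ball via the totals identity with $R=2\|U_0\|_X$, and the tacit nondegeneracy $v_S(0),v_T(0)>0$ needed for the renewal conditions to actually prescribe the inflow traces --- are sound refinements of the same argument rather than a different approach.
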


\begin{proof}[Proof sketch]
Fix $R>0$ such that $\|U_0\|_X\le R$ and consider the closed set
\[
\mathcal{B}_{T,R}\coloneqq \Big\{(\widehat P,\widehat W)\in C([0,T];\mathbb{R}_+^2):\ 
\widehat P(t),\widehat W(t)\ge 0,\ \sup_{t\le T}(\widehat P(t)+\widehat W(t))\le R\Big\}.
\]
For each $(\widehat P,\widehat W)\in\mathcal{B}_{T,R}$, the time-dependent linear transport-with-renewal problem admits a unique mild solution by the characteristic representation (using boundedness of coefficients on $[0,R]$). Continuous dependence of the solution on the coefficients and boundary influxes yields Lipschitz continuity of $\Phi$ on $\mathcal{B}_{T,R}$, with Lipschitz constant proportional to $T$ times the local Lipschitz constants of the feedback maps. Choosing $T$ small enough makes $\Phi$ a contraction on $\mathcal{B}_{T,R}$, and Banach's fixed-point theorem gives a unique fixed point, hence a unique mild solution of the nonlinear problem. Positivity follows because the linear problem preserves positivity and the renewal influxes are nonnegative for nonnegative totals.
\end{proof}

\subsection{Global continuation and a priori bounds}
\label{app:global_pde}

Let $T_{\max}$ denote the maximal existence time of the mild solution. Under bounded feedback maps, one can continue the solution globally in time by iterating the local fixed-point argument, provided a uniform $L^1$ bound on finite time horizons is available. Such a bound is obtained from the total-mass balance identity for the totals $P(t)$ and $W(t)$.

\begin{proposition}[Global existence]\label{prop:global_pde}
Under assumptions \textnormal{(B1)}--\textnormal{(B3)} and boundedness of the feedback maps, the mild solution constructed in Proposition~\ref{prop:local_wellposed_pde} extends globally in time, i.e.\ $T_{\max}=\infty$.
\end{proposition}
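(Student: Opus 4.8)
The plan is to combine the local existence result of Proposition~\ref{prop:local_wellposed_pde} with an \emph{a priori} $L^1$ bound that grows at most exponentially on bounded time horizons, and then invoke the standard blow-up alternative for maximal mild solutions.

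First I would pass from the local solution to a unique maximal mild solution on an interval $[0,T_{\max})$ by pasting local solutions along uniqueness; this solution is positivity preserving throughout, since at each continuation step positivity is inherited from Proposition~\ref{prop:local_wellposed_pde} (the frozen linear semigroups $S_{P,W}(t)$ are positive and the renewal influxes in \eqref{eq:bc_app} are nonnegative for nonnegative totals). The standard blow-up alternative then reads: if $T_{\max}<\infty$, then $\limsup_{t\uparrow T_{\max}}\|U(t)\|_X=+\infty$. By positivity, $\|U(t)\|_X=P(t)+W(t)$, so it suffices to bound $P(t)+W(t)$ on every interval $[0,T]$ with $T<T_{\max}$.

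Next I would derive the mass balance. Integrating the $S$- and $T$-equations of \eqref{eq:pde_app} over $x\in(0,\infty)$ and using \eqref{eq:bc_app} to evaluate the inflow traces at $x=0$, the $\lambda_R^{\mathrm{tot}}$ cross-terms cancel between the two compartments and the two renewal fluxes sum to $2\lambda_P^{\mathrm{tot}}(W)P$, giving
\[
\frac{d}{dt}\big(P(t)+W(t)\big)
= \lambda_P^{\mathrm{tot}}(W(t))\,P(t)-\delta_0\,W(t)-\Big(\lim_{x\to\infty}v_S(x)S(t,x)+\lim_{x\to\infty}v_T(x)T(t,x)\Big),
\]
which reproduces \eqref{eq:total_balance_backbone} when the vanishing-flux condition \eqref{eq:flux_integrability_app} holds, and otherwise yields the inequality $\tfrac{d}{dt}(P+W)\le \lambda_P^{\mathrm{tot}}(W)\,P\le \lambda_{P,\max}(P+W)$, since $v_S,v_T\ge0$, $S,T\ge0$, $\delta_0>0$, and $\lambda_P^{\mathrm{tot}}\le\lambda_{P,\max}<\infty$ by boundedness of the feedback. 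Gr\"onwall's inequality then gives
\[
P(t)+W(t)\ \le\ \big(P(0)+W(0)\big)\,e^{\lambda_{P,\max}t},\qquad 0\le t<T_{\max},
\]
which is finite on every bounded interval and hence incompatible with $\limsup_{t\uparrow T_{\max}}\|U(t)\|_X=\infty$ when $T_{\max}<\infty$. Therefore $T_{\max}=\infty$.

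I expect the main technical obstacle to be justifying the integrated mass-balance identity at the level of \emph{mild} rather than classical solutions: one must show that the spatial integration, the identification of the inflow trace at $x=0$ via the renewal boundary condition, and the favourable sign of the nonnegative outflow flux at $x=\infty$ are all legitimate for solutions obtained from the characteristic/semigroup construction. This can be handled by testing the weak formulation against a smooth cutoff $\varphi_R(x)$ with $\varphi_R\equiv1$ on $[0,R]$ and $\mathrm{supp}\,\varphi_R\subset[0,2R]$, using the $W^{1,\infty}$ regularity of $v_S,v_T$ and the $L^1\cap L^\infty$ propagation from assumption~(B1) to pass to the limit $R\to\infty$. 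Alternatively, one can bypass differentiation entirely by iterating the local step: the contraction time in Proposition~\ref{prop:local_wellposed_pde} depends on the current mass only through the local Lipschitz constants of the bounded feedback maps on $[0,R]$, and the exponential a priori bound keeps $R$, and hence the step size, controlled on any finite horizon, permitting continuation up to arbitrary $T$.
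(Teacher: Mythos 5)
Your proposal is correct and follows essentially the same route as the paper: an a priori bound on $P(t)+W(t)$ from the summed mass-balance identity, Gr\"onwall's inequality, and iteration of the local contraction argument to all times. Your additional observations (the cutoff justification of the mass balance for mild solutions and the remark that the nonnegative outflow flux at $x=\infty$ makes the differential \emph{inequality} hold even without invoking the vanishing-flux condition) are sound refinements of the same argument.
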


\begin{proof}[Proof sketch]
Assuming the flux condition \eqref{eq:flux_integrability_app}, the totals $(P(t),W(t))$ satisfy the exact mass-balance ODE system \eqref{eq:totals_ode_backbone} (proved below). Summing the two equations yields
\[
\frac{d}{dt}\big(P(t)+W(t)\big)=\lambda_P^{\mathrm{tot}}(W(t))\,P(t)-\delta_0 W(t)
\le \|\lambda_P^{\mathrm{tot}}\|_\infty \big(P(t)+W(t)\big),
\]
so Gr\"onwall's inequality bounds $P(t)+W(t)$ on any finite time interval. This prevents blow-up in the coefficients and allows the local contraction mapping argument to be iterated to all times.
\end{proof}

\subsection{Mass-balance identities and the exact reduction to totals}
\label{app:mass_balance}

We justify the reduction \eqref{eq:totals_ode_backbone} from the PDE layer. Assume \eqref{eq:flux_integrability_app} holds for a.e.\ $t>0$, so that boundary fluxes vanish at infinity along a subsequence. Integrating the first PDE in \eqref{eq:pde_app} over $(0,\infty)$ yields
\begin{align*}
\frac{d}{dt}\int_0^\infty S(t,x)\,dx
&=\int_0^\infty \partial_t S(t,x)\,dx\\
&=\int_0^\infty\Big(-\partial_x(v_S S)-\lambda_P^{\mathrm{tot}}(W(t))S+\lambda_R^{\mathrm{tot}}(P(t))T\Big)\,dx\\
&= v_S(0)S(t,0)-\lambda_P^{\mathrm{tot}}(W(t))P(t)+\lambda_R^{\mathrm{tot}}(P(t))W(t),
\end{align*}
where the boundary term at infinity vanishes. Similarly, integrating the second PDE gives
\[
\frac{d}{dt}W(t)=v_T(0)T(t,0)-(\delta_0+\lambda_R^{\mathrm{tot}}(P(t)))W(t).
\]
Substituting the renewal fluxes \eqref{eq:bc_app} yields the closed totals system \eqref{eq:totals_ode_backbone}, and summing its two equations gives the total balance identity \eqref{eq:total_balance_backbone}. This establishes the mass-balance reduction claimed in Theorem~\ref{thm:pde_wellposed_backbone}.

\section{Appendix C. Additional estimates}
\label{app:additional_estimates}

This appendix collects auxiliary computations that are used in the stochastic analysis but are not essential to the main narrative in the body of the paper. Specifically, we (i) record the expanded It\^o computations and comparison bounds underpinning the polynomial moment estimates for the CLE, and (ii) provide a self-contained version of the key martingale/quadratic-variation step used in the proof of the no-dedifferentiation pathology.

\subsection{Expanded It\^o computation for polynomial moments}
\label{app:moments_details}

Let $X(t)$ solve the CLE \eqref{eq:cle_main} on $U=(0,\infty)^2$ up to the exit time $\tau$ and recall the generator
\[
(\mathcal{L}f)(x)=b(x)\cdot \nabla f(x)+\frac{1}{2\Omega}\mathrm{Tr}\big(A(x)\nabla^2 f(x)\big),
\qquad A(x)=G(x)G(x)^\top.
\]
Fix $q\ge 4$ and set $V_q(x)=\|x\|^q$. The derivatives are
\[
\nabla V_q(x)=q\|x\|^{q-2}x,\qquad
\nabla^2 V_q(x)=q\|x\|^{q-2}I+q(q-2)\|x\|^{q-4}xx^\top.
\]
Therefore,
\begin{align}
(\mathcal{L}V_q)(x)
&=q\|x\|^{q-2}\,x\cdot b(x)
+\frac{q}{2\Omega}\|x\|^{q-2}\mathrm{Tr}\,A(x)
+\frac{q(q-2)}{2\Omega}\|x\|^{q-4}\,x^\top A(x)x.
\label{eq:LVq_expand_appC}
\end{align}
Since $A(x)$ is positive semidefinite,
\begin{equation}\label{eq:psd_tr_bound_appC}
x^\top A(x)x\le \|x\|^2\,\mathrm{Tr}\,A(x),
\end{equation}
hence \eqref{eq:LVq_expand_appC} yields
\begin{equation}\label{eq:LVq_trA_appC}
(\mathcal{L}V_q)(x)
\le q\|x\|^{q-2}\,x\cdot b(x)
+\frac{q(q-1)}{2\Omega}\|x\|^{q-2}\mathrm{Tr}\,A(x).
\end{equation}
Under Assumption~\ref{ass:linear_growth} (linear-growth bounds),
\[
x\cdot b(x)\le K_b(1+\|x\|^2),\qquad \mathrm{Tr}\,A(x)\le K_A(1+\|x\|^2),
\]
so
\begin{equation}\label{eq:LVq_growth_appC}
(\mathcal{L}V_q)(x)
\le \left(qK_b+\frac{q(q-1)}{2\Omega}K_A\right)\|x\|^{q-2}(1+\|x\|^2).
\end{equation}
Using the elementary comparison
\begin{equation}\label{eq:poly_compare_appC}
\|x\|^{q-2}\le 1+\|x\|^q,
\end{equation}
and $\|x\|^{q}\le 1+\|x\|^q$, we obtain a constant $C_q>0$ such that
\begin{equation}\label{eq:LVq_final_appC}
(\mathcal{L}V_q)(x)\le C_q\,(1+\|x\|^q),\qquad x\in U.
\end{equation}
This is the estimate used in the Gr\"onwall argument for Theorem~\ref{thm:moment_bounds}.

\subsection{Uniform integrability for passage \texorpdfstring{$m\to\infty$}{} in stopped moments}
\label{app:UI_details}

Let $\tau_m$ denote the exit time from $K_m=[m^{-1},m]^2$ and $\tau=\lim_{m\to\infty}\tau_m$. The body of the paper uses the following standard implication: if for some $r>q$ and fixed $t$,
\[
\sup_{m\in\mathbb{N}}\mathbb{E}\big[\|X(t\wedge\tau_m)\|^{r}\big]<\infty,
\]
then $\{\|X(t\wedge\tau_m)\|^{q}\}_{m\in\mathbb{N}}$ is uniformly integrable. Consequently, if $\|X(t\wedge\tau_m)\|^q\to \|X(t\wedge\tau)\|^q$ almost surely (which holds since $\tau_m\uparrow\tau$ and paths are continuous up to $\tau$), then Vitali's theorem yields
\[
\mathbb{E}\big[\|X(t\wedge\tau_m)\|^{q}\big]\xrightarrow[m\to\infty]{}\mathbb{E}\big[\|X(t\wedge\tau)\|^{q}\big].
\]
In the present paper, the required $r$-moment bound is provided by Theorem~\ref{thm:moment_bounds} applied with exponent $r$.

\subsection{A martingale divergence criterion used in the no-dedifferentiation pathology}
\label{app:DDS_details}

We record the key implication used in the proof of Proposition~\ref{prop:no_dedifferentiation_main}: if a continuous local martingale converges almost surely, then its quadratic variation must be finite almost surely. The contrapositive is the form used in Section~\ref{sec:noR}.

\begin{lemma}[Divergent quadratic variation precludes almost sure convergence]\label{lem:qv_no_convergence}
Let $M$ be a continuous local martingale with $M(0)=0$. If $\langle M\rangle_\infty=\infty$ on an event $E$, then $M(t)$ does not converge as $t\to\infty$ on $E$.
\end{lemma}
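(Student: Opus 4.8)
The plan is to prove the contrapositive at the level of events, i.e.\ to show that $\mathbb{P}\big(E\cap\{M_t\text{ converges as }t\to\infty\}\big)=0$. Two routes are available. The quickest uses the Dambis--Dubins--Schwarz representation: after a possible enlargement of the probability space, $M_t=\beta_{\langle M\rangle_t}$ for a standard Brownian motion $\beta$, and on $E$ one has $\langle M\rangle_t\uparrow\infty$, so $M$ inherits the oscillation of $\beta$ at infinity ($\limsup_{s\to\infty}\beta_s=+\infty$ and $\liminf_{s\to\infty}\beta_s=-\infty$ a.s.), precluding convergence on $E$. I would instead give the self-contained localization argument below, which avoids the time-change enlargement and stays within the standard continuous-martingale toolkit already invoked in \Cref{sec:stochastic_analysis}.

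First I would reduce to a bounded running maximum. A continuous path that converges to a finite limit is bounded, so $\{M_t\text{ converges}\}\subseteq\{M^\ast<\infty\}$ where $M^\ast\coloneqq\sup_{t\ge0}|M_t|$, and $\{M^\ast<\infty\}=\bigcup_{K\in\mathbb{N}}\{M^\ast<K\}$. Hence it suffices to prove, for each fixed $K\in\mathbb{N}$, that $\mathbb{P}\big(\{M^\ast<K\}\cap E\big)=0$. Next, fix $K$ and set $\sigma_K\coloneqq\inf\{t\ge0:|M_t|\ge K\}$. By path continuity at $\sigma_K$ one has $|M^{\sigma_K}_t|\le K$ for all $t$, so $M^{\sigma_K}$ is a bounded continuous local martingale, hence a uniformly integrable true martingale converging a.s.\ and in $L^2$ to a limit $M^{\sigma_K}_\infty$ with $|M^{\sigma_K}_\infty|\le K$. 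Using $M(0)=0$, the $L^2$ isometry for bounded continuous martingales gives $\mathbb{E}\big[\langle M^{\sigma_K}\rangle_\infty\big]=\mathbb{E}\big[(M^{\sigma_K}_\infty)^2\big]\le K^2<\infty$, and since $\langle M^{\sigma_K}\rangle_\infty=\langle M\rangle_{\sigma_K}$ this yields $\langle M\rangle_{\sigma_K}<\infty$ almost surely.

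To conclude, observe that on $\{M^\ast<K\}$ the stopping time $\sigma_K$ is infinite, so $\langle M\rangle_\infty=\langle M\rangle_{\sigma_K}<\infty$ a.s.\ on that set. Combined with the hypothesis $E\subseteq\{\langle M\rangle_\infty=\infty\}$, this forces $\mathbb{P}\big(\{M^\ast<K\}\cap E\big)=0$, and taking the union over $K\in\mathbb{N}$ gives $\mathbb{P}\big(\{M^\ast<\infty\}\cap E\big)=0$, hence $\mathbb{P}\big(\{M_t\text{ converges}\}\cap E\big)=0$, which is precisely the assertion.

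I expect no genuine obstacle here: the only care points are bookkeeping, namely (i) that $M^{\sigma_K}$ is honestly bounded by $K$ (via continuity, $|M_{\sigma_K}|=K$ on $\{\sigma_K<\infty\}$ while $|M_t|<K$ for $t<\sigma_K$), and (ii) the harmless measurability of $M^\ast$ and of the event $\{M_t\text{ converges}\}$ for a continuous adapted process. The substantive input---that a bounded continuous local martingale with $M(0)=0$ is a uniformly integrable martingale satisfying $\mathbb{E}[\langle M\rangle_\infty]=\mathbb{E}[M_\infty^2]$, and that a continuous local martingale with divergent quadratic variation oscillates---is classical and may be cited rather than reproved \citep{karatzas_brownian_1998}.
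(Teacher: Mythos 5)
Your argument is correct, but it follows a genuinely different route from the one in the paper. The paper proves the lemma via the Dambis--Dubins--Schwarz representation (the ``quickest'' route you mention and set aside): writing $M(t)=\widetilde B(\langle M\rangle_t)$ on an enlarged space and invoking the a.s.\ oscillation of Brownian motion, which in fact yields the stronger conclusion that $\limsup_{t\to\infty}M(t)=+\infty$ and $\liminf_{t\to\infty}M(t)=-\infty$ on $E$. Your localization argument instead proves the contrapositive inclusion $\{\sup_t|M_t|<\infty\}\subseteq\{\langle M\rangle_\infty<\infty\}$ (a.s.): stopping at $\sigma_K=\inf\{t:|M_t|\ge K\}$ makes $M^{\sigma_K}$ a bounded, hence uniformly integrable $L^2$ martingale with $\mathbb{E}[\langle M\rangle_{\sigma_K}]=\mathbb{E}[(M^{\sigma_K}_\infty)^2]\le K^2$, so $\langle M\rangle_{\sigma_K}<\infty$ a.s.; on $\{M^\ast<K\}$ one has $\sigma_K=\infty$, which is incompatible with $E\subseteq\{\langle M\rangle_\infty=\infty\}$ except on a null set, and the union over $K$ finishes. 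All steps check out (the boundedness $|M^{\sigma_K}|\le K$, the identity $\langle M^{\sigma_K}\rangle_\infty=\langle M\rangle_{\sigma_K}$, and the $L^2$ isometry are standard), and your route buys a more elementary, self-contained proof that avoids the time-change and space enlargement, at the cost of delivering only non-convergence rather than the full oscillation statement --- which is all Proposition~\ref{prop:no_dedifferentiation_main} actually needs.
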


\begin{proof}
By the Dambis--Dubins--Schwarz theorem, there exists a standard Brownian motion $\widetilde B$ (possibly on an extension of the filtered probability space) such that
\[
M(t)=\widetilde B(\langle M\rangle_t),\qquad t\ge 0,
\]
up to indistinguishability. On the event $E$ we have $\langle M\rangle_t\to\infty$ as $t\to\infty$, hence $M(t)=\widetilde B(u)$ along a time change with $u\to\infty$. Since Brownian motion does not converge as $u\to\infty$ (e.g.\ by the law of the iterated logarithm, or simply $\limsup_{u\to\infty}\widetilde B(u)=+\infty$ and $\liminf_{u\to\infty}\widetilde B(u)=-\infty$ almost surely), $M(t)$ cannot converge on $E$.
\end{proof}

\end{appendices}

\bibliography{reference}% common bib file
%% if required, the content of .bbl file can be included here once bbl is generated
%%\input sn-article.bbl

\end{document}